\DeclareRobustCommand\citepos
\def\NAT@nmfmt##1{{\NAT@up##1's}}%
\let\NAT@ctype\z@\NAT@partrue
\renewcommand*{\eqref}[1]{\hyperref[{#1}]{\textup{\tagform@{\ref*{#1}}}}}
\definecolor{pink}{RGB}{219, 48, 122} 
\def \expandafter \normalsize \expandafter{\normalsize \setlength \abovedisplayskip{4pt plus 1pt minus 3pt}}
\def \expandafter \normalsize \expandafter{\normalsize \setlength \abovedisplayshortskip{0pt plus 2pt}}
\def \expandafter \normalsize \expandafter{\normalsize \setlength \belowdisplayskip{4pt plus 1pt minus 3pt}}
\def \expandafter \normalsize \expandafter{\normalsize \setlength \belowdisplayshortskip{4pt plus 1pt minus 3pt}}
\numberwithin{equation}{section}
\newcommand{\bt}{\textcolor{black}} 
\newtheorem{theorem}{Theorem}[section]
\newtheorem{lemma}{Lemma}[section]
\newtheorem{proposition}{Proposition}[section]
\newtheorem{remark}{Remark}[section]
\newtheorem{assumpK}{Assumption}
\newtheorem{assumpWW}{Assumption}
\newtheorem{assumpMM}{Assumption}
\newtheorem*{assumption*}{Assumption}
\newtheorem*{proof*}{Proof of}
\newenvironment{claim}[1]{\par\noindent\underline{Claim 1:}\space#1}{}
\newenvironment{claim2}[1]{\par\noindent\underline{Claim 2:}\space#1}{}
\newenvironment{claim3}[1]{\par\noindent\underline{Claim 3:}\space#1}{}
\DeclareMathOperator{\ran}{ran}
\DeclareMathOperator{\LRV}{LRV}
\DeclareMathOperator{\spn}{span}
\DeclareMathOperator{\tr}{tr}
\DeclareMathOperator{\rank}{rank}
\DeclareMathOperator{\Tr}{tr}
\DeclareMathOperator{\ii}{i}
\DeclareMathOperator{\sgn}{sgn}
\DeclareMathOperator{\fdd}{fdd}
\numberwithin{equation}{section}
\begin{document}
	\date{}
	\title{\vspace{-1.5em}\Large Functional principal component analysis for cointegrated functional time series}
	\author{	\large	Won-Ki Seo\thanks{I am grateful to anonymous referees, the coeditor, and the editor of the journal for helpful comments. I also thank Brendan Beare and Morten Nielsen for their comments.  This note supersedes the manuscript ``Fully modified functional principal component analysis for cointegrated functional time series'', posted at \url{https://arxiv.org/abs/2011.12781}.  Email : won-ki.seo@sydney.edu.au 
		}\\  University of Sydney
	}
	\maketitle \vspace{-2.1em}
	\begin{abstract}
		\vspace{-.3em}	Functional principal component analysis (FPCA) has played an important role in the development of functional time series analysis. This note investigates how FPCA can be used to analyze cointegrated functional time series and proposes a modification of FPCA as a novel statistical tool. Our modified FPCA not only provides an asymptotically more efficient estimator of the cointegrating vectors, but also leads to novel FPCA-based tests for examining  essential properties of cointegrated functional time series. \\[2pt]
		\noindent 	 \textbf{MSC 2020}: 62M10. \\ 
		\medskip	 \noindent \textbf{Keywords}: Functional principal component analysis; functional time series; unit roots; cointegration.		 \vspace{-.1em}	
	\end{abstract}

	
	\section{Introduction}
	\vspace{-0.15em}
	
	Functional principal component analysis (FPCA) has been a central tool for functional time series (FTS) in various contexts encompassing  FTS regression \citep{Park2012397,seong2022}, prediction \citep{hyndman2007robust,aue2015prediction} and long memory FTS \citep{li2020long}  to name a few.  In the recent work by \cite{Chang2016152},  FPCA is applied to 
	nonstationary cointegrated FTS. Given that many economic time series are nonstationary but allow a long run stable relationship, their analysis paves the way for potential  applications and extensions (see e.g., \citealp{chang2020evaluating}, \citealp{LRS}, \citealp{seoshang2022}). 
	
	This note further investigates how FPCA can be used in statistical analysis of cointegrated FTS $\{X_t\}_{t\geq 1}$ taking values in a Hilbert space $\mathcal H$. Cointegrated FTS considered in this paper 
	can be decomposed by orthogonal projections $P^N$ and $P^S = I-P^N$ (where $I$ denotes the identity map) into a finite dimensional unit root process $\{P^NX_t\}_{t \geq 1}$ and a potentially infinite dimensional stationary process $\{P^SX_t\}_{t \geq 1}$. When such a time series is given, it is important to estimate $P^N$ or $P^S$ since either of them characterizes the cointegrating behavior. 
	\cite{Chang2016152} earlier studied how the eigenelements of the sample covariance operator  can be used to construct a consistent estimator of $P^N$. 
	In this note, we first add some novel asymptotic results, including the convergence rate and the asymptotic limit of the existing estimator of $P^N$ (or $P^S$) obtained from FPCA. Based on the results, we propose a modified FPCA methodology that guarantees (i) a more asymptotically efficient estimator and (ii) a more convenient asymptotic limit which leads to novel statistical tests for examining hypotheses about cointegration.  
	\bt{Our modified FPCA is motivated from semiparametric modifications to obtain asymptotically efficient estimators of the cointegrating vectors in a multivariate cointegrated system as in \cite{phillips1990statistical} and \cite{harris1997principal}, where the latter article is more related to the present paper (this will be discussed in detail in Section \ref{sec:harris} of the Supplementary Material to this article}). 
	
	The rest of this paper is organized as follows. Section \ref{sec:prelim} provides main assumptions. In Section \ref{sec:fpca}, we discuss on the ordinary FPCA and our proposed modification of it as statistical tools to estimate $P^N$ or $P^S$, and provide the relevant asymptotic theory. Based on our modified FPCA, Section  \ref{sec:test} gives statistical tests to examine various hypotheses on cointegration. In Section \ref{sec_conclusion}, we conclude with some cautions about using our methodology in practice and present some idea how the methodology can be further extended. The Supplementary Material to this article contains six sections (Sections \ref{sec:harris}-\ref{appproof}) on some extensions of the theoretical results to be given, Monte Carlo simulations, empirical applications and mathematical proofs.  
	
	We review notation for the subsequent sections. Let $\mathcal H$ denote a real separable Hilbert space with inner product $\langle \cdot, \cdot \rangle$ and norm $\|\cdot\| = \langle \cdot,\cdot \rangle^{1/2}$. The space of bounded linear operator is denoted by  $\mathcal{L}_{\mathcal H}$, which is assumed to be equipped with the operator norm $\|\cdot\|_{\mathcal L_{\mathcal H}}$. For any $A\in \mathcal L_{\mathcal H}$, its adjoint ($A^\ast$), range ($\ran A$), kernel ($\ker A$), rank ($\rank A$), and Moore-Penrose inverse ($A^\dag$) are introduced in Section \ref{appintro01}. Moreover, various properties of $A \in \mathcal L_{\mathcal H}$ such as compactness, positive (semi)definiteness and self-adjointness are also defined in detail. As discussed in Section \ref{appintro01},  if $A$ is self-adjoint, positive semidefinite, and compact, we may define its $m$-regularized inverse  $A{\mid_{m}^\dagger}= \sum_{j=1}^m a_j^{-1} u_j \otimes u_j$ where $\{a_j\}_{j=1}^m$ are positive eigenvalues of $A$ and $\{u_j\}_{j=1}^m$ are the corresponding eigenvectors; $A{\mid_{m}^\dagger}$ is understood as the partial inverse of $A$ on the restricted domain $\spn(\{u_j\}_{j=1}^m)$ {(for convenience, we let $A{\mid_{m}^\dagger} = 0$ if $A=0$)}.  
	Random elements taking values in $\mathcal H$ and $\mathcal L_{\mathcal H}$ 
	are introduced in Section \ref{appintro}. We let $L^2_{\mathcal H}$ denote the space of $\mathcal H$-valued random variables $X$ satisfying $\mathbb{E}X=0$ and $\mathbb{E}\|X\|^2 < \infty$. For any $X \in L^2_{\mathcal H}$, its covariance operator is defined by $C_X = \mathbb{E} \left[  X \otimes X  \right]$. If $A$ is a random  linear operator taking values in $\mathcal L_{\mathcal H}$ and $x_1,\ldots,x_n,y_1,\ldots,y_n \in \mathcal H$ for $n \in \mathbb{N}$, we let the distribution of $(\langle Ax_1,y_1 \rangle, \ldots, \langle Ax_n,y_n\rangle)'$ be called the finite dimensional distribution (fdd)  of $A$ with respect to the choice of vectors; if $n=m^2$ for $m \in \mathbb{N}$, $x_j=e_j$ and $y_k=e_k$ for all $j,k\leq m$, and $\{e_j\}_{j=1}^m$ is an orthonormal basis of a subspace $\mathcal H_{m}$, then the associated fdd can be viewed as the distribution of $A$ as a map from $\mathcal H_{m}$ to $\mathcal H_{m}$ 
	If two random bounded linear operators $A$ and $B$ have the same fdd regardless of $n$ and $x_1,\ldots,x_n,y_1,\ldots,y_n \in \mathcal H$, we write $A=_{\fdd}B$. Moreover, if $\{A_j\}_{j\geq 1}$ is a sequence of random elements in $\mathcal L_{\mathcal H}$ and $\|A_j-A\|_{\mathcal L_{\mathcal H}}\to_p 0$ for some $A$ taking its value in $\mathcal L_{\mathcal H}$, we write $A_j \to_{\mathcal L_{\mathcal H}} A$.  
	\section{Cointegrated FTS in Hilbert space} \label{sec:prelim}
	We let $\{\varepsilon_t\}_{t \in \mathbb{Z}}$ be an iid sequence in $L^2_{\mathcal H}$ with a positive definite covariance and let $\{\Phi_j\}_{j \geq 0}$ be a sequence in $\mathcal L_{\mathcal H}$ satisfying $\sum_{j=0}^\infty j \|\Phi_j\|_{\mathcal L_{\mathcal H}} < \infty$. We define $\{\zeta_t\}_{t \in \mathbb{Z}}$ and $\{\eta_t\}_{t \in \mathbb{Z}}$ as follows,
	\begin{equation*}
		\zeta_t = \sum_{j\geq 0}{\Phi}_j \varepsilon_{t-j}, \quad\quad \eta_t = \sum_{j\geq 0} \widetilde{\Phi}_j \varepsilon_{t-j}, \quad\quad t \in \mathbb{Z},
	\end{equation*}
	where  $\widetilde{\Phi}_j = - \sum_{k\geq j+1} \Phi_k$. 
	We consider a sequence $\{X_t\}_{t\geq 0}$ of which first differences $\Delta X_t =X_t - X_{t-1}$ satisfy  the equations $\Delta X_t = \zeta_t$ for $t \geq 1$.  
	{Then by the Phillips-Solo decomposition (and ignoring the initial values $X_0$ and $\eta_0$ that are unimportant in the development of our asymptotic theory), we find that }   
	\begin{equation}
		X_t =   \Phi(1) \varepsilon_t^c + \eta_t, \quad t \geq 1, \label{eqcointeg2}
	\end{equation}
	\bt{where $\varepsilon_t^c = \sum_{s=1}^t \varepsilon_s$ and $\Phi(1) = \sum_{j\geq 0} \Phi_j$.}	
	Note that $\{X_t\}_{t \geq 1}$ is a unit-root-type nonstationary process unless $\Phi(1) = 0$;  however it may have an element $x$ such that $\{\langle X_t,x \rangle\}_{t \geq 1}$ is stationary. Such an element is called a cointegrating vector, and the collection of the cointegrating vectors, denoted by $\mathcal H^S$, is called the cointegrating space. The attractor space, denoted by $\mathcal H^N$, is defined by the orthogonal complement to $\mathcal H^S$.  
	When $X_t$ satisfies \eqref{eqcointeg2}, $\mathcal H^S$ (resp.\ $\mathcal H^N$) is given by $ [\ran \Phi(1)]^\perp$ (resp.\ the closure of  $\ran \Phi(1)$), in this case, $\mathcal H = \mathcal H^N \oplus \mathcal H^S$ and the orthogonal projection $P^N$ onto $\mathcal H^N$ is uniquely defined  (see Proposition 3.3 of \citealp{BSS2017}). Then $P^S=I-P^N$ is the orthogonal projection onto $\mathcal H^S$. Note that $\{P^NX_t\}_{t\geq 1}$ is a unit root process which does not allow any cointegrating vector while $\{P^SX_t\}_{t\geq 1}$ is stationary.  
	
	$\{X_t\}_{t\geq 1}$ in \eqref{eqcointeg2} has zero mean by construction, but which is not essentially required in this paper. A deterministic component such as a nonzero intercept or a linear trend may be allowed in our analysis with a proper modification, which will be discussed in Section \ref{sec:deterministic} of the Supplementary Material.
	
	For our asymptotic analysis, we apply the following additional conditions throughout. 
	\begin{assumpMM}  \label{assumm0} $\mathrm{(i)}$ $\sum_{j=0}^\infty j\|\widetilde{\Phi}_j\|_{\mathcal L_{\mathcal H}} < \infty$ and $\mathrm{(ii)}$ $\varphi = \rank \Phi(1) \in [0, \infty)$.
	\end{assumpMM}
	\noindent Assumption \ref{assumm0}-(i) is employed for mathematical proofs, which may not be restrictive in practice. Assumption \ref{assumm0}-(ii) implies that  $\dim(\mathcal H^N) <\infty$, which is commonly assumed in the recent literature on cointegrated FTS for statistical analysis, see e.g., \cite{Chang2016152} and \cite{SS2019}. 
	\begin{remark} \normalfont
		Finiteness of $\varphi$ seems to be reasonable in many empirical examples; see \citet[Section 5]{Chang2016152} and \citet[Section 5]{SS2019}. Moreover, it is known that functional autoregressive (AR) processes with unit roots (including functional ARMA processes that are considered in e.g.\ \citeauthor{klepsch2017prediction}, \citeyear{klepsch2017prediction}) with compact AR operators always satisfy  Assumption \ref{assumm0}-(ii); see \cite{BS2018}, \cite{Franchi2017b} and \cite{seo_2022}. It is common to assume compactness of AR operators in statistical analysis of such time series, and thus Assumption \ref{assumm0}-(ii)  does not seem to be restrictive in practice. 
	\end{remark}
	The case with $\mathcal H^S =\{0\}$ for each $\varphi$ is uninteresting for our study of cointegrated FTS, and thus such a case is not considered throughout this paper. Moreover, if $\varphi = 0$, $\{X_t\}_{t\geq 1}$ is stationary and thus $\mathcal H^N = \{0\}$; this is also an  uninteresting case except when we examine the null hypothesis of $\varphi =0$ in Section \ref{sec:test}. Thus, our asymptotic theory, to be developed in this section, concerns the case when $\varphi \geq 1$.
	
	Related to a sequence $\{X_t\}_{t \geq 1}$ satisfying Assumption \ref{assumm0}, it will be convenient to introduce additional notation. Note that the identity $I=P^N + P^S$ implies the following operator decomposition: for $A \in \mathcal L_{\mathcal H}$, 
	\begin{equation}
		A = A^{NN} + A^{NS} + A^{SN} + A^{SS}, \quad A^{ij} = P^i A P^j, \quad i \in \{N,S\}, \quad j \in \{N,S\}. \label{eqdecom2}
	\end{equation}
	We also define
	\begin{equation}
		\mathcal E_t^N =  P^N\zeta_t, \quad\quad  \mathcal E_t^S = P^S\eta_t, \quad\quad \mathcal E_t = \mathcal E_t^N + \mathcal E_t^S;
	\end{equation}
	note that each sequence of $\mathcal E_t^N$, $\mathcal E_t^S$ and $\mathcal E_t$ is stationary in $\mathcal H$.
	We then let $\Omega$ (resp.\ $\Gamma$) denote the long run covariance (resp.\ one-sided long run covariance) operator of $\{\mathcal E_t\}_{t \in \mathbb{Z}}$, which are defined by
	\begin{align}
		\Omega =  \Gamma +  \sum_{j\geq 1} \mathbb{E}[\mathcal E_{t-j} \otimes \mathcal E_{t} ] \label{omegat} \quad \text{and}\quad \Gamma = \sum_{j\geq 0}   \mathbb{E}[\mathcal E_{t} \otimes \mathcal E_{t-j}].
	\end{align}	
	$\Omega$ and $\Gamma$ are well defined under Assumption \ref{assumm0}. 
	As in  \eqref{eqdecom2}, we may also decompose those operators as $\Omega= \Omega^{NN}+\Omega^{NS}+ \Omega^{SN}+ \Omega^{SS}$ and $\Gamma = \Gamma^{NN}+\Gamma^{NS}+\Gamma^{SN}+\Gamma^{SS}$. We let  $W = \{W(r)\}_{r \in [0,1]}$ denote Brownian motion in $\mathcal H$ whose covariance operator is given by $\Omega$ (see e.g., 
	\citealp{chen1998}).  We then let  $W^N = P^N W$ and $W^S = P^S W$. Note that $W^N$ and $W^S$ are correlated unless $\Omega^{NS} = \Omega^{SN} = 0$. 
	
	For convenience in our asymptotic analysis, we employ the following assumption: in the assumption below, we let the long-run covariance $\Omega$ of $\mathcal E_t$ be positive definite on $\mathcal H$ and, for $x_1,\ldots,x_k \in \mathcal H$,   $\mathcal E_{k,t} = (\langle \mathcal E_{t}, x_1 \rangle, \ldots, \langle \mathcal E_{t}, x_k \rangle)'$ and $W_k(s) =  (\langle W(s), x_1 \rangle, \ldots, \langle W(s), x_k \rangle)'$.
	\begin{assumpWW} \label{assumW}
		For any $k \geq 1$ and  $x_1,\ldots,x_k \in \mathcal H$, 	$T^{-1}  \sum_{t=1}^T \left(\sum_{s=1}^t \mathcal E_{k,s}\right) \mathcal E_{k,t}'$ converges in distribution to $\int_{0}^1 W_k(s) d W_k(s)' + \sum_{j\geq 0} \mathbb{E}[\mathcal E_{k,t-j}\mathcal E_{k,t}']$.
	\end{assumpWW}
	Appropriate lower-level conditions for the above can be found in \cite{hansen1992convergence}; specifically, 
	Assumption \ref{assumW} is satisfied if (i) $\{T^{-1/2} \sum_{t=1}^{\lfloor Ts \rfloor} \mathcal E_{k,t}\}_{s\in [0,1]}$  converges weakly in the Skorohod topology to $W_k$, (ii) $\sup_{t\geq 1} \|\mathcal E_t\|^a < \infty$, and (iii) $\{\mathcal E_{t}\}_{t\geq 1}$ is  strongly mixing of size $-ab(a-b)$ for some $a>b>2$. Among these conditions, the first is implied by Assumption \ref{assumm0} (see Lemma \ref{lem1} in the Supplementary Material).
	

	\section{FPCA of cointegrated FTS and asymptotic results} \label{sec:fpca}

	\subsection{Ordinary FPCA of cointegrated FTS}	\label{sec:ofpca}
	We now consider estimation of $P^N$ and $P^S$ from observations $\{X_t\}_{t=1}^T$. Throughout this section, we assume that $\varphi=\dim(\mathcal H^N)$ is known. Of course, this is not a realistic assumption in most applications. However, we already have a few available methods to determine $\varphi$ such as those in \cite{Chang2016152},  \cite{SS2019}, and \cite{LRS}. In addition to those, it will be shown in Section \ref{sec:test} that the asymptotic results to be given in this section  lead to a novel FPCA-based testing procedure to determine $\varphi$.  

	The sample covariance operator $\widehat{C}$ of $\{X_t\}_{t=1}^T$  is the random  operator given by 
	$\widehat{C} = T^{-1}\sum_{t=1}^T X_t \otimes X_t$. 
	Broadly speaking, FPCA reduces to solving the following eigenvalue problem associated with $\widehat{C}$, 
	\begin{equation} \label{eigenp}
		\widehat{C} \hat{v}_{j} = \hat{\lambda}_{j} \hat{v}_{j}, \quad j=1,2,\ldots,
	\end{equation}
	where $\hat{\lambda}_{1} \geq \ldots \geq  \hat{\lambda}_{T} \geq 0 = \hat{\lambda}_{T+1}= \hat{\lambda}_{T+2}=\ldots$ and $\{\hat{v}_{j}\}_{j \geq 1}$ constitutes an orthonormal basis of $\mathcal H$. From the estimated eigenvectors $\{\hat{v}_j\}_{j \geq 1}$, we construct our preliminary estimators of $P^N$ and $P^S$ as follows, 
	\begin{equation}
		\widehat{P}^N_\varphi = \sum_{j=1}^\varphi \hat{v}_{j} \otimes \hat{v}_{j}, \quad \quad \widehat{P}^S_\varphi = I - \widehat{P}^N_\varphi.   \label{eqprelim} 
	\end{equation} 
	Note  that $\varphi$ as a subscript in \eqref{eqprelim} denotes the the number of eigenvectors used to construct  $\widehat{P}^N_{\varphi}$; even if keeping such a subscript may increase notational complexity, it will be eventually helpful to avoid potential confusion when we need to consider  $\widehat{P}^N_{\varphi_0}$, the projection onto $\spn\{\hat{v}_j\}_{j=1}^{\varphi_0}$, for a hypothesized value $\varphi_0$ in Section \ref{sec:test}. To investigate the asymptotic properties of the preliminary estimators, we may establish the limiting behavior of either of $\widehat{P}^N_\varphi-P^N$ or $\widehat{P}^S_\varphi-P^S$ (one is simply given by the negative of the other). The following theorem deals with the former case: we hereafter write $\int A(r)$ to denote $\int_0^1 A(r)dr$ for any operator- or vector-valued function $A$ defined on $[0,1]$. 
	
	\begin{theorem} \label{prop1} Suppose that Assumptions \ref{assumm0} and \ref{assumW} hold with $\varphi \geq 1$.  Then   
		\begin{align*}
			T(\widehat{P}^N_\varphi - P^N) \quad  &\to_{\mathcal L_{\mathcal H}} \quad  \mathrm{F} + \mathrm{F}^\ast, 
		\end{align*}
		where $\mathrm{F}=_{\fdd} \left(\int W^N(r) \otimes W^N(r)\right)^\dag\left(\int d W^S(r)\otimes W^N(r)   +  \Gamma^{NS}\right)$.
	\end{theorem}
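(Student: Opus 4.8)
The plan is to work with the sample covariance operator $\widehat C$ directly, but after an appropriate rescaling that separates the fast-growing nonstationary directions from the bounded stationary ones. Writing $X_t = \Phi(1)\sum_{s\le t}\varepsilon_s + \eta_t$ as in \eqref{eqcointeg2}, the nonstationary part lives in $\mathcal H^N$ and its partial sums scale like $T^{1/2}$, so $P^N X_{\lfloor Tr\rfloor}/T^{1/2}$ converges (via the invariance principle implied by Assumption \ref{assumm0}, cf.\ Lemma \ref{lem1}) to $W^N(r)$, while $P^S X_t$ is stationary and $O_p(1)$. Consequently $P^N\widehat C P^N = T^{-1}\sum_t (P^NX_t)\otimes(P^NX_t)$ scales like $T$, $P^N\widehat C P^S$ scales like $T^{1/2}$, and $P^S\widehat C P^S$ is $O_p(1)$. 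The first step is to record these scaling limits precisely: with $D_T = P^N + T^{-1/2}P^S$ (a bounded, invertible-on-range rescaling adapted to the block decomposition \eqref{eqdecom2}), $T^{-1} D_T \widehat C D_T$ converges in the relevant finite-dimensional sense to a block operator whose $NN$-block is $\int W^N\otimes W^N$, whose $NS$- and $SN$-blocks are the mixed Gaussian-type integrals $\int W^N\otimes dW^S$ etc.\ plus one-sided long-run covariance corrections coming from Assumption \ref{assumW}, and whose $SS$-block is $C_{\mathcal E^S}$ (or the relevant stationary covariance). The $\Gamma^{NS}$ term is exactly the bias term produced by the $\sum_j \mathbb E[\mathcal E_{k,t-j}\mathcal E_{k,t}']$ piece in Assumption \ref{assumW}.

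The second step is a perturbation/eigenprojection argument. Because $P^N\widehat C P^N$ dominates (order $T$) while the cross terms are order $T^{1/2}$ and the $SS$-block is order $1$, the top $\varphi$ eigenvalues of $\widehat C$ are of order $T$ and are separated from the remaining eigenvalues (which are $O_p(1)$) with probability tending to one; hence $\widehat{\mathcal H}^N_\varphi = \ran \widehat P^N_\varphi$ is well defined asymptotically and close to $\mathcal H^N$. The standard tool here is the Riesz-projection / resolvent (Cauchy integral) representation of $\widehat P^N_\varphi$ around the cluster of large eigenvalues, which expresses $\widehat P^N_\varphi - P^N$ as a contour integral of $(\zeta - \widehat C)^{-1} - (\zeta - P^N\widehat C P^N)^{-1}$ and yields, to first order, $\widehat P^N_\varphi - P^N \approx -A^{-1}(\widehat C^{SN}) - (\widehat C^{NS}) A^{-1} + \text{h.o.t.}$, where $A = P^N\widehat C P^N$ acting on $\mathcal H^N$. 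The key algebraic fact to exploit is that a rank-$\varphi$ perturbation analysis of $\widehat C = A + \widehat C^{NS} + \widehat C^{SN} + \widehat C^{SS}$ gives the off-diagonal correction to the invariant subspace: the leading term of $\widehat P^N_\varphi - P^N$ is $(P^N\widehat C P^S)(P^S\widehat C P^N \text{ complement inverse}) + \text{adjoint}$, but more cleanly, after multiplying by $T$ and using the scaling limits, $T(\widehat P^N_\varphi - P^N)$ has $NS$-block converging to $(\int W^N\otimes W^N)^{\dagger}(\int dW^S\otimes W^N + \Gamma^{NS})$ acting appropriately, and $SN$-block its adjoint, with the $NN$- and $SS$-blocks vanishing in the limit (the $NN$-block because $\widehat P^N_\varphi$ and $P^N$ both act as the identity there up to $O_p(T^{-1})$ perturbations, the $SS$-block similarly). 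Thus $T(\widehat P^N_\varphi - P^N) \to_{\mathcal L_{\mathcal H}} \mathrm F + \mathrm F^\ast$ with $\mathrm F$ the off-diagonal $NS$ operator identified in the statement.

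The third step is to upgrade the finite-dimensional / blockwise convergence to convergence in $\mathcal L_{\mathcal H}$ (i.e.\ in operator norm, in probability). This is where finiteness of $\varphi$ (Assumption \ref{assumm0}-(ii)) is essential: since $P^N$ has finite rank $\varphi$, the operator $\mathrm F + \mathrm F^\ast = P^N(\cdot)P^S + P^S(\cdot)P^N$ has rank at most $2\varphi$, and likewise $T(\widehat P^N_\varphi - P^N)$, being a difference of two rank-$\le\varphi$ projections, lives in a fixed finite-dimensional-image-plus-finite-dimensional-coimage situation, so finite-dimensional convergence of the relevant matrix entries (with respect to an orthonormal basis of $\mathcal H^N$ and the image of the perturbation) automatically yields operator-norm convergence. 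One needs to check tightness/negligibility of the remainder in the resolvent expansion uniformly, but that follows from the joint convergence $T^{-1}D_T\widehat C D_T \Rightarrow$ (limit block operator) together with the uniform invertibility of the $NN$-block, which is a.s.\ invertible because $\Omega^{NN}$ is nondegenerate on $\mathcal H^N$ (the covariance of $W^N$ is full-rank there) so $\int W^N\otimes W^N$ is a.s.\ positive definite on $\mathcal H^N$, allowing the Moore-Penrose inverse in the statement to be taken as the genuine inverse on $\mathcal H^N$.

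The main obstacle I anticipate is the second step done rigorously: controlling the eigenprojection perturbation when the "small" block $\widehat C^{SS}$ is merely $O_p(1)$ rather than $o_p(1)$ relative to a fixed operator, and when the cross term $\widehat C^{NS}$ is $O_p(T^{1/2})$ — i.e.\ not all three blocks are on the same scale — so one must choose the Cauchy contour to scale with $T$ (enclosing eigenvalues of order $T$, radius of order $T$) and carefully track how each block contributes at order $T^{-1}$ after the rescaling; equivalently, one must justify that the $\widehat C^{SS}$ block does not contaminate the leading $T^{-1}$ term of $\widehat P^N_\varphi - P^N$, which it does not because it enters the resolvent expansion only at order $T^{-2}$. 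Making this bookkeeping airtight — and in particular showing the $NN$- and $SS$-diagonal blocks of $T(\widehat P^N_\varphi - P^N)$ vanish while only the off-diagonal survives — is the delicate heart of the argument; the identification of the limit itself is then a direct substitution of the scaling limits from step one, using Assumption \ref{assumW} to produce the $\int dW^S\otimes W^N$ stochastic integral together with the $\Gamma^{NS}$ correction.
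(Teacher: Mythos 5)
Your overall strategy---establish blockwise scaling limits of $\widehat{C}$ with respect to the decomposition $I=P^N+P^S$, extract the leading off-diagonal correction to the spectral projection, and upgrade finite-dimensional convergence to $\mathcal L_{\mathcal H}$-convergence using the finite rank of every operator involved---points at the correct limit, and your third step is essentially the paper's argument (Parseval's identity plus a trace-norm bound on a finite-rank self-adjoint operator, after a Skorohod-type representation is used to pass from $\to_{wd}$ to $\to_w$; see Claims 1--2 and Lemma \ref{applem2}). Where you genuinely differ is the extraction step: the paper uses no resolvent or Cauchy-integral expansion. It starts from the exact identity $P^N\widehat{C}\widehat{P}^S_{\varphi}=P^N\widehat{P}^S_{\varphi}\widehat{\Lambda}$ implied by the eigenvalue problem, solves it for $TP^N\widehat{P}^S_{\varphi}$ as in \eqref{eq002}, and then shows the residual term is $\mathrm{O}_p(T^{-1})$ using the rates $P^N\widehat{P}^S_{\varphi}=\mathrm{O}_p(T^{-1})$ and $\hat{\lambda}_{\varphi+k}=\mathrm{O}_p(1)$; combined with $\widehat{P}^N_{\varphi}-P^N=P^S\widehat{P}^N_{\varphi}-P^N\widehat{P}^S_{\varphi}$ and Claim 3 this delivers the limit with no contour bookkeeping. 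The ``delicate heart'' you flag as your main obstacle is precisely what the paper's route avoids, because \eqref{eq002} is an identity valid for every $T$ rather than a first-order perturbation expansion.

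There are, however, two concrete errors in your first step that must be repaired before your resolvent argument could even be set up. First, $P^N\widehat{C}P^S=T^{-1}\sum_{t}(P^SX_t)\otimes(P^NX_t)$ is $\mathrm{O}_p(1)$, not $\mathrm{O}_p(T^{1/2})$: by the stochastic-integral convergence in Assumption \ref{assumW}, $\sum_t\langle P^SX_t,\cdot\rangle P^NX_t$ is of exact order $T$, which is the content of the paper's Claim 2 ($P^N\widehat{C}P^S\to_{\mathcal L_{\mathcal H}}\mathcal A_2$ with no further normalization). If the cross block really were $\mathrm{O}_p(T^{1/2})$, then $\widehat{C}^{NS}A^{-1}$ would be $\mathrm{O}_p(T^{-1/2})$ and $T(\widehat{P}^N_{\varphi}-P^N)$ would diverge, contradicting the theorem; your stated limit is consistent only with the correct $\mathrm{O}_p(1)$ order, so your intermediate bookkeeping and your conclusion do not match. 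Second, the single symmetric rescaling $D_T=P^N+T^{-1/2}P^S$ cannot normalize all three blocks simultaneously: $T^{-1}D_T\widehat{C}D_T$ has $NS$-block $T^{-3/2}P^N\widehat{C}P^S\to0$ and $SS$-block $T^{-2}P^S\widehat{C}P^S\to0$, so it does not converge to the block operator you describe (with $D_T=aP^N+bP^S$ one would need $a^2=T^{-1}$, $b^2=1$ and $ab=1$, which is impossible). The blocks must be normalized separately, each with its own rate, which is exactly how the paper proceeds.
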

	\citet[Proposition 2.1]{Chang2016152} earlier established that 	$\|\widehat{P}^N_{\varphi}-  P^N\|_{\mathcal L_{\mathcal H}} = \mathrm{O}_p(T^{-1})$. Theorem \ref{prop1} extends their result by providing a more detailed limiting behavior of $\widehat{P}^N_\varphi$. 
	Even if Theorem \ref{prop1} implies consistency of  $\widehat{P}^N_\varphi$,
	it does not facilitate a direct asymptotic inference since $\mathrm{F}$ depends on  nuisance parameters.  We particularly focus on $\Gamma^{NS}$ and $\Omega^{NS}$; as shown in Theorem \ref{prop1}, $\Gamma^{NS} \neq 0$ makes the limiting random operator $\mathrm{F}+\mathrm{F}^\ast$ be not centered at zero, and  $\Omega^{NS} \neq 0$ makes  $W^N$ and $W^S$ be correlated. Therefore, the asymptotic limit of any statistic based on $\widehat{P}^N_\varphi$ or $\widehat{P}^S_\varphi$ is, if it exists, also dependent on $\Gamma^{NS}$ and $\Omega^{NS}$  in general. As will be shown throughout this paper, elimination of such dependence is the most important step to our modified FPCA and statistical tests based on it.



	\subsection{Modified FPCA for cointegrated FTS}  \label{sec:mfpca}
	We now propose a modification of the ordinary FPCA, which not only provides an asymptotically more efficient estimator of $P^N$ (or $P^S$) but establishes the foundation for  our  statistical tests that will be developed in Section \ref{sec:test}. Our modified FPCA may be viewed as an adjustment of the ordinary FPCA for cointegrated FTS rather than an alternative methodology; we actually take full advantage of the asymptotic properties of the preliminary estimators given in Section \ref{sec:ofpca}. 
	To introduce our methodology, we first define 
	\begin{equation}
		Z_{\varphi,t} = \widehat{P}^N_\varphi \Delta X_t + \widehat{P}^S_\varphi  X_t, \quad \quad t = 1,\ldots,T,  \label{defitem1}
	\end{equation}
	where $\widehat{P}^N_\varphi$ and $\widehat{P}^S_\varphi$ are the preliminary estimators obtained from the ordinary FPCA in Section \ref{sec:fpca}. Then the sample counterparts of $\Omega$ and $\Gamma$ given in \eqref{omegat}  are defined by
	\begin{align}
		\hspace{-0.2em}\widehat{\Omega}_\varphi\hspace{-0.1em} =  \hspace{-0.2em}\widehat{\Gamma}_{\varphi}  + \hspace{-0.1em}  \frac{1}{T}\sum_{s=1}^{T-1}  \hspace{-0.1em} \mathrm{k}\left(\frac{s}{h}\right) \hspace{-0.2em} \sum_{t=s+1}^T \hspace{-0.2em}\hspace{-0.1em}Z_{\varphi,t-s}\hspace{-0.1em} \otimes \hspace{-0.1em}Z_{\varphi,t}  \hspace{-0.1em} \quad \text{and} \quad  \hspace{-0.2em}\widehat{\Gamma}_{\varphi}  \hspace{-0.1em} =   \hspace{-0.1em}  \frac{1}{T}\sum_{s=0}^{T-1}  \hspace{-0.1em} \mathrm{k}\left(\frac{s}{h}\right) \hspace{-0.2em} \sum_{t=s+1}^T \hspace{-0.2em} Z_{\varphi,t}\hspace{-0.1em} \otimes \hspace{-0.1em} Z_{\varphi,t-s},	 \label{omegae}
	\end{align}
	where $\mathrm{k}(\cdot)$ (resp.\ $h$) is the kernel (resp.\ the bandwidth) satisfying the following  conditions: \begin{assumpK}\label{assum2} $\mathrm{(i)}$ $\mathrm{k}(0) = 1$, $\mathrm{k}(u) = 0$ if $u > \kappa$ with some $\kappa>0$,  $\mathrm{k}$ is continuous on $[0,\kappa]$, and $\mathrm{(ii)}$  $h \to \infty$ and $h/T \to 0$ as $T \to \infty$.
	\end{assumpK}
	\noindent  
	\bt{Note that the one-sided kernel $\mathrm{k}(\cdot)$ in Assumption \ref{assum2} satisfies the identical conditions given by  \cite{horvath2013estimation} for estimation of the long run covariance of a stationary FTS.} 
	From the identity $I = \widehat{P}^N_\varphi + \widehat{P}^S_\varphi$, we may decompose $\widehat{\Omega}_\varphi$ and $\widehat{\Gamma}_\varphi$ as in \eqref{eqdecom2}, i.e., 
	$\widehat{\Omega}_\varphi = \widehat{\Omega}^{NN}_\varphi +  \widehat{\Omega}^{NS}_\varphi + \widehat{\Omega}^{SN}_\varphi + \widehat{\Omega}^{SS}_\varphi$ and $\widehat{\Gamma}_\varphi = \widehat{\Gamma}^{NN}_\varphi + \widehat{\Gamma}^{NS}_\varphi + \widehat{\Gamma}^{SN}_\varphi + \widehat{\Gamma}^{SS}_\varphi$.
	We then define a modified variable ${X}_{\varphi,t}$ as follows, 
	\begin{equation} \label{modvar}
		{X}_{\varphi,t} = X_t - \widehat{\Omega}^{SN}_\varphi\,\, \left(\widehat{\Omega}^{NN}_{\varphi}{\mid_{\varphi}^\dagger}\right) \,\, \widehat{P}^N_\varphi \Delta X_t, \quad \quad t = 1,\ldots,T.  
	\end{equation}
	The $\varphi$-regularized inverse $\widehat{\Omega}^{NN}_{\varphi}{\mid_{\varphi}^\dagger}$ in \eqref{modvar} can  easily be  computed from eigendecomposition of $\widehat{\Omega}^{NN}$ once we know $\varphi$. 
	Let $\widehat{C}_\varphi$ be the sample covariance operator  of $\{{X}_{\varphi,t}\}_{t=1}^T$, i.e.,  $\widehat{C}_\varphi = T^{-1}\sum_{t=1}^T {X}_{\varphi,t} \otimes {X}_{\varphi,t}$.
	Roughly speaking, replacing $\widehat{C}$ with $\widehat{C}_{\varphi}$ in  \eqref{eigenp}   has the effect of transforming $W^S$, appearing in the expression of $\mathrm{F}$ in Theorem \ref{prop1}, into another Brownian motion that is independent of $W^N$; however, this does not resolve the issue that the center of the limiting operator depends on nuisance parameters. We therefore need a further adjustment, which is achieved by considering the following modified eigenvalue problem:
	\begin{equation} \label{eigenp2a}
		\left(\widehat{C}_{\varphi} - \widehat{\Upsilon}_{\varphi} - \widehat{\Upsilon}_{\varphi}^\ast\right)  \hat{w}_{j} =  \hat{\mu}_{j} \hat{w}_{j}, \quad j=1,2,\ldots,
	\end{equation}
	where 
	\begin{equation} \label{eigenp21}
		\widehat{\Upsilon}_\varphi = \widehat{\Gamma}^{NS}_\varphi - \widehat{\Gamma}^{NN}_\varphi\,\, \left(\widehat{\Omega}^{NN}_{\varphi}{\mid_{\varphi}^\dagger}\right)\,\,\widehat{\Omega}^{NS}_\varphi.   
	\end{equation}
	Note that the operator $\widehat{C}_{\varphi} - \widehat{\Upsilon}_{\varphi} - \widehat{\Upsilon}_{\varphi}^\ast$ is self-adjoint, so \eqref{eigenp2a} does not produce any complex eigenvalues. From the estimated eigenvectors $\{\hat{w}_j\}_{j \geq 1}$, we construct our proposed estimators of $P^N$ and $P^S$ as follows: 
	\begin{equation}
		\widehat{\Pi}^N_\varphi = \sum_{j=1}^\varphi  \hat{w}_{j} \otimes  \hat{w}_{j},  \quad \quad  \widehat{\Pi}^S_\varphi = I -   \widehat{\Pi}^N_\varphi. \label{eqmod01}
	\end{equation} 
	Note that $\widehat{\Pi}^N_\varphi$ and $\widehat{\Pi}^S_\varphi$ are simply obtained by replacing $\hat{v}_j$ with $\hat{w}_j$ in \eqref{eqprelim}. The asymptotic properties of $\widehat{\Pi}^N_\varphi$ or $\widehat{\Pi}^S_\varphi$ are described by the following theorem.

	\begin{theorem} \label{prop2} Suppose that Assumptions \ref{assumm0}, \ref{assumW} and \ref{assum2} hold  with $\varphi \geq 1$. Then 	\begin{align*}
			T(\widehat{\Pi}^N_\varphi-P^N) \quad  \to_{\mathcal L_{\mathcal H}} \quad  \mathrm{G} + \mathrm{G}^\ast,
		\end{align*}
		\sloppy{where $\mathrm{G} \hspace{-0.1em}=_{\fdd} \hspace{-0.1em}\left(\int W^N(r) \hspace{-0.1em}\otimes\hspace{-0.1em} W^N(r)\right)^\dag \left(\int  d {W}^{S.N}(r)\hspace{-0.1em}\otimes\hspace{-0.1em}  W^N(r)\right)$,  ${W}^{S.N}(r)\hspace{-0.1em} = \hspace{-0.1em}W^S(r)\hspace{-0.1em} -\hspace{-0.1em} \Omega^{SN}(\Omega^{NN})^\dag W^N(r)$, and ${W}^{S.N}$ is independent of $W^N$}.
	\end{theorem}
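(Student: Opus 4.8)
The plan is to follow the proof of Theorem~\ref{prop1}, tracking the two modifications built into the eigenvalue problem~\eqref{eigenp2a}: the replacement of $X_t$ by ${X}_{\varphi,t}$ inside the sample covariance operator, and the subtraction of the centering term $\widehat{\Upsilon}_\varphi+\widehat{\Upsilon}_\varphi^\ast$. The first step is to dispose of the plug-in quantities. From $Z_{\varphi,t}=\widehat{P}^N_\varphi\Delta X_t+\widehat{P}^S_\varphi X_t$, $\Delta X_t=\zeta_t$, $P^NX_t=\sum_{s=1}^t\mathcal{E}^N_s$ (up to asymptotically negligible terms), and $P^SX_t=\mathcal{E}^S_t$, together with $\widehat{P}^N_\varphi\to_{\mathcal L_{\mathcal H}}P^N$ from Theorem~\ref{prop1}, one shows that $Z_{\varphi,t}$ differs from $\mathcal{E}_t$ by errors that remain negligible after kernel weighting; under Assumptions~\ref{assumm0} and~\ref{assum2} the cited results on long-run covariance estimation for stationary functional time series then give $\widehat{\Omega}_\varphi\pto\Omega$ and $\widehat{\Gamma}_\varphi\pto\Gamma$ in $\mathcal L_{\mathcal H}$, hence convergence of every block in~\eqref{gammadecom}. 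Since $\Omega^{NN}$ is the long-run covariance of $P^N\zeta_t$, which equals $\Phi(1)C_\varepsilon\Phi(1)^\ast$ and has rank $\varphi$ because $C_\varepsilon$ is positive definite, we obtain $\widehat{\Omega}^{NN}_{\varphi}{\mid_{\varphi}^\dagger}\pto(\Omega^{NN})^\dag$ and therefore $\widehat{\Upsilon}_\varphi\pto\Upsilon:=\Gamma^{NS}-\Gamma^{NN}(\Omega^{NN})^\dag\Omega^{NS}$. A parallel argument shows that ${X}_{\varphi,t}$ equals, up to negligible errors, the oracle variable $X_t-\Omega^{SN}(\Omega^{NN})^\dag\mathcal{E}^N_t$, so that $P^N{X}_{\varphi,t}$ behaves as $\sum_{s=1}^t\mathcal{E}^N_s$ and $P^S{X}_{\varphi,t}$ behaves as $\mathcal{E}^{S.N}_t:=\mathcal{E}^S_t-\Omega^{SN}(\Omega^{NN})^\dag\mathcal{E}^N_t$.

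I would then analyze the self-adjoint operator $M_\varphi:=\widehat{C}_\varphi-\widehat{\Upsilon}_\varphi-\widehat{\Upsilon}_\varphi^\ast$ block by block relative to $\mathcal H=\mathcal H^N\oplus\mathcal H^S$. Because the correction in ${X}_{\varphi,t}$ and the operators $\widehat{\Upsilon}_\varphi$ are $O_p(1)$ while $P^N{X}_{\varphi,t}$ grows like $T^{1/2}$, the $NN$-block is governed by the unit-root part: $T^{-1}M_\varphi^{NN}\dto B:=\int W^N(r)\otimes W^N(r)$, which is a.s.\ positive definite on $\mathcal H^N$ with rank $\varphi$; meanwhile $M_\varphi^{SS}$ converges in probability to a fixed bounded positive semidefinite operator on $\mathcal H^S$, and $M_\varphi^{NS}=(M_\varphi^{SN})^\ast=O_p(1)$. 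The decisive block is the $SN$-block. Using the oracle representation, $P^S\widehat{C}_\varphi P^N=T^{-1}\sum_{t=1}^T(\sum_{s\le t}\mathcal{E}^N_s)\otimes\mathcal{E}^{S.N}_t+o_p(1)$, and Assumption~\ref{assumW} (whose scope includes the coordinates of $\mathcal{E}^{S.N}_t$, these being linear functionals of $\mathcal{E}_t$) gives convergence to a stochastic integral plus the one-sided long-run cross-covariance $\Lambda:=\sum_{j\ge0}\mathbb{E}[\mathcal{E}^N_{t-j}\otimes\mathcal{E}^{S.N}_t]$. On the other hand $\widehat{\Upsilon}_\varphi$ asymptotically lies in the $NS$-block, so $P^S(\widehat{\Upsilon}_\varphi+\widehat{\Upsilon}_\varphi^\ast)P^N\pto\Upsilon^\ast$; the key identity to verify is then $\Lambda=\Upsilon^\ast$, which follows by a direct computation from the definitions of $\Gamma$, $\Omega$ and $\mathcal{E}^{S.N}_t$ using the tensor-product rules $a\otimes(Ab)=A(a\otimes b)$ and $(a\otimes b)^\ast=b\otimes a$. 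Consequently the nuisance centering cancels the one-sided bias exactly and $M_\varphi^{SN}\dto\int W^N(r)\otimes dW^{S.N}(r)$; this is the whole purpose of the construction in~\eqref{eigenp21}.

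With the block orders in hand, the eigenvector step runs as in Theorem~\ref{prop1}. Since $M_\varphi^{NN}$ has $\varphi$ eigenvalues of order $T$ while $M_\varphi^{SS}$, $M_\varphi^{NS}$, $M_\varphi^{SN}$ are $O_p(1)$, for large $T$ the top $\varphi$ eigenvalues of $M_\varphi$ are of order $T$ and separated by a gap of order $T$ from the remaining ones, which are $O_p(1)$; hence the top $\varphi$-dimensional invariant subspace $\spn\{\hat{w}_j\}_{j=1}^\varphi$ is within $O_p(T^{-1})$ of $\mathcal H^N$ and can be written as the graph of an operator $R_\varphi:\mathcal H^N\to\mathcal H^S$ solving the Riccati equation $M_\varphi^{SN}+M_\varphi^{SS}R_\varphi=R_\varphi M_\varphi^{NN}+R_\varphi M_\varphi^{NS}R_\varphi$, whose block orders give $R_\varphi=M_\varphi^{SN}(M_\varphi^{NN})^\dag+o_p(T^{-1})$. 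Expanding the orthogonal projection onto this graph in powers of $R_\varphi$ (identifying $R_\varphi$ with $P^SR_\varphi P^N\in\mathcal L_{\mathcal H}$), $\widehat{\Pi}^N_\varphi-P^N=R_\varphi+R_\varphi^\ast+O_p(T^{-2})$, whence $T(\widehat{\Pi}^N_\varphi-P^N)\dto\mathrm{G}+\mathrm{G}^\ast$ with $\mathrm{G}=_{\fdd}B^\dag\int dW^{S.N}(r)\otimes W^N(r)$, using self-adjointness of $B^\dag$. Finally, $W^{S.N}$ is independent of $W^N$: the two are continuous linear images of the single Brownian motion $W$, hence jointly Gaussian, and their long-run cross-covariance equals $\Omega^{SN}-\Omega^{SN}(\Omega^{NN})^\dag\Omega^{NN}=\Omega^{SN}-\Omega^{SN}P^N=0$, since $(\Omega^{NN})^\dag\Omega^{NN}$ is the orthogonal projection onto $\ran\Omega^{NN}=\mathcal H^N$ and $\Omega^{SN}=\Omega^{SN}P^N$.

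I expect the main obstacle to be not any single algebraic identity but the uniform control required so that the plug-in estimation errors do not contaminate the first-order limit: one must show that replacing $\widehat{P}^N_\varphi$, $\widehat{\Omega}_\varphi$, $\widehat{\Gamma}_\varphi$ and $\widehat{\Omega}^{NN}_{\varphi}{\mid_{\varphi}^\dagger}$ by their probability limits perturbs $M_\varphi$ by $o_p(1)$ in the $NS$-, $SN$- and $SS$-blocks and by $o_p(T)$ in the $NN$-block, with this perturbation small enough relative to the spectral gap for the invariant-subspace representation to remain valid. The delicate points are the uniform bound on $(\widehat{P}^N_\varphi-P^N)\Delta X_t$ inside the kernel-weighted double sums defining $\widehat{\Omega}_\varphi$ and $\widehat{\Gamma}_\varphi$, and the fact that $M_\varphi^{SS}$ is a genuinely infinite-dimensional compact operator whose eigenvalues accumulate at zero, so the separation of the top $\varphi$ eigenvalues and the ensuing perturbation bound must be obtained through a resolvent or Riesz-projection argument rather than finite-dimensional perturbation theory --- exactly as, presumably, in the proof of Theorem~\ref{prop1}.
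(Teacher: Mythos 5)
Your proposal is correct and follows essentially the same route as the paper's proof: consistency of $\widehat{\Omega}_\varphi$ and $\widehat{\Gamma}_\varphi$, dominance of the $NN$-block with $T^{-1}P^N\widehat{C}_\varphi P^N$ converging to $\int W^N(r)\otimes W^N(r)$, exact cancellation of the one-sided bias $\Upsilon=\Gamma^{NS}-\Gamma^{NN}(\Omega^{NN})^\dag\Omega^{NS}$ by $\widehat{\Upsilon}_\varphi$ in the off-diagonal block leaving $\int dW^{S.N}(r)\otimes W^N(r)$, linearization of $T(\widehat{\Pi}^N_\varphi-P^N)$ as the $SN$-block times the regularized inverse of the $NN$-block plus its adjoint, and independence of $W^{S.N}$ and $W^N$ via the vanishing cross-covariance. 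The only cosmetic differences are that the paper linearizes directly from the eigenvalue equation (leaning on Chang, Kim and Park (2016) for the preliminary rates $P^N\widehat{\Pi}^S_\varphi=\mathrm{O}_p(T^{-1})$) rather than via your Riccati/invariant-subspace expansion, and cites Harris (1997) for the bias identity that you verify by direct computation.
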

	
	The asymptotic limit of the proposed estimator $\widehat{\Pi}^N_\varphi$ is of a more convenient form than that of the preliminary estimator based on the ordinary FPCA. First note that the limiting operator $\mathrm{G}+\mathrm{G}^\ast$ is now centered at zero, whereas $\mathrm{F}+\mathrm{F}^\ast$ in Theorem \ref{prop1} is not so in general due to $\Gamma^{NS}$. In addition, $\mathrm{G}$ is characterized by two independent Brownian motions while $\mathrm{F}$ is not so in general due to $\Omega^{NS}$. These properties of  $\widehat{\Pi}^N_\varphi$ not only help us develop statistical tests for examining various hypotheses about cointegration in Section \ref{sec:test}, but also make   $\widehat{\Pi}^N_\varphi$ asymptotically more efficient than  $\widehat{P}^N_\varphi$ in a certain sense, see Remark \ref{rem2}.

	\begin{remark}\normalfont \label{rem2}
		\sloppy{For any $k$, $x = (x_1,\ldots,x_k) \in \mathcal H^k$ and $y=(y_1,\ldots,y_k)\in \mathcal H^k$, let $\widehat{P}(k,x,y) = (\langle  TP^N\widehat{P}^S_\varphi x_1,y_1 \rangle,  \ldots, \langle TP^N\widehat{P}^S_\varphi x_k,y_k \rangle)'$ and $\widehat{\Pi}(k,x,y) = (\langle TP^N\widehat{\Pi}^S_\varphi x_1,y_1 \rangle,  \ldots, \langle  TP^N\widehat{\Pi}^S_\varphi x_k,y_k \rangle)'$. From similar arguments used in \citet[Theorem 3.1]{saikkonen1991asymptotically} and \citet[Section 2]{harris1997principal}, the following can be shown:  for any choice of $k$, $x$, $y$, and $\Theta \subset \mathbb{R}^{k}$ that is  convex and symmetric around the origin,}	  
		$	\lim_{T\to \infty} \text{Prob.}\{ \widehat{\Pi}(k,x,y) \in \Theta\}  \geq  \lim_{T\to \infty} \text{Prob.}\{ \widehat{P}(k,x,y) \in \Theta \}$ 
		and the equality does not hold in general unless $\Omega^{NS} = \Gamma^{NS} = 0$. A detailed discussion is given at the end of Section \ref{addremark} in the Supplementary Material. 
		This implies that the asymptotic  distribution of $\widehat{\Pi}(k,x,y) $ is more concentrated at zero than  that of $ \widehat{P}(k,x,y)$.	A similar result can be shown for $P^S\widehat{\Pi}_{\varphi}^N$ and $P^S\widehat{P}_{\varphi}^N$. In this sense, we say that $\widehat{\Pi}_{\varphi}^N$ is more asymptotically efficient than $\widehat{P}_{\varphi}^N$. 
	\end{remark}


	\begin{remark}\normalfont \label{remfin}
		Our methodology can be applied in the case where $\dim(\mathcal H) < \infty$ without further adjustment, which is differentiated from that the existing methods (\citealp{SS2019}, \citealp{Chang2016152}, and \citealp{LRS}) are only discussed in a specific infinite dimensional Hilbert space setting. In this finite dimensional case, the modified FPCA is related to the PCA methodology of \cite{harris1997principal}, but the latter cannot directly be applied to an infinite dimensional setting. For more details, see Section \ref{sec:harris} of the Supplementary Material.   
	\end{remark}
	
	\begin{remark}\normalfont \label{remaddnew}
		\bt{We assumed that $\varphi=\dim(\mathcal H^N)$ is known in this section, but this is not generally possible for practitioners. Of course, $\varphi$ can be replaced by its consistent estimator  and this does not affect the asymptotic result given by Theorem \ref{prop2}. Nevertheless, as long as this replacement is necessary, our estimator is not guaranteed to be better (in the sense of Remark \ref{rem2}) than the ordinary FPCA-based estimator in finite samples; more detailed discussion on this issue is beyond the scope of the present article and hence we leave it as a future work. Despite this limitation,  Theorem \ref{prop2} by itself can be a basis for statistical inference on cointegrated FTS, and one such example is our test to be presented in the next section.}
	\end{remark}
	
	
	\section{Applications: statistical tests based on the modified FPCA}	 \label{sec:test} 
	We develop statistical tests to examine various hypotheses about $\mathcal H^N$ or $\mathcal H^S$ based on the asymptotic results given in Section \ref{sec:mfpca}. As in the previous sections, we  focus on the case without deterministic terms; the discussion is extended to allow a deterministic component in Section \ref{sec:deterministic} of the Supplementary Material. 
	


	\subsection{FPCA-based test for the dimension of $\mathcal H^N$} \label{infer:coranktest}
	In the previous sections, we need prior knowledge of $\varphi = \dim(\mathcal H^N)$. We here provide a novel FPCA-based test that can be applied sequentially to determine $\varphi$. 
	Consider the following null and alternative hypotheses, 
	\begin{align} \label{hypo0}
		H_0 : \dim(\mathcal H^N)  = \varphi_0 \quad \text{against} \quad 	H_1 : \dim(\mathcal H^N)  > \varphi_0,     
	\end{align}
	for $\varphi_0 \geq 0$.	
	It is worth mentioning that stationarity of $\{X_t\}_{t\geq 1}$, i.e., $\varphi_0 = 0$, can be examined by the test to be developed. This means that our test can be used as an alternative to the existing tests of stationarity of FTS proposed by e.g.\ \cite{horvath2014test}, \cite{kokoszka2016kpss}, and \cite{aue2017testing}. 
	It should also be noted that our selection of hypotheses in \eqref{hypo0} is the opposite of that used for the existing tests proposed by \cite{Chang2016152} and \cite{SS2019} in the sense that the alternative hypothesis in those tests is set to $H_1 : \dim(\mathcal H^N) < \varphi_0$. Due to this difference, our test has its own advantage especially when it is sequentially applied to estimate $\varphi$; this will be more detailed in Remark \ref{rem:seq}.  

	For each $\varphi_0$, we  define $\widehat{P}^N_{\varphi_0}$ and $\widehat{P}^S_{\varphi_0}$ as in \eqref{eqprelim} from the eigenproblem \eqref{eigenp}; if $\varphi_0 = 0$, $\widehat{P}^N_{\varphi_0}$ is set to zero.  Given $\widehat{P}^N_{\varphi_0}$ and $\widehat{P}^S_{\varphi_0}$, define $Z_{\varphi_0,t}$, $\widehat{\Omega}_{\varphi_0}$,  $\widehat{\Gamma}_{\varphi_0}$, $\widehat{\Upsilon}_{\varphi_0}$, ${X}_{\varphi_0,t}$, and  $\widehat{C}_{\varphi_0}$ as in Section \ref{sec:fpca}; see \eqref{defitem1}-\eqref{eigenp21}. 
	
	We expect from Theorem \ref{prop1} that $\{\widehat{P}^S_{\varphi_0}X_t\}_{t\geq 1}$ will behave as a stationary process if $\varphi_0 = \varphi$ (since $\|\widehat{P}^S_{\varphi_0} - P^S\|_{\mathcal L_{\mathcal H}} = \mathrm{O}_p(T^{-1})$),  and  the eigenvectors $\{\hat{v}_{j}\}_{j=\varphi_0+1}^{\varphi}$ are asymptotically included in $\mathcal H^N$ if $\varphi_0 < \varphi$. In the latter case,  $(\langle X_t, \hat{v}_{\varphi_0+1} \rangle,\ldots, \langle X_t, \hat{v}_{\varphi} \rangle)'$ is expected to behave as a unit root process, hence $\{\widehat{P}^S_{\varphi_0}X_t\}_{t\geq 1}$ will not be stationary. Based on this idea, it  may be reasonable to ask if we can distinguish the correct hypothesis from incorrect ones specifying $\varphi_0 < \varphi$ by examining stationarity of $\{\widehat{P}^S_{\varphi_0}X_t\}_{t\geq 1}$. 
	
	To simplify the discussion, 	we focus on the time series $\{\langle X_t, \hat{v}_{\varphi_0+1} \rangle \}_{t\geq 1}$, which is expected to be stationary under $H_0$. 
	To examine stationarity of $\{\langle X_t, \hat{v}_{\varphi_0+1} \rangle \}_{t\geq 1}$, we consider the following test statistic, 
	\begin{align}
		\frac{1}{T^2} \sum_{t=1}^T \sum_{s=1}^t \langle X_s, \hat{v}_{\varphi_0+1} \rangle^2 / \LRV(\langle X_t, \hat{v}_{\varphi_0+1}\rangle),  \label{eqkpssorigin}  
	\end{align}  
	where $\LRV(\langle X_t, \hat{v}_{\varphi_0+1}\rangle)=\frac{1}{T}\sum_{s=-T+1}^{T-1}\mathrm{k}({|s|}/{h})\sum_{t=|s|+1}^T  \langle X_t, \hat{v}_{\varphi_0+1} \rangle \langle X_{t-|s|}, \hat{v}_{\varphi_0+1} \rangle$,
		and  $\mathrm{k}(\cdot)$ and $h$ satisfy Assumption \ref{assum2}. Note that  the test statistic is given by the ratio of the sum of squared partial sums of the univariate time series ($\{\langle X_t, \hat{v}_{\varphi_0+1} \rangle\}_{t=1}^T$) to its sample long-run variance ($\LRV(\langle X_t, \hat{v}_{\varphi_0+1}\rangle)$). This is similar to the well-known KPSS test for examining the null hypothesis of stationarity; however, it should be noted that \eqref{eqkpssorigin} is not identical to the original LM statistic proposed by \cite{kwiatkowski1992testing} unless $\{ \langle X_t, \hat{v}_{\varphi_0+1} \rangle\}_{t=1}^T$ is demeaned (see Appendix of their paper).
		If $\Omega^{NS} = \Gamma^{NS} = 0$, the asymptotic null distribution of \eqref{eqkpssorigin} does not depend on any nuisance parameters and is  given by a functional of two independent standard Brownian motions. We thus may assess the plausibility of the null hypothesis with no significant difficulty. 
		In general cases where $\Omega^{NS}=\Gamma^{NS}=0$ is not satisfied, the limiting distribution of \eqref{eqkpssorigin} depends on both of  $\Omega^{NS}$ and $\Gamma^{NS}$, which are unknown in practice (see Theorem \ref{prop3a} and its proof given in Section \ref{appproof1} of the Supplementary Material).   As expected from  Theorem \ref{prop1}, this issue is linked closely to the fact that the asymptotic limit of $\widehat{P}^S_{\varphi_0}$ depends on $\Omega^{NS}$ and $\Gamma^{NS}$.

		The assumption that $\Omega^{NS} = \Gamma^{NS} = 0$ is too restrictive in modeling cointegrated FTS, hence a naive use of \eqref{eqkpssorigin} in practice to examine \eqref{hypo0} should be limited to very special circumstances. However, using the asymptotic results developed for our modified FPCA, the test statistic \eqref{eqkpssorigin} can be  modified to have an asymptotic null distribution that is free of nuisance parameters in general cases.  Consider the following eigenvalue problem:
		\begin{equation} \label{eigenptest}
			\left(\widehat{C}_{\varphi_0} - \widehat{\Upsilon}_{\varphi_0} - \widehat{\Upsilon}_{\varphi_0}^\ast\right) \hat{w}_{j} =\hat{\mu}_{j}  \hat{w}_{j} , \quad j=1,2,\ldots.
		\end{equation}
		We then  replace   $X_t$ and $\hat{v}_{\varphi_0+1}$  in \eqref{eqkpssorigin} with $X_{\varphi_0,t}$ and $\hat{w}_{\varphi_0+1}$, respectively, and obtain the following statistic: 	
		\begin{equation}
			\widehat{Q} =	\frac{1}{T^2} \sum_{t=1}^T \sum_{s=1}^t \langle {X}_{\varphi_0,s}, \hat{w}_{\varphi_0+1} \rangle^2 / \LRV(\langle {X}_{\varphi_0,t}, \hat{w}_{\varphi_0+1}\rangle).  \label{eqkpssorigin2}  
		\end{equation} 
		There is a big gain from this simple replacement: different from \eqref{eqkpssorigin}, the asymptotic null distribution of  \eqref{eqkpssorigin2} does not depend on any nuisance parameters and is given by a functional of two independent standard Brownian motions without requiring $\Omega^{NS}=\Gamma^{NS}=0$; this is, of course, closely related to the asymptotic result given by Theorem \ref{prop2}.  We thus may asymptotically assess the plausibility of the null hypothesis relative to the alternative as follows: \bt{below, we let   $\mathsf{B}$   and $\mathsf{W}$ denote \textit{independent} $\varphi_0$-dimensional and one-dimensional standard Brownian motions, respectively.}
		\begin{proposition}\label{propkpss}
			Suppose that Assumptions  \ref{assumm0}, \ref{assumW} and \ref{assum2} hold. 
			\bt{Then $\widehat{Q}\to_d \int \mathsf{V}^2(s)ds$ under $H_0$ of \eqref{hypo0} and $\widehat{Q}\to_p \infty$ under $H_1$ of \eqref{hypo0}, where $V(r)=\mathsf{W}(r) - \int d\mathsf{W}(s) \mathsf{B}(s)' \left(\int \mathsf{B}(s)\mathsf{B}(s)' \right)^{-1}\int_{0}^{r} \mathsf{B}(s)$. The second term in the expression of $V(r)$ is regarded as zero if $\varphi_0 = 0$.}
		\end{proposition} 
		
		
		\bt{If $\varphi_0 = 0$ (and thus $X_{\varphi,t} = X_t$,  $\widehat{C}_{\varphi_0}=\widehat{C}$, and  $\widehat{\Gamma}_{\varphi_0} =\widehat{\Gamma}_{\varphi_0}^\ast = 0$), the above test is similar but not identical to the test of \cite{horvath2014test}; their test is based on computation of the eigenelements of the sample long-run covariance operator of $X_t$ unlike ours is based on those of $\widehat{C}$.} The asymptotic test given in Proposition \ref{propkpss} 
		in fact corresponds to a special case of the general version of our FPCA-based test, which is discussed in the Supplementary Material (Section \ref{sec_general_test}); see also Section \ref{sec:deterministic2} for our extension of the test for the case where there is a nonzero intercept and/or a linear trend.

		\begin{remark} \normalfont \label{remseq}
			To determine $\varphi$, we may apply our test for $\varphi_0 = 0,1,\ldots$ sequentially. Let $\hat{\varphi}$ denote the value under $H_0$ that is not rejected for the first time at a fixed significance level  $\alpha$. We deduce from Theorem \ref{prop3a} that $\text{Prob.\ }\{ \hat{\varphi} = \varphi\} \to  1- \alpha$ and $\text{Prob.\ }\{ \hat{\varphi} < \varphi\}  \to\  0$. Note that,  even if the sequential procedure requires multiple applications of the proposed test, the correct asymptotic size, $\alpha$, is guaranteed without any adjustments; this property is shared by the existing sequential procedures developed in a Euclidean/Hilbert space setting (see e.g.,  \citeauthor{Johansen1996}, \citeyear{Johansen1996}; \citeauthor{nyblom2000tests}, \citeyear{nyblom2000tests}; \citeauthor{Chang2016152}, \citeyear{Chang2016152}; 
			\citeauthor{SS2019}, \citeyear{SS2019}). If the significance level is chosen such that $\alpha \to 0$ as $T\to \infty$ then, $\text{Prob.\ }\{ \hat{\varphi} = \varphi\} \to  1$.
			
		\end{remark}
		
		\begin{remark} \label{rem:seq} \normalfont
			In \cite{Chang2016152} and \cite{SS2019}, $\varphi$ is determined by testing the following hypotheses: for a pre-specified positive integer $\varphi_{\max}$ and $\varphi_0 = \varphi_{\max},\varphi_{\max}-1,\ldots,1$, 
			\begin{align*} 
				H_0 : \dim(\mathcal H^N)  = \varphi_0 \quad \text{against} \quad 	H_1 : \dim(\mathcal H^N)  < \varphi_0,     
			\end{align*}
			sequentially until $H_0$ is not rejected for the first time. Note that we need a prior information on an upper bound of $\varphi$, i.e., $\varphi_{\max}$, for this type of procedure. However, in an infinite dimensional setting, there is no natural upper bound of $\dim(\mathcal H^N)$. On the other hand, our sequential procedure described in Remark \ref{remseq} does not require such a prior information; 
			the procedure first examines the null hypothesis $H_0 : \dim(\mathcal H^N) = 0$, and $0$ is the minimal possible value of $\dim(\mathcal H^N)$.  \bt{Even with this difference resulting from a different formulation of the alternative hypothesis, our method can be compared with the existing ones as a way to estimate $\varphi$. From our simulation study we find that our procedure seems to perform comparably with  the recent method proposed by  \cite{SS2019} and tends to work better if the stationary component is persistent; see Section \ref{sec:monte} and Table \ref{tabcorank2determination}.}	If $\varphi  \geq 1$ is known in advance,  $\varphi$ can also be estimated by the eigenvalue ratio criterion proposed by  \cite{LRS} under their assumptions. Our procedure given in Remark \ref{remseq} is significantly differentiated from any of these existing procedures, and thus ours can complement those in practice. 
		\end{remark}
		


		\subsection{Tests of hypotheses about cointegration}\label{sec:inferenceatt}
		Practitioners may be interested in testing various hypotheses about $\mathcal H^N$ or $\mathcal H^S$. For example, we may want to test if a specific element $x_0$ is included in $\mathcal H^N$ or the span of a specified set of vectors contains $\mathcal H^N$. More generally, we here consider testing the following hypotheses: for a specified subspace $\mathcal M$,
		\begin{align} 
			&	H_0 : \mathcal M \subset \mathcal H^N, \,\, \text{or equivalently,}\,\,  \mathcal M^\perp \supset \mathcal H^S \quad \text{against} \quad  	H_1 :  H_0 \text{ is not true}, \label{hptest1}\\
			&H_0 : \mathcal M \supset \mathcal H^N, \,\, \text{or equivalently,}\,\,  \mathcal M^\perp \subset \mathcal H^S \quad \text{against} \quad  	H_1 :  H_0 \text{ is not true}. \label{hptest2} 
		\end{align}
		
		
		Let $P^{{\mathcal M}}$ denote the projection onto ${{\mathcal M}}$ and assume that $\varphi$ is known; in practice, 
		we may apply any statistical method discussed in Remarks \ref{remseq} and \ref{rem:seq} to determine  $\varphi$. 
		In this case, \eqref{hptest1} and \eqref{hptest2} can be tested by examining the dimension of the attractor space associated with the residuals $\{(I-P^{{\mathcal M}})X_t\}_{t\geq 1}$. 
		
		Specifically, let $\widehat{Q}_{\mathcal M}$ be the test statistic computed as in \eqref{eqkpssorigin2} from $\{(I-P^{{\mathcal M}})X_t\}_{t=1}^T$ for $\varphi_0 = \varphi-\dim(\mathcal M)$ (resp.\ $\varphi_0 = 0$)  if we examine \eqref{hptest1} (resp.\  \eqref{hptest2}). Then we may deduce from Proposition \ref{propkpss} that, under $H_0$ of \eqref{hptest1}, $\widehat{Q}_{\mathcal M}  \to_d  \int \mathsf{W}^2(s)ds$ while it diverges to infinity under $H_1$ of \eqref{hptest1}.

		\begin{remark} \label{remadd}\normalfont
			We assess the plausibility of $H_0$ in \eqref{hptest1} or \eqref{hptest2}  by checking if the dimension of the attractor space associated with $\{(I-P^{{\mathcal M}})X_t\}_{t\geq1}$ is higher than that is implied by $H_0$. This is possible since our test is consistent against the alternative hypotheses with higher dimensional attractor spaces. 	Unlike our proposed test, the top-down tests proposed by \cite{Chang2016152} and \cite{SS2019} cannot be used in this way.  
		\end{remark}

		\subsection{Numerical studies on the proposed tests: a brief summary}\label{sec_numeri}  
		\bt{In  Sections \ref{sec:monte}-\ref{app:addsim}  of the Supplementary Material, we examine the finite sample properties of our proposed tests for \eqref{hypo0}, \eqref{hptest1} and \eqref{hptest2}. The tests seem to perform reasonably but slightly over-reject the null hypothesis when the stationary component is persistent. We also compare the finite sample performance of our test as a way to estimate $\varphi=\dim(\mathcal H^N)$ (Remark \ref{remseq}) with the recent testing procedure of \cite{SS2019}. We found that, overall, ours  performs comparably with theirs in general; moreover, in our simulation setting with a persistent stationary component, ours tends to outperform theirs. The performance of our testing procedure seems to be dependent on the choice of $h$  while the method proposed by \cite{SS2019} does not require such a bandwidth choice (since theirs does not require any long-run covariance estimation). Despite this disadvantage, we may conclude from the simulation evidence that our testing procedure can be an attractive alternative or complement to theirs. Of course, as mentioned in Remark \ref{remadd}, the proposed test in the present paper can also be used to examine various hypotheses on cointegration from a relevant residual FTS, which may be another advantage of ours over the test of \cite{SS2019}. In addition to the simulation studies, we in Sections \ref{sec:emp} and \ref{sec:emp2} illustrate our methodology with two empirical examples: U.S.\ age-specific employment rates and monthly earning densities.} 
	

	\section{Concluding remarks : some cautions and future direction}\label{sec_conclusion}
	This note adds some novel asymptotic results for the FPCA-based estimator of $P_N$ or $P_S$ of cointegrated FTS. We then propose a modification of FPCA for statistical inference on the cointegrating behavior. Some cautions need to be made for practitioners. First, in order for the modified FPCA to work well, we need to accurately estimate $P_N$ first; however, as remarked by \cite{SS2019}, it is not easy particularly if $\dim(\mathcal H^N)$ is large but the sample size is small. Moreover, this note does not address in detail on the choice of the bandwidth parameter $h$ used to estimate long-run covariance operators of some residual time series considered in the note. Given that all such time series are expected to be stationary, the method  proposed by \cite{rice2017plug} may be used, but this issue obviously requires a further investigation.  Despite these limitations, our proposed methodology gives us some useful results, such as statistical tests on $\dim(\mathcal H^N)$ and cointegrating properties. \bt{To put it in more detail, our test as a way to estimate $\dim(\mathcal H^N)$ performs comparably with the existing methods (it is found to perform better for some simulation setting considered in Section \ref{sec:monte}), and the test by itself can be used to examine various hypotheses on cointegration (see Remark \ref{remadd}); see also Remarks \ref{rem2} and \ref{remfin}.} 
	The theoretical results in this paper may be used for more sophisticated statistical models requiring dimension-reduction of a FTS exhibiting unit-root-like behavior; a potential example may be the cointegrating regression model involving functional regressors. Provided that practitioners want to employ FPCA-based dimension-reduction, which has been widely used in various contexts, our results will be helpful to derive detailed asymptotic properties of estimators for such models. 


	\appendix 
	\begin{center}
		\Large Supplementary Material on ``Functional principal component analysis for cointegrated functional time series''
	\end{center}
	\begin{abstract}
	This supplementary material consists of two parts. In Part I, we provide some relevant extensions of the theoretical results given in the main article. In Part II, we provide mathematical justifications for the results given in the main article and Part I.
	
\end{abstract}
	\subsection*{Outline}
	Part I of this supplementary material consists of four sections (Sections \ref{sec:harris}-\ref{sec:numerical}).  Section \ref{sec:harris} shows how our modified FPCA is related to the PCA methodology proposed by \cite{harris1997principal}. Section \ref{sec_general_test} provides a more general version of the statistical test given in Section \ref{infer:coranktest} and  Section \ref{sec:deterministic} extends the theoretical results given in the main article (and also those in  Section \ref{sec_general_test}).  In Section \ref{sec:numerical}, we provide simulation results (to see the finite sample performances of the proposed tests) and empirical applications (to illustrate our methodology for practitioners). In Part II, we provide mathematical justifications for the theoretical results given in the main article and Part I. Section \ref{appintro0} summarizes mathematical preliminaries and Section \ref{appproof} provides proofs.


	\bigskip
	\noindent \textbf{\Large{Part I : Supplementary results and numerical studies}}
	\section{Modified FPCA in a finite dimensional setting} \label{sec:harris}
	We consider a special case when $\dim(\mathcal H)<\infty$ and the minimum eigenvalue of $\mathbb{E} [\mathcal E_t \otimes \mathcal E_t]$ is strictly positive. Even if our modified FPCA can be applied without any adjustment in this case, we here provide a way to obtain an estimator whose asymptotic properties are equal to those of $\widehat{\Pi}^N_\varphi$ given in Theorem \ref{prop2}. Define 
	\begin{equation*}
		\ddot{X}_{\varphi,t} = X_t - \widehat{\Omega}^{SN}_\varphi \,\left(\widehat{\Omega}^{NN}_{\varphi}{\mid_{\varphi}^\dagger}\right)\,\, \widehat{P}^N_\varphi \Delta X_t - \widehat{\Gamma}^{SN}_\varphi \widehat{C}_{\varphi,Z}^{-1} Z_{\varphi.t},  \quad \quad t = 1,\ldots,T,
	\end{equation*} 
	where  $\widehat{C}_{\varphi,Z} = T^{-1} \sum_{t=1}^T Z_{\varphi,t} \otimes Z_{\varphi,t}$. Let $\ddot{C}_\varphi$ be the sample covariance of $\{\ddot{X}_{\varphi,t}\}_{t=1}^T$ and consider the following eigenvalue problem, 
	\begin{equation} \label{eigenp2aa}
		\ddot{C}_{\varphi}\ddot{w}_{j} = \ddot{\mu}_{j}\ddot{w}_{j}, \quad j=1,2,\ldots,\dim(\mathcal H).
	\end{equation}
	We then define $\ddot{\Pi}^N_\varphi = \sum_{j=1}^\varphi \ddot{w}_j \otimes \ddot{w}_j$ and  $\ddot{\Pi}^S_\varphi = I - \ddot{\Pi}^N_\varphi$.  In fact, \eqref{eigenp2aa} is a simple adaptation of the eigenvalue problem proposed in \cite{harris1997principal} for multivariate cointegrated systems. Based on the asymptotic results given in  \cite{harris1997principal}, the following can be shown.
	\begin{proposition} \label{prop2a} Suppose that Assumptions \ref{assumm0}, \ref{assumW} and \ref{assum2} hold  with $\varphi \geq 1$, $\dim(\mathcal H) < \infty$, and the minimum eigenvalue of $\mathbb{E} [\mathcal E_t \otimes \mathcal E_t]$ is strictly positive. Then 	\begin{align*}
			T(\ddot{\Pi}^N_\varphi-P^N) \quad  \to_{\mathcal L_{\mathcal H}} \quad  \mathrm{G} + \mathrm{G}^\ast,
		\end{align*}
		where $\mathrm{G}$ is given in Theorem \ref{prop2}.
	\end{proposition}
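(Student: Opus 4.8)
The plan is to show that the estimator $\ddot{\Pi}^N_\varphi$, obtained from Harris's PCA construction applied to $\{\ddot{X}_{\varphi,t}\}_{t=1}^T$, has exactly the same first-order asymptotic behavior as $\widehat{\Pi}^N_\varphi$ from Theorem \ref{prop2}. The key observation is that in the finite-dimensional setting with $\mathbb{E}[\mathcal E_t \otimes \mathcal E_t]$ having strictly positive minimum eigenvalue, all the regularized inverses become genuine inverses and the operator-valued quantities reduce to ordinary matrices, so the eigenvalue problem \eqref{eigenp2aa} is literally the matrix PCA studied in \cite{harris1997principal}. First I would verify that $\{\ddot{X}_{\varphi,t}\}$ coincides, up to the identification $\mathcal H \cong \mathbb{R}^{\dim \mathcal H}$, with the fully-modified/canonical-correlation transformation of the data used in that paper: the term $\widehat{\Omega}^{SN}_\varphi (\widehat{\Omega}^{NN}_\varphi{\mid_\varphi^\dagger}) \widehat{P}^N_\varphi \Delta X_t$ performs the endogeneity correction (making the transformed stationary innovations independent of $W^N$ in the limit), while the additional term $\widehat{\Gamma}^{SN}_\varphi \widehat{C}_{\varphi,Z}^{-1} Z_{\varphi,t}$ implements the serial-correlation (one-sided long-run covariance) correction that recenters the limiting operator at zero.

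The main steps I would carry out are: (i) establish the limiting behavior of the sample quantities $\widehat{\Omega}_\varphi$, $\widehat{\Gamma}_\varphi$, $\widehat{P}^N_\varphi$, $\widehat{P}^S_\varphi$ — here I can invoke Theorem \ref{prop1} for consistency of the preliminary projections at rate $O_p(T^{-1})$, and the standard consistency of kernel long-run covariance estimators under Assumption \ref{assum2} (as in \citeauthor{horvath2013estimation}, \citeyear{horvath2013estimation}), noting that $Z_{\varphi,t}$ is asymptotically equivalent to the infeasible $Z_t = P^N \Delta X_t + P^S X_t$ because $\widehat{P}^N_\varphi - P^N = O_p(T^{-1})$; (ii) derive the joint weak limit, under Assumption \ref{assumW}, of the appropriately normalized partial sums and cross-products built from $\ddot{X}_{\varphi,t}$, expressed in terms of $W^N$ and the corrected process $W^{S.N}$, using the continuous mapping theorem and the Phillips–Solo decomposition \eqref{eqcointeg2}; (iii) appeal to the perturbation/eigenprojection argument — the same one underlying Theorem \ref{prop2} — that translates the limit of $T(\ddot{C}_\varphi - \text{(leading term)})$ into the limit of $T(\ddot{\Pi}^N_\varphi - P^N)$. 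Concretely, $\ddot{C}_\varphi$ has $\varphi$ eigenvalues of order $T$ (the nonstationary directions) separated from the remaining $O(1)$ eigenvalues, and a standard sin-$\Theta$ / Riesz-projection expansion gives $T(\ddot{\Pi}^N_\varphi - P^N) = \mathrm{G} + \mathrm{G}^\ast + o_p(1)$ with $\mathrm{G}$ determined by the off-diagonal ($SN$) block of the normalized perturbation.

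The crucial algebraic identity to check is that the extra correction term in $\ddot{X}_{\varphi,t}$, together with the $\widehat{\Omega}$-correction, reproduces precisely the operator $\widehat{\Upsilon}_\varphi = \widehat{\Gamma}^{NS}_\varphi - \widehat{\Gamma}^{NN}_\varphi(\widehat{\Omega}^{NN}_\varphi{\mid_\varphi^\dagger})\widehat{\Omega}^{NS}_\varphi$ that was subtracted explicitly in the modified eigenvalue problem \eqref{eigenp2a}. In Harris's formulation the serial-correlation correction is absorbed into the data transformation via the regression on $Z_{\varphi,t}$ rather than subtracted from the covariance operator; I would show these two routes lead to the same limiting recentering, i.e. that $\widehat{\Gamma}^{SN}_\varphi \widehat{C}_{\varphi,Z}^{-1} Z_{\varphi,t}$ contributes, in the cross-product $T^{-1}\sum_t (\sum_{s\le t}\ddot{X}_{\varphi,s}) \otimes \ddot{X}_{\varphi,t}$ after normalization, exactly the term $-\Gamma^{NS}$-type bias that cancels the bias inherited from $\widehat{C}_\varphi$. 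Once that bookkeeping is done, the limit $\mathrm{G} + \mathrm{G}^\ast$ follows by the same continuous-mapping-plus-eigenprojection steps as in the proof of Theorem \ref{prop2}.

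I expect the main obstacle to be step (iii) combined with the identity-checking in the last paragraph: one must handle the fact that $\ddot{C}_\varphi$ is not self-adjoint after the asymmetric regression correction (unlike $\widehat{C}_\varphi - \widehat{\Upsilon}_\varphi - \widehat{\Upsilon}_\varphi^\ast$), so the eigen-decomposition \eqref{eigenp2aa} must be justified to still yield real eigenvalues and an orthonormal eigenbasis asymptotically — this is where the strict positivity of the minimum eigenvalue of $\mathbb{E}[\mathcal E_t \otimes \mathcal E_t]$ and finiteness of $\dim(\mathcal H)$ are essential, since they guarantee the relevant matrices are invertible and the $\varphi$ largest eigenvalues are well-separated and asymptotically real. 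The remaining parts are largely a translation of \citet[Theorems 1--2]{harris1997principal} into the present operator notation, so the cleanest write-up is to state the correspondence precisely and then cite that paper for the core limit, supplying only the new ingredients (the $Z_{\varphi,t}\to Z_t$ replacement and the block-wise consistency of $\widehat{\Omega}_\varphi,\widehat{\Gamma}_\varphi$) in detail.
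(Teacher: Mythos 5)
Your proposal follows essentially the same route as the paper: the paper's proof invokes the proof of Theorem 2 of \cite{harris1997principal} together with the block-convergence results $T^{-1}P^N\ddot{C}_\varphi P^N \to_{\mathcal L_{\mathcal H}} \int W^N(r)\otimes W^N(r)$ and $P^N\ddot{C}_\varphi P^S \to_{\mathcal L_{\mathcal H}} \int dW^{S.N}(r)\otimes W^N(r)$, and then applies the same eigenprojection identity used for Theorem \ref{prop2}. One small correction: the obstacle you flag about $\ddot{C}_\varphi$ failing to be self-adjoint is not real --- $\ddot{C}_\varphi = T^{-1}\sum_{t=1}^T \ddot{X}_{\varphi,t}\otimes\ddot{X}_{\varphi,t}$ is a sample covariance operator and hence automatically self-adjoint and positive semidefinite; the asymmetry of the regression correction lives in the data transformation, not in the operator being diagonalized, so real eigenvalues and an orthonormal eigenbasis come for free.
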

	However, the asymptotic result given in Proposition \ref{prop2a} essentially requires  $\widehat{C}_{\varphi,Z}^{-1}$ to converge in probability to the inverse of $\mathbb{E} [\mathcal E_t \otimes \mathcal E_t]$ (see the proof of Theorem 2 in \citeauthor{harris1997principal}, \citeyear{harris1997principal}). In an infinite dimensional setting, $\mathbb{E} [\mathcal E_t \otimes \mathcal E_t]$ does not allow its inverse as an element of $\mathcal L_{\mathcal H}$ and, moreover, $\widehat{C}_{\varphi,Z}^{-1}$ does not converge to any bounded linear operator since the inverse of the minimum eigenvalue of $\mathbb{E} [\mathcal E_t \otimes \mathcal E_t]$ diverges to infinity. This is the reason why Proposition \ref{prop2a}, unlike Theorem \ref{prop2}, is not applicable in an infinite dimensional setting. 
	
	\begin{remark} \label{remnewadded}	\normalfont 
		\bt{As in Section 3 of \cite{harris1997principal}, it may also be possible to develop Wald-type tests of various linear restrictions on the cointegrating vectors  $\mathcal H^S$ and they will have standard $\chi^2$ limiting distributions (due to the mixed Gaussianity induced by the modified estimator, see the proof of Theorem 4 of \citealp{harris1997principal}) with degrees of freedom increasing linearly in $\dim(\mathcal H)$. However, this approach explicitly requires a finite dimensional setting and thus not directly applicable to a more general potentially infinite dimensional setting. On the other hand, our tests developed in Section \ref{sec:inferenceatt} can be used regardless of if $\dim(\mathcal H) < \infty$ or not.}
	\end{remark}
	\section{General version of the test given in Proposition \ref{propkpss} } \label{sec_general_test}
	Let $\widehat{\Pi}^K_{\varphi_0}$ be the projection onto the span of $\{\hat{w}_j\}_{j=1}^K$ for any arbitrary finite integer $K$ in $(\varphi_0,\dim(\mathcal H)]$  (note that $K = \dim(\mathcal H)$ is possible only when $\dim(\mathcal H)<\infty$); that is, $\widehat{\Pi}^K_{\varphi_0} =  \sum_{j=1}^K \hat{w}_{j}\otimes \hat{w}_{j}$. Define
	\begin{equation} \label{defz}
		z_{\varphi_0,t} = (\langle X_{\varphi_0,t},\hat{w}_{\varphi_0+1}\rangle, \ldots, \langle X_{\varphi_0,t}, \widehat{w}_{K}\rangle)'.
	\end{equation}
	Once $\{\hat{w}_{j}\}_{j=1}^K$ are given, $z_{\varphi_0,t}$ can easily be  computed.  It should be noted that $z_{\varphi_0,t}$ may be understood as $\widehat{\Pi}^K_{\varphi_0}\widehat{\Pi}^S_{\varphi_0}{X}_{\varphi_0,t}$, i.e., the projected image of possibly infinite dimensional element $\widehat{\Pi}^S_{\varphi_0}{X}_{\varphi_0,t}$ on $\spn(\{\hat{w}_j\}_{j=1}^K)$ of dimension $K$; this $K$-dimensional time series of  $z_{\varphi_0,t}$  is all we need to construct our test statistic. 
	It may be helpful to outline how the sequence $\{z_{\varphi_0,t}\}_{t\geq 1}$ behaves under $H_0$ and $H_1$ prior to defining the test statistic. From the asymptotic results derived in Section \ref{sec:fpca},  the following  can be shown (Lemma \ref{lem3}): for any $\varphi_0 \leq \varphi$,  
	\begin{align*}
		&\hat{w}_j \to_p w_j \in \mathcal H^N,   \quad\quad j=1,\dots,\varphi, \\ 
		&\hat{w}_j \to_p w_j \in \mathcal H^S,  \,\quad\quad j=\varphi+1,\dots, K.
	\end{align*}
	From the above results, we expect that the subvector  $(\langle {X}_{\varphi_0,t},\hat{w}_{\varphi_0+1}\rangle, \ldots, \langle {X}_{\varphi_0,t}, \hat{w}_{\varphi}\rangle)'$   of $z_{\varphi_0,t}$ will behave as a unit root process while the remaining subvector will behave as a stationary process. Only when $\varphi_0 =\varphi$, $\{z_{\varphi_0,t}\}_{t\geq 1}$ will behave as a stationary process. Based on this idea, our test statistic is constructed to examine stationarity of $\{z_{\varphi_0,t}\}_{t\geq 1}$ as follows:   
	\begin{equation}
		\widehat{\mathcal Q}(K,\varphi_0) = \frac{1}{T^2}\sum_{t=1}^T \left(\sum_{s=1}^t z_{\varphi_0,s}\right)'\LRV(z_{\varphi_0,t})^{-1} \left(\sum_{s=1}^t z_{\varphi_0,s}\right),  \label{gkpsstest}
	\end{equation}
	where \begin{equation}
		\LRV(z_{\varphi_0,t}) \hspace{-0.2em}=\hspace{-0.2em}  \frac{1}{T}\sum_{t=1}^{T} z_{\varphi_0,t} z_{\varphi_0,t}'   \hspace{-0.1em}+\hspace{-0.1em} \frac{1}{T}\sum_{s=1}^{T-1}  \mathrm{k}\left(\frac{s}{h}\right) \sum_{t=s+1}^T \left\{z_{\varphi_0,t} z_{\varphi_0,t-s}' \hspace{-0.em}+\hspace{-0.1em} z_{\varphi_0,t-s} z_{\varphi_0,t}' \right\}. 
	\end{equation}
	The test statistic is similarly constructed as other KPSS-type statistics developed for multivariate cointegrated systems; see e.g.\ \cite{shin1994residual}, \cite{choi1995testing}, and \cite{harris1997principal}. Computation of our test statistic is easy  once the projected time series $\{z_{\varphi_0,t}\}_{t=1}^T$ is obtained from the modified eigenvalue problem \eqref{eigenptest}. If $K = \varphi_0+1$, the test statistic \eqref{gkpsstest} becomes identical to \eqref{eqkpssorigin2}.  	
	
	We hereafter let $\mathsf{B}$   and $\mathsf{W}$ denote \textit{independent} $\varphi_0$-dimensional and $(K-\varphi_0)$-dimensional standard Brownian motions, respectively.  The asymptotic properties of our test statistic are described by these independent Brownian motions  as follows.
	
	\begin{theorem} \label{prop3a} Suppose that Assumptions  \ref{assumm0}, \ref{assumW} and \ref{assum2} hold, and $K$ is a finite integer in $(\varphi_0,\dim(\mathcal H)]$. Under $H_0$ of \eqref{hypo0},
		\begin{equation*}
			\widehat{\mathcal Q}(K,\varphi_0)  \quad  \to_d \quad  \int V(r)'V(r),
		\end{equation*}
		where $V(r)=\mathsf{W}(r) - \int d\mathsf{W}(s) \mathsf{B}(s)' \left(\int \mathsf{B}(s)\mathsf{B}(s)' \right)^{-1}\int_{0}^{r} \mathsf{B}(s)$, and the second term is regarded as zero if $\varphi_0 = 0$.	Under $H_1$ of \eqref{hypo0}, $	\widehat{\mathcal Q}(K,\varphi_0) \to_p \infty$. 
	\end{theorem}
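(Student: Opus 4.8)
The plan is to treat the two hypotheses separately: under $H_0$ the $(K-\varphi_0)$-vector series $\{z_{\varphi_0,t}\}$ is asymptotically stationary and I would obtain the limit by a joint functional limit theorem for its partial sums and $\LRV(z_{\varphi_0,t})$ followed by the continuous mapping theorem; under $H_1$ it carries a unit root, forcing divergence. Consider $H_0$, so $\varphi_0=\varphi$. First I would record, from Theorem \ref{prop2} and Lemma \ref{lem3}, that $\hat w_j\to_p w_j$, where $\{w_j\}_{j\le\varphi_0}$ is an orthonormal basis of $\mathcal H^N$ and $\{w_j\}_{j=\varphi_0+1}^{K}$ is orthonormal in $\mathcal H^S$. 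The one extra fact about the eigenvectors I need is that $\widehat{\Pi}^N_{\varphi_0}\hat w_j=0$ for $j>\varphi_0$, so that, combined with $T(\widehat{\Pi}^N_{\varphi_0}-P^N)\to_{\mathcal L_{\mathcal H}}\mathrm{G}+\mathrm{G}^\ast$ (Theorem \ref{prop2}), the $\mathcal H^N$-component of each such eigenvector satisfies $T\,P^N\hat w_j=T(P^N-\widehat{\Pi}^N_{\varphi_0})\hat w_j\to_d-(\mathrm{G}+\mathrm{G}^\ast)w_j=-\mathrm{G}w_j\in\mathcal H^N$, using that $\mathrm{G}$ maps into $\mathcal H^N$ and $\mathrm{G}^\ast$ annihilates $\mathcal H^S$.

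Next I would insert the Phillips-Solo decomposition $X_s=\Phi(1)\sum_{u\le s}\varepsilon_u+\eta_s$ into $\sum_{s\le t}{X}_{\varphi_0,s}$ (with ${X}_{\varphi_0,t}$ as in \eqref{modvar}) and take inner products with $\hat w_j$, $j>\varphi_0$. Because $\mathcal H^S=\ker\Phi(1)^\ast$, the stochastic trend enters only through $\langle\Phi(1)\sum_{u\le s}\varepsilon_u,P^N\hat w_j\rangle$; pairing $T^{-3/2}\Phi(1)\sum_{s\le\lfloor Tr\rfloor}\sum_{u\le s}\varepsilon_u\Rightarrow\int_0^r W^N$ with the eigenvector rate above produces a contribution $-\langle\int_0^r W^N,\mathrm{G}w_j\rangle$. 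The remaining, stationary-scale terms (the $\eta$ part, the $\widehat{\Omega}$-correction in \eqref{modvar}, and all cross terms carrying $\hat w_j-w_j=o_p(1)$) I would control using Assumption \ref{assumW} together with the consistency of $\widehat{\Omega}^{SN}_{\varphi_0}$, $\widehat{\Omega}^{NN}_{\varphi_0}{\mid_{\varphi_0}^\dagger}$, $\widehat{P}^N_{\varphi_0}$ for $\Omega^{SN}$, $(\Omega^{NN})^\dagger$, $P^N$ established in the Supplementary Appendix; they yield $\langle W^{S.N}(r),w_j\rangle$. Stacking over $j=\varphi_0+1,\dots,K$, and writing $\beta$ for the coordinates of $W^N$ in an orthonormal frame of $\mathcal H^N$ and $\omega=(\langle W^{S.N}(\cdot),w_j\rangle)_{j>\varphi_0}$, this gives
$$T^{-1/2}\sum_{s\le\lfloor Tr\rfloor}z_{\varphi_0,s}\ \Rightarrow\ \omega(r)-N'M^{-1}\!\!\int_0^r\beta,\qquad M=\int\beta\beta',\quad N=\int\beta\,d\omega',$$
with $\omega\perp\beta$ since $W^{S.N}\perp W^N$ (Theorem \ref{prop2}). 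Alongside this I would show $\LRV(z_{\varphi_0,t})\to_p\Lambda:=(\langle\Omega^{S.N}w_k,w_j\rangle)_{j,k>\varphi_0}$, the long-run covariance of the stationary limit of $z_{\varphi_0,t}$, where $\Omega^{S.N}=\Omega^{SS}-\Omega^{SN}(\Omega^{NN})^\dagger\Omega^{NS}$; note $z_{\varphi_0,t}$ itself, unlike its partial sums, is asymptotically stationary because the trend contamination $\langle P^NX_t,P^N\hat w_j\rangle$ is only $O_p(T^{-1/2})$, and the task is to check that neither this term nor the eigenvector/long-run-covariance estimation errors pollute the kernel sum, which follows under Assumption \ref{assum2} from the mean-zero stationarity of the offending terms, as in \citet{horvath2013estimation}.

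The continuous mapping theorem then gives $\widehat{\mathcal Q}(K,\varphi_0)\to_d\int(\omega(r)-N'M^{-1}\int_0^r\beta)'\Lambda^{-1}(\omega(r)-N'M^{-1}\int_0^r\beta)\,dr$. Writing $\omega=\Lambda^{1/2}\mathsf{W}$ and $\beta=\Sigma^{1/2}\mathsf{B}$ for independent standard Brownian motions $\mathsf{W}$ (dimension $K-\varphi_0$) and $\mathsf{B}$ (dimension $\varphi_0$), with $\Sigma$ the covariance of $\beta$, the factors $\Sigma^{1/2}$ cancel inside $\beta(s)'(\int\beta\beta')^{-1}\int_0^r\beta$ and one obtains $\Lambda^{-1/2}(\omega(r)-N'M^{-1}\int_0^r\beta)=\mathsf{W}(r)-\int d\mathsf{W}(s)\mathsf{B}(s)'(\int\mathsf{B}\mathsf{B}')^{-1}\int_0^r\mathsf{B}(s)=V(r)$, so the limit equals $\int V(r)'V(r)$; when $\varphi_0=0$ there is no $\mathsf{B}$ and the second term drops out.

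Under $H_1$, $\varphi_0<\varphi$, and by Lemma \ref{lem3}, $\hat w_j\to_p w_j\in\mathcal H^N$ for $j=\varphi_0+1,\dots,\varphi$; for those indices $\langle {X}_{\varphi_0,t},\hat w_j\rangle$ retains the term $\langle\Phi(1)\sum_{u\le t}\varepsilon_u,\hat w_j\rangle$, so $T^{-1/2}\langle {X}_{\varphi_0,\lfloor Tr\rfloor},\hat w_j\rangle\Rightarrow\langle W^N(r),w_j\rangle$, a nondegenerate scalar Brownian motion ($\Omega^{NN}$ is positive definite on $\mathcal H^N$), and $z_{\varphi_0,t}$ has at least $\varphi-\varphi_0\ge1$ unit-root directions. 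Fixing a coordinate vector $e$ isolating one such direction, Cauchy--Schwarz gives $\widehat{\mathcal Q}(K,\varphi_0)\ge T^{-2}\sum_t(e'\sum_{s\le t}z_{\varphi_0,s})^2/(e'\LRV(z_{\varphi_0,t})e)$; the numerator is of exact order $T^2$ with a strictly positive limiting functional of an integrated Brownian motion, while $e'\LRV(z_{\varphi_0,t})e=O_p(Th)$, a kernel long-run-variance estimator applied to a unit-root series diverging at that rate, so the bound is of order $T/h\to_p\infty$ by Assumption \ref{assum2}-(ii); this is the mechanism familiar from KPSS-type tests for cointegration, cf.\ \citet{shin1994residual,choi1995testing,harris1997principal}. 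The part I expect to be hardest is the $H_0$ joint convergence: showing that the $T^{-1/2}$-consistent eigenvectors $\hat w_j$ ($j>\varphi_0$), whose $\mathcal H^N$-components are only $O_p(T^{-1})$, interact with the integrated stochastic trend to generate \emph{exactly} the regression term $N'M^{-1}\int_0^r\beta$ and nothing more, jointly with both $\omega$ and $\LRV(z_{\varphi_0,t})$, while preserving the independence $\mathsf{W}\perp\mathsf{B}$; ruling out contamination of the long-run-variance estimator under only Assumption \ref{assum2} is the other delicate, though routine, step.
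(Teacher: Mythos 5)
Your proposal is correct and follows essentially the same route as the paper's proof: decompose $T^{-1/2}\langle\sum_{s\le t}X_{\varphi_0,s},\hat w_j\rangle$ into a stationary part converging to $\langle W^{S.N}(r),w_j\rangle$ and a trend part driven by the $O_p(T^{-1})$ $\mathcal H^N$-component of $\hat w_j$ (your $-\mathrm{G}w_j$ is exactly the paper's $A^\ast$-term from Theorem \ref{prop2}), establish $\LRV(z_{\varphi_0,t})\to_p\Omega^{S.N}$, standardize, and under $H_1$ exploit the $T^{3/2}$ partial-sum rate against the $O_p(hT)$ long-run-variance rate. The only minor omission is the bookkeeping of the eigenvector sign indeterminacy $\sgn(\langle\hat w_j,w_j\rangle)$, which the paper tracks explicitly but which, as it notes, washes out of the quadratic form.
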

	
	\section{Inclusion of deterministic terms} \label{sec:deterministic}
	We now extend the main results given in Sections \ref{sec:fpca}, \ref{sec:test} and \ref{sec_general_test} to allow deterministic terms that may be included in the time series of interest. The proofs of the results in this section will be given later in Section \ref{appproof}. 
	
	In particular, we in this section consider the following unobserved component models:
	\begin{align}
		&\text{Model D1}\,:\, X_{c,t} = \mu_1 +X_t, \label{eqdeter1}\\
		&\text{Model D2}\,:\, X_{\ell,t} = \mu_1 + \mu_2t +X_t,\label{eqdeter2}
	\end{align}
	where $X_t$ is  a cointegrated time series considered in Section \ref{sec:fpca}.
	We define the functional residual $\overline{U}_{t}$  as in \cite{kokoszka2016kpss}: for $t=1,\ldots,T$,
	\begin{align*}
		&\overline{U}_t = \begin{cases}
			X_{c,t} - \frac{1}{T}\sum_{t=1}^T{X}_{c,t} \quad &\text{if Model D1 is true} \\ 
			X_{\ell,t} - \frac{1}{T}\sum_{t=1}^T{X}_{\ell,t} - \left( t-\frac{T+1}{2} \right) \frac{\sum_{t=1}^T \left( t-\frac{T+1}{2} \right) X_{\ell,t}}{\sum_{t=1}^T \left( t-\frac{T+1}{2} \right)^2} \quad &\text{if Model D2 is true}. 
		\end{cases} 
	\end{align*}
	The sample covariance operator of $\{\overline{U}_t\}_{t=1}^T$ is given by $\overline{C} = T^{-1} \sum_{t=1}^T \overline{U}_{t} \otimes \overline{U}_{t}$. 
	\subsection{Extension of the results given in Section \ref{sec:fpca}} \label{sec:deterministic1}
	Consider the eigenvalue problems given by 
	\begin{align} 
		&\overline{C} \,\overline{v}_{j} = \overline{\lambda}_j\, \overline{v}_{j} , \quad j=1,2\ldots. \label{eigenpadd}
	\end{align}
	We similarly obtain our preliminary estimator $\overline{P}^{N}_\varphi$  from \eqref{eigenpadd} as in \eqref{eqprelim}, and let $\overline{P}^{S}_\varphi = I - \overline{P}^{N}_\varphi$. As will be shown in Theorem \ref{prop3}, the asymptotic limits of these preliminary estimators depend on $\Omega^{NS}$ and $\Gamma^{NS}$.   Define for $t = 1,\ldots,T$,
	\begin{align*}
		&\overline{Z}_{\varphi,t} = \overline{P}^{N}_\varphi \Delta \overline{U}_{t} + \overline{P}^{S}_\varphi   \overline{U}_{t}. 
	\end{align*}
	The sample long run covariance and one-sided long run covariance of $\{\overline{Z}_{\varphi,t}\}_{t=1}^T$ are similarly defined as in \eqref{omegae}, and denoted by $\overline{\Omega}_\varphi$ (resp.\ $\overline{\Gamma}_\varphi$).  As in \eqref{eqdecom2}, we  consider  the following operator decompositions: for $i \in \{N,S\}$ and  $j \in \{N,S\}$,
	\begin{align*}
		&\overline{\Omega}_\varphi^{ij} =   \overline{P}^{i}_\varphi	\overline{\Omega}_\varphi \overline{P}_\varphi^{j},  \quad\quad 	\overline{\Gamma}_\varphi^{ij} =   \overline{P}^{i}_\varphi	\overline{\Gamma}_\varphi \overline{P}^{j}_\varphi.
	\end{align*}
	For $t = 1,\ldots,T$, define  
	\begin{align*}
		&\overline{U}_{\varphi,t} =  	\overline{U}_{t} - \overline{\Omega}_\varphi^{SN}\,\,\left(\overline{\Omega}^{NN}_{\varphi}{\mid_{\varphi}^\dagger}\right) \overline{P}_\varphi^{N}\Delta 	\overline{U}_{t}.
	\end{align*}
	As in Section \ref{sec:mfpca}, we let 
	\begin{align*}
		&\overline{C}_\varphi = T^{-1}\sum_{t=1}^T \overline{U}_{\varphi,t}\otimes\overline{U}_{\varphi,t}, \quad \quad \overline{\Upsilon}_\varphi = \overline{\Gamma}^{NS}_\varphi - \overline{\Gamma}^{NN}_\varphi\,\, (\overline{\Omega}^{NN}_{\varphi}{\mid_{\varphi}^\dagger}) \,\,\overline{\Omega}^{NS}_\varphi,
	\end{align*}
	and consider the following modified eigenvalue problems that are parallel to \eqref{eigenp2a},
	\begin{align*} 
		&\left(\overline{C}_\varphi - \overline{\Upsilon}_\varphi  - \overline{\Upsilon}_\varphi^\ast\right) \overline{w}_{j}  =  \overline{\mu}_{j}\, \overline{w}_{j}, \quad j=1,2,\ldots.
	\end{align*}
	We then construct $\overline{\Pi}^{N}_\varphi$ and  $\overline{\Pi}^{S}_\varphi$ as in \eqref{eqmod01}.
	
	{To describe the asymptotic properties of $\overline{P}^{N}_\varphi$ and $\overline{\Pi}^{N}_\varphi$, we hereafter let
		\begin{align}
			\overline{W}^N(r) = \begin{cases}
				{W}^N(r)-\int {W}^N(s)	 \quad  &\text{if Model D1 is true},  \\
				W^N(r) + (6r-4) \int W^N(s) + (6-12r)  \int s W^N(s)  \quad  &\text{if Model D2 is true}.
			\end{cases}
		\end{align}	
		Moreover, we let $\overline{\mathrm{F}}$ and $\overline{\mathrm{G}}$ be random bounded linear operators satisfying that
		\begin{align*}
			&\overline{\mathrm{F}}	=_{\fdd} \left(\int \overline{W}^N(r)\otimes \overline{W}^N(r)\right)^\dag\left(\int  d {W}^{S}(r)\otimes  \overline{W}^N(r) + \Gamma^{NS}\right),\\  
			&\overline{\mathrm{G}}	=_{\fdd} \left(\int \overline{W}^N(r)\otimes \overline{W}^N(r)\right)^\dag\left(\int  d {W}^{S.N}(r)\otimes  \overline{W}^N(r)\right). 
		\end{align*} where, as in Theorem \ref{prop2}, ${W}^{S.N}= W^S- \Omega^{SN}(\Omega^{NN})^\dag W^N$ and  ${W}^{S.N}$ is independent of $W^N$. The asymptotic properties of the estimators are given as follows.	
		
		\begin{theorem} \label{prop3} Suppose that Assumptions  \ref{assumm0}, \ref{assumW} and \ref{assum2} hold with $\varphi \geq 1$.
			\begin{align*}
				&T(\overline{P}^{N}_\varphi - P^N)\to_{\mathcal L_{\mathcal H}} \overline{\mathrm{F}} + \overline{\mathrm{F}}^\ast \,\,\, \text{and} \,\,\, T(\overline{\Pi}^{N}_\varphi - P^N) \to_{\mathcal L_{\mathcal H}}  \overline{\mathrm{G}} + \overline{\mathrm{G}}^\ast. 
			\end{align*}
		\end{theorem}  
		Moreover, as in Remark \ref{rem2}, it can be  shown without difficulty that $\overline{\Pi}^{N}_\varphi$ is more asymptotically efficient than $\overline{P}^{N}_\varphi$; see Remark \ref{rem2} and our detailed discussion to be given in Section \ref{appsecproof2}. 

		\subsection{Extension of the results given in Sections \ref{sec:test}  and \ref{sec_general_test}} \label{sec:deterministic2}
		For any hypothesized value $\varphi_0$,	we similarly construct $\overline{P}^N_{\varphi_0}$ and $\overline{P}^S_{\varphi_0}$ from  \eqref{eigenpadd}. Define $\overline{Z}_{\varphi_0,t}$, $\overline{\Omega}_{\varphi_0}$, $\overline{\Gamma}_{\varphi_0}$,  $\overline{\Upsilon}_{\varphi_0}$, $\overline{U}_{\varphi,t}$, and $\overline{C}_{\varphi_0}$ for each of Model D1 and Model D2 as in Section \ref{sec:deterministic}. We then consider the following modified eigenvalue problems, 
		\begin{align} 
			&\left(\overline{C}_{\varphi_0} - \overline{\Upsilon}_{\varphi_0} - \overline{\Upsilon}_{\varphi_0}^\ast\right) \overline{w}_{j} =\overline{\mu}_{j}  \overline{w}_{j} , \quad j=1,2,\ldots. \label{eigenptest2} 
		\end{align}
		As in \eqref{defz}, we define the following: for $K > \varphi_0$,
		\begin{align*}
			&\overline{z}_{\varphi,t} = (\langle \overline{U}_{\varphi,t}, \overline{w}_{\varphi_0+1}\rangle, \ldots, \langle \overline{U}_{\varphi,t}, \overline{w}_{K}\rangle)'. 
		\end{align*} 
		To examine the null and alternative hypotheses given in \eqref{hypo0}, we consider the following statistics for Model D1 and Model D2 respectively; 
		\begin{align*}
			&\overline{\mathcal Q}(K,\varphi_0) = \frac{1}{T^2}\sum_{t=1}^T \left(\sum_{s=1}^t\overline{z}_{\varphi,t}\right)'\LRV(\overline{z}_{\varphi,t})^{-1}  \left(\sum_{s=1}^t\overline{z}_{\varphi,t}\right).
		\end{align*}	
		To describe the asymptotic properties of $\overline{\mathcal Q}({K,\varphi_0})$, we  let  
		\begin{eqnarray}
			&\overline{\mathsf{B}}(r)=\begin{cases}
				{\mathsf{B}}(r)-\int \mathsf{B}(s) \quad &\text{if Model D1 is true},\\
				\mathsf{B}(r) + (6r-4)\int \mathsf{B}(s) + (6-12r) \int s\mathsf{B}(s)\quad &\text{if Model D2 is true},
			\end{cases}\\
			&\overline{\mathsf{W}}(r)=\begin{cases}
				\mathsf{W}(r)-r\mathsf{W}(1) \quad &\text{if Model D1 is true},\\
				\mathsf{W}(r) + (2r-3r^2)\mathsf{W}(1) + (6r^2-6r) \int \mathsf{W}(s)\quad &\text{if Model D2 is true},
			\end{cases}
		\end{eqnarray}	
		where  $\mathsf{B}$   and $\mathsf{W}$ are the independent Brownian motions that are defined for the case with no deterministic terms. 	
		The asymptotic properties of the statistics are given follows:
		\begin{theorem} \label{prop3aa}   Suppose that Assumptions  \ref{assumm0}, \ref{assumW} and \ref{assum2} hold, and $K$ is a finite integer in $(\varphi_0,\dim(\mathcal H)]$. Under $H_0$,  
			\begin{align*}
				&\overline{\mathcal Q}({K,\varphi_0})  \to_d \int \overline{V}(r)'\overline{V}(r),
			\end{align*}
			where $$\overline{V}(r) =  \overline{\mathsf{W}}(r) - \int d\mathsf{W}(s)\overline{\mathsf{B}}(s)'\left(\int {\overline{\mathsf{B}}(s)}{\overline{\mathsf{B}}(s)}'  \right)^{-1}\int_{0}^{r} \overline{\mathsf{B}}(s),$$
			and the second term in the expression of $\overline{V}$ is regarded as zero if $\varphi_0 = 0$.		Under $H_1$ and each of the models, $\overline{\mathcal Q}({K,\varphi_0}) \to_p \infty$.
		\end{theorem}
		Tables 2 and 3 of \cite{harris1997principal} report critical values for some choices of $K$ and $\varphi_0$. 
		At least to some extent, our test  may be viewed as an extension of the PCA-based test proposed by \cite{harris1997principal}, which extends the KPSS test for univariate/multivariate cointegrated time series. In such a finite dimensional setting, the test statistic \eqref{gkpsstest} can be reconstructed by the outputs from the PCA described in Section \ref{sec:harris} and $K$ can be set to $\dim(\mathcal H)$; this makes our test become identical to that given by \cite{harris1997principal}.

		It should be noted that  the asymptotic null distribution given in Theorem \ref{prop3a} depends on  $K$.  This is a new aspect that arises from the fact that the proposed test examines stationarity of potentially infinite dimensional time series  $\{\widehat{\Pi}^S_{\varphi_0} {X}_{\varphi_0,t}\}_{t \geq 1}$ by projecting it onto a $K$-dimensional subspace as in \cite{horvath2014test}.  As a consequence, critical values for the test statistic depend on both $K$ and $\varphi_0$; however,  for any fixed $K$ and $\varphi_0$, those can  easily be simulated by standard methods  since the limiting distribution of the test statistic is simply given by a functional of two independent standard Brownian motions. One can refer to Table 1 of \cite{harris1997principal} reporting critical values for a few different choices of  $K$ and $\varphi_0$,  with the caution that the table reports critical values depending on the dimension of time series and the cointegration rank, which correspond to $K$ and $K-\varphi_0$ in this paper, respectively. Even if Theorem \ref{prop3a} holds for any arbitrary finite integer $K$ in $(\varphi_0,\dim(\mathcal H)]$, it may be better, in practice, to set $K$ to be only slightly greater than $\varphi_0$, such as $K = \varphi_0+1$ or $\varphi_0+2$; our simulation results support that a large value of $K$ tends to yield worse finite sample properties of the test; see Section \ref{sec:monte} and also the discussion given by \citet[Section 3.5]{SS2019} in a similar context. 

		\section{Numerical studies} \label{sec:numerical}
		\subsection{Monte Carlo Simulations} \label{sec:monte}
		We first investigate the finite sample performances of our tests given in Sections \ref{sec:test}  and \ref{sec_general_test} by Monte Carlo study. 
		For all simulation experiments, the number of replications is 2000, and  the  nominal size is 5\%. 
		\subsubsection*{Simulation setups}
		Let $\{f_j\}_{j \geq 1}$ be the Fourier basis of $L^2[0,1]$, the Hilbert space of square integrable functions on $[0,1]$ equipped with inner product $\langle f,g\rangle = \int f(u) g(u)du$ for $f,g \in L^2[0,1]$. For each $\varphi$ $(\leq 5$, in our simulation expements), we let $\mathcal  E^N_t$ and $\mathcal E^S_t$ be generated by the following stationary functional AR(1) models: 
		\begin{equation} 
			\mathcal E^N_t =  \sum_{j=1}^{\varphi} \alpha_{j} {\langle g^N_{j}, \mathcal E_{t-1} \rangle }g^N_{j}  + P^N\varepsilon_t,  \quad \quad  
			\mathcal E^S_t =  \sum_{j=1}^{10} \beta_{j} {\langle g^S_{j}, \mathcal E_{t-1} \rangle }g^S_{j}  + P^S\varepsilon_t, \label{eqdgp11}
		\end{equation}
		where $\{g_j^N\}_{j=1}^\varphi$  (resp.\ $\{g_j^S\}_{j=1}^{10}$) are randomly drawn from   $\{f_1,\ldots,f_6\}$ (resp.\ $\{f_7,\ldots,f_{18}\}$) without replacement, and $\varepsilon_t$ is given as follows: for standard normal random variables  $\{\theta_{j,t}\}_{j \geq 1}$ that are independent across $j$ and $t$, $\varepsilon_t = \sum_{j=1}^{80} \theta_{j,t} (0.95)^{j-1} f_j$. 
		We then construct $X_t$ from the relationships $P^N \Delta X_t=\mathcal E^N_t$ and $P^S X_t= \mathcal E^S_t$ for $t\geq 1$. Note that $\mathcal H^N$ is given by the span of  $\{g^N_j\}_{j=1}^\varphi$, which is not fixed across different realizations of the DGP; moreover, an orthonormal basis $\{g^N_j\}_{j=1}^\varphi$ of $\mathcal H^N$ is selected only from a set of  smoother Fourier basis functions $\{f_j\}_{j=1}^{6}$. The former is to avoid potential effects caused by the particular shapes of $\mathcal H^N$ as in \cite{SS2019},  and the latter is not to make estimation of $\mathcal H^N$ with small samples too difficult: as expected from the results given by  \cite{SS2019}, the finite sample performances of tests for the dimension of $\mathcal H^N$  may become poorer as $\mathcal H^N$ includes less smooth functions. We also let  $\{\alpha_j\}_{j=1}^\varphi$ and $\{\beta_j\}_{j=1}^{10}$ be randomly chosen as follows:  for some $\beta_{\min}$ and $\beta_{\max}$,  
		\begin{align*} 
			\alpha_j  \sim  U[-0.5,0.5], \quad \beta_j \sim U[\beta_{\min},\beta_{\max}].
		\end{align*}
		It is known that persistence of the stationary component ($\{\mathcal E^S_t\}_{t \in \mathbb{Z}}$ in this paper) has a significant effect on  finite sample properties of KPSS-type tests (see e.g.,\ \citeauthor{kwiatkowski1992testing}, \citeyear{kwiatkowski1992testing}; \citeauthor{nyblom2000tests}, \citeyear{nyblom2000tests}). We thus will investigate  finite sample properties of our tests under a lower persistence scheme ($\beta_{\min}=0$, $\beta_{\max}= 0.5$) and  a higher persistence scheme ($\beta_{\min}=0.5$, $\beta_{\max}= 0.7$). It may be more common in practice to have a nonzero intercept or a linear time trend in \eqref{eqcointeg2}, and  Section \ref{sec:deterministic2}  provides a relevant extension of Theorem \ref{prop3a} to accommodate such a case. We here consider the former case and add an intercept $\mu_1$, which is also randomly chosen, to each realization of the DGP (some simulation results for the case with a linear trend are reported in Table \ref{tabcorank2trend}); specifically, $\mu_1 = \sum_{j=1}^4 {\tilde{\theta}_j p_{j} }/{\sqrt{ \sum_{j=1}^4 \tilde{\theta}_j^2}}$,  $\{\tilde{\theta}_j\}_{j=1}^4$ are independent standard normal random variables, and $p_j$ is $(j-1)$-th order Legendre polynomial defined on $[0,1]$.  Finally, functional observations used to compute the test statistic are constructed by smoothing $X_t$ observed on 200 regularly spaced points of $[0,1]$ using 20 quadratic B-spline basis functions (the choice of basis functions has minimal effect in our simulation setting).

		We need to specify 	the kernel function $\mathrm{k}(\cdot)$, the bandwidth $h$, and a positive integer $K$ satisfying $K>\varphi_0$ to implement our test. In our simulation experiments, $\mathrm{k}(\cdot)$ is set to the Parzen kernel, and two different values of $h$, $T^{1/3}$ and $T^{2/5}$, are employed to see the effect of the bandwidth parameter on the finite sample performances of our tests.  Moreover, our simulation results reported in this section are obtained by setting $K = \varphi_0 + 1$, which is the smallest possible choice of $K$.  We found some simulation evidence supporting that   this choice of $K$ tends to result in better finite sample properties; to see this, one can compare the results reported in Table \ref{tabcorank} and those in Tables \ref{tabcorank2} and \ref{tabcorank2a}.

		
		\subsubsection*{Simulation results} \label{sec:sim1}
		We investigate the finite sample performance of the proposed test for the dimension of $\mathcal H^N$.  In our simulation experiments, finite sample powers are computed when $\varphi = \varphi_0 + 1$; as $\varphi$ gets larger away from $\varphi_0$, our test tends to exhibit a better finite sample power as is expected (see Table  \ref{tabcorank2add} with Table  \ref{tabcorank}). Table  \ref{tabcorank} summarizes the simulation results under the two different persistence schemes.
		Under the lower persistence scheme ($\beta_{\min}=0$, $\beta_{\max}= 0.5$), for all considered values of $\varphi$ and $h$ the test has excellent size control  with a reasonably good finite sample power. On the other hand, it displays over-rejection under the higher persistence scheme ($\beta_{\min}=0.5$, $\beta_{\max}= 0.7$). Our simulation results evidently show that (i) such an over-rejection tends to disappear as $T$ gets larger and/or $h$ gets bigger, and (ii)  choosing a bigger bandwidth with fixed $T$ tends to lower finite sample power.  To summarize, employing a bigger bandwidth helps us avoid potential over-rejection, which can happen when $\{\mathcal E^S_t\}_{t\in \mathbb{Z}}$ is persistent, at the expense of power. This trade-off between correct size and power in finite samples seems to be commonly observed for other KPSS-type tests; see e.g.\  \cite{kwiatkowski1992testing} and \cite{nyblom2000tests}. 	
		
		
		We also investigate  finite sample properties of the tests for examining hypotheses about $\mathcal H^N$ or $\mathcal H^S$. Among many potentially interesting hypotheses, we consider the following: for a specific vector $x_0 \in \mathcal H$ and an orthonormal set $\{x_j\}_{j=1}^\varphi \subset \mathcal H$, 
		\begin{align}
			H_0 : x_0 \in \mathcal H^N \quad &\text{against} \quad  H_1 : x_0 \notin \mathcal H^N,   \label{eqh01}\\
			H_0 : \spn(\{x_j\}_{j=1}^\varphi) = \mathcal H^N \quad &\text{against} \quad H_1 :  \spn(\{x_j\}_{j=1}^\varphi) \neq \mathcal H^N. \label{eqh02}
		\end{align}
		Note that $\spn(\{g^N_{j}\}_{j=1}^\varphi) = \mathcal H^N$ and $g^S_1 \in \mathcal H^S$ in each realization of the DGP (see \eqref{eqdgp11}).  In our simulation experiments for \eqref{eqh01}, finite sample sizes and powers are computed by setting 
		\begin{align*}
			x_0 =  g_1^N + \frac{\gamma}{T}g_1^S, \quad \gamma = 0,20,40,60.
		\end{align*}
		Clearly, $x_0 = g_1^N \in \mathcal H^N$ if $\gamma=0$. On the other hand, $x_0$ deviates slightly  from $\mathcal H^N$ in the direction of $g_1^S$ when $\gamma>0$, but  such a deviation gets smaller as $T$ increases. 	In our experiments for \eqref{eqh02}, finite sample sizes and powers are computed by setting  
		\begin{align}
			x_1 = g^N_1 + \frac{\gamma}{T} g^S_1, \quad x_j = g^N_j  \text{ \,for }  j=2,\ldots,\varphi, \quad \gamma = 0, 20,40,60. \label{eqhp001}
		\end{align}
		Obviously, $\gamma = 0$ implies that $\spn(\{x_j\}_{j=1}^\varphi) = \mathcal H^N$ in \eqref{eqhp001} while $\gamma>0$ makes $\spn(\{x_j\}_{j=1}^\varphi)$  deviates slightly from $\mathcal H^N $ in the direction of $g_1^S$. The simulation results for our tests of the hypotheses \eqref{eqh01} and \eqref{eqh02} for a few different values of $\varphi$ are reported in 	Tables \ref{tabbottom01a} and \ref{tabbottom01};   since the proposed tests are essentially simple modifications of our  test for the dimension of $\mathcal H^N$, they seem to have similar finite sample properties to those we observed in Table \ref{tabcorank}.  We can also observe a trade-off between correct size and power when $\{\mathcal E^S_t\}_{t\in \mathbb{Z}}$ is persistent  in each of Tables \ref{tabbottom01a} and \ref{tabbottom01}. \bt{As mentioned in Remark \ref{remseq}, our test can  sequentially be applied to estimate $\varphi=\dim(\mathcal H^N)$ as an alternative method to the existing ones. In Table \ref{tabcorank2determination}, we compare ours with the recent testing procedure proposed by \cite{SS2019}. According to our simulation results, our proposed method in general tends to perform comparably with that of \cite{SS2019}. In panel (a) where the stationary component is not persistent, even if the test of \cite{SS2019} tends to be better when $\varphi$ is large, our test appears to exhibit some advantages when $T$ is small and $\varphi \leq 3$. Thus the results do not indicate any absolute advantage. On the other hand, in panel (b) where the stationary component is persistent, the performance of our testing procedure tends to be significantly better than that of \cite{SS2019}. This simulation evidence (with our additional simulation results which are reported in panel (c) of Table  \ref{tabcorank2determination}) suggests that ours can be an attractive alternative or complement to the test of \cite{SS2019}  for practitioners.}
		
		\bt{Among the considered tests in this section, our test for \eqref{eqh01} requires a prior knowledge on the true dimension $\varphi = \dim(\mathcal H^N)$ and the rejection frequencies given in Table \ref{tabbottom01a} are computed assuming that $\varphi$ is known. In Table \ref{tabadd2}, we replace $\varphi$ with $\hat{\varphi}$ obtained from our testing procedure (Remark \ref{remseq}). In this case, the magnitude of over-rejection is found to be bigger overall compared to that in the previous case with known $\varphi$. Such an over-rejection seems to be particularly severe when $\varphi$ is large and $T$ is small. This may be due to that inaccuracy of $\hat{\varphi}$ tends to be bigger when $\varphi$ is larger and/or $T$ is smaller (see Table \ref{tabcorank2determination}).}
		

		\begin{table}[b!]
			\caption{Rejection frequencies (\%) of the test for \eqref{hypo0},  $K = \varphi_0+1$}  \label{tabcorank}
			\renewcommand*{\arraystretch}{1} \vspace{-0.75em}
			\rule[2pt]{1\textwidth}{1.15pt}
			\begin{subtable}{.48\linewidth}
				\vspace{-0.0em}\caption{$\beta_j \sim U[0,0.5]$, \,\,\,$h=T^{1/3}$}\vspace{-0.4em}
				\begin{tabular*}{1\textwidth}{@{\extracolsep{\fill}}c@{\hspace{3\tabcolsep}}|c@{\hspace{1.25\tabcolsep}}cccc}
					\hline
					$T$ & ${\varphi_0=0}$ & {$1$} &{$2$} &{$3$} & {$4$} \\
					\hline 
					& 	\multicolumn{5}{c}{{size ($\varphi = \varphi_0$)}}\\
					250 &     4.6&  4.4 & 4.1 & 4.3  &4.5	 \\
					500 &    4.6&  5.2&  5.1 & 4.8&  4.5 \\	
					&&&&&\\[-11pt]
					&	\multicolumn{5}{c}{power ($\varphi = \varphi_0+1$)}\\	
					250 &  97.1& 89.0& 79.9& 73.9& 69.5	\\
					500 & 99.3 &96.5& 94.6 &92.5& 90.9
					\\ \hline
				\end{tabular*}
			\end{subtable} \quad
			\begin{subtable}{.48\linewidth}
				\vspace{-0.0em}\caption{$\beta_j \sim U[0,0.5]$, \,\,\,$h=T^{2/5}$}\vspace{-0.4em}
				\begin{tabular*}{1\textwidth}{@{\extracolsep{\fill}}c@{\hspace{3\tabcolsep}}|c@{\hspace{1.25\tabcolsep}}cccc}
					\hline
					$T$ & ${\varphi_0=0}$ & {$1$} &{$2$} &{$3$} & {$4$} \\
					\hline 
					& 	\multicolumn{5}{c}{{size ($\varphi = \varphi_0$)}}\\
					250 &  4.3 & 4.4&  4.4 & 4.0 & 4.4 \\
					500 &  4.4 & 5.2 & 5.2&  5.2&  4.4	\\	
					&&&&&\\[-11pt]
					&	\multicolumn{5}{c}{power ($\varphi = \varphi_0+1$)}\\	
					250 &  91.7& 76.6& 63.1& 54.9& 48.4				\\
					500 &   97.0& 90.4 &82.3& 78.9& 74.1				 \\ \hline
				\end{tabular*}
			\end{subtable}
			
			\vspace{0.5em}
			\begin{subtable}{.48\linewidth}
				\vspace{-0.2em}\caption{$\beta_j \sim U[0.5,0.7]$, \,\,\,$h=T^{1/3}$}\vspace{-0.4em}
				\begin{tabular*}{1\textwidth}{@{\extracolsep{\fill}}c@{\hspace{3\tabcolsep}}|c@{\hspace{1.25\tabcolsep}}cccc}
					\hline
					$T$ & ${\varphi_0=0}$ & {$1$} &{$2$} &{$3$} & {$4$}   \\
					\hline 
					& 	\multicolumn{5}{c}{{size ($\varphi = \varphi_0$)}}\\
					250 &  10.9& 10.6& 12.0& 11.2 &14.2	\\
					500 &   8.0 & 8.9  &9.4 & 9.2& 11.6	\\	
					&&&&&\\[-11pt]
					&	\multicolumn{5}{c}{power ($\varphi = \varphi_0+1$)}\\	
					250 &97.1& 89.2& 80.1& 74.0& 69.7		\\
					500 & 99.3 &96.5 &94.6& 92.6& 91.0\\ \hline
				\end{tabular*}
			\end{subtable} \quad
			\begin{subtable}{.48\linewidth}
				\vspace{-0.2em}\caption{$\beta_j \sim U[0.5,0.7]$, \,\,\,$h=T^{2/5}$}\vspace{-0.4em}
				\begin{tabular*}{1\textwidth}{@{\extracolsep{\fill}}c@{\hspace{3\tabcolsep}}|c@{\hspace{1.25\tabcolsep}}cccc}
					\hline
					$T$ & ${\varphi_0=0}$ & {$1$} &{$2$} &{$3$} & {$4$}  \\
					\hline 
					& 	\multicolumn{5}{c}{{size ($\varphi = \varphi_0$)}}\\
					250 & 8.4&  8.0&  9.6&  8.8& 10.8	\\
					500 &  6.0&  7.0&  6.8 & 7.6&  7.0		 \\	
					&&&&&\\[-11pt]
					&	\multicolumn{5}{c}{power ($\varphi = \varphi_0+1$)}\\	
					250 & 91.8 &76.5 &63.7 &55.0 &48.8				 \\
					500 &97.0& 90.3& 82.4& 78.9 &74.1	\\ \hline
				\end{tabular*}
			\end{subtable}
			\rule{1\textwidth}{1.25pt} 
		\end{table}

		\singlespacing
		\subsection{Supplementary simulation results} \label{app:addsim}
		\begin{table}[H]
			\caption{Supplementary results to Table \ref{tabcorank}}  \label{tabcorank2add}
			\renewcommand*{\arraystretch}{1}
			\rule[6pt]{1\textwidth}{1.25pt}
			\begin{subtable}{.48\linewidth}
				\vspace{-0.3em}\caption{$\beta_j \sim U[0,0.5]$,  \,\,\,$h=T^{1/3}$}\vspace{-0.4em}
				\begin{tabular*}{1\textwidth}{@{\extracolsep{\fill}}c|c@{\hspace{1.25\tabcolsep}}cccc}
					\hline
					$T$ & ${\varphi_0\hspace{-0.1cm}=\hspace{-0.1cm}0}$ & {$1$} &{$2$} &{$3$} & {$4$} \\
					\hline 
					&	\multicolumn{5}{c}{power ($\varphi = \varphi_0+2$)}\\	
					250 & 99.1 &95.4& 87.7 &81.9& 77.7	\\
					500 & 100.0& 98.8 &97.5 &96.1 &94.7	\\ \hline
				\end{tabular*}
			\end{subtable} \quad
			\begin{subtable}{.48\linewidth}
				\vspace{-0.3em}\caption{$\beta_j \sim U[0,0.5]$, \,\,\,$h=T^{2/5}$}\vspace{-0.4em}
				\begin{tabular*}{1\textwidth}{@{\extracolsep{\fill}}c|c@{\hspace{1.25\tabcolsep}}cccc}
					\hline
					$T$ & ${\varphi_0\hspace{-0.1cm}=\hspace{-0.1cm}0}$ & {$1$} &{$2$} &{$3$} & {$4$} \\
					\hline 
					&	\multicolumn{5}{c}{power ($\varphi = \varphi_0+2$)}\\	
					250 &97.8& 85.5& 71.8 &62.4& 55.9		\\
					500 &  99.5& 95.2 &89.7 &85.9 &82.1			\\ \hline
				\end{tabular*}
			\end{subtable}
			\vspace{0.3em}
			\begin{subtable}{.48\linewidth}
				\vspace{0.3em}\caption{$\beta_j \sim U[0.5,0.7]$, \,\,\, $h=T^{1/3}$}\vspace{-0.4em}
				\begin{tabular*}{1\textwidth}{@{\extracolsep{\fill}}c|c@{\hspace{1.25\tabcolsep}}cccc}
					\hline
					$T$ & ${\varphi_0\hspace{-0.1cm}=\hspace{-0.1cm}0}$ & {$1$} &{$2$} &{$3$} & {$4$}   \\
					\hline 
					&	\multicolumn{5}{c}{power ($\varphi = \varphi_0+2$)}\\	
					250 & 99.2& 95.4& 87.5 &82.1& 77.9	\\
					500 &100.0 &98.8 &97.5& 96.2& 94.8		\\ \hline
				\end{tabular*}
			\end{subtable} \quad
			\begin{subtable}{.48\linewidth}
				\vspace{0.3em}\caption{$\beta_j \sim U[0.5,0.7]$, \,\,\,$h=T^{2/5}$}\vspace{-0.4em}
				\begin{tabular*}{1\textwidth}{@{\extracolsep{\fill}}c|c@{\hspace{1.25\tabcolsep}}cccc}
					\hline
					$T$ & ${\varphi_0\hspace{-0.1cm}=\hspace{-0.1cm}0}$ & {$1$} &{$2$} &{$3$} & {$4$}  \\
					\hline 
					&	\multicolumn{5}{c}{power ($\varphi = \varphi_0+2$)}\\	
					250 &97.9& 85.5 &72.0& 62.5& 56.2\\
					500 &99.5 &95.2 &89.7 &85.8 &82.3		\\ \hline
				\end{tabular*}
			\end{subtable}
			\rule{1\textwidth}{1.25pt} 
		\end{table} 	
		
		\begin{table}[H]
			\caption{Simulation results for \eqref{hypo0}, $K=\varphi_0+2$}  \label{tabcorank2}
			\renewcommand*{\arraystretch}{1}
			\rule[6pt]{1\textwidth}{1.25pt}
			\begin{subtable}{.48\linewidth}
				\vspace{-0.3em}\caption{$\beta_j \sim U[0,0.5]$,  \,\,\,$h=T^{1/3}$}\vspace{-0.4em}
				\begin{tabular*}{1\textwidth}{@{\extracolsep{\fill}}c|c@{\hspace{1.25\tabcolsep}}cccc}
					\hline
					$T$ & ${\varphi_0\hspace{-0.1cm}=\hspace{-0.1cm}0}$ & {$1$} &{$2$} &{$3$} & {$4$} \\
					\hline 
					& 	\multicolumn{5}{c}{{size ($\varphi = \varphi_0$)}}\\
					250 &  4.6 & 4.8&  5.6  &4.9 & 4.6 \\
					500 &  5.6 & 4.8 & 5.2&  4.6 & 4.6	\\	
					&&&&&\\[-11pt]
					&	\multicolumn{5}{c}{power ($\varphi = \varphi_0+1$)}\\	
					250 &94.2& 77.9 &64.0& 54.7& 48.0			\\
					500 &97.9& 91.3& 85.0 &82.1 &74.9		\\ \hline
				\end{tabular*}
			\end{subtable} \quad
			\begin{subtable}{.48\linewidth}
				\vspace{-0.3em}\caption{$\beta_j \sim U[0,0.5]$, \,\,\,$h=T^{2/5}$}\vspace{-0.4em}
				\begin{tabular*}{1\textwidth}{@{\extracolsep{\fill}}c|c@{\hspace{1.25\tabcolsep}}cccc}
					\hline
					$T$ & ${\varphi_0\hspace{-0.1cm}=\hspace{-0.1cm}0}$ & {$1$} &{$2$} &{$3$} & {$4$} \\
					\hline 
					& 	\multicolumn{5}{c}{{size ($\varphi = \varphi_0$)}}\\
					250 &4.3&  4.0&  4.8 & 4.3 & 4.2	 \\
					500 &5.2 & 4.8 & 5.1  &4.9 & 4.0\\	
					&&&&&\\[-11pt]
					&	\multicolumn{5}{c}{power ($\varphi = \varphi_0+1$)}\\	
					250 & 88.4& 62.7& 44.6& 38.0 &32.0\\
					500 & 94.1& 79.8& 66.3& 61.4& 54.9		\\ \hline
				\end{tabular*}
			\end{subtable}
			
			\vspace{0.5em}
			\begin{subtable}{.48\linewidth}
				\vspace{0.3em}\caption{$\beta_j \sim U[0.5,0.7]$, \,\,\, $h=T^{1/3}$}\vspace{-0.4em}
				\begin{tabular*}{1\textwidth}{@{\extracolsep{\fill}}c|c@{\hspace{1.25\tabcolsep}}cccc}
					\hline
					$T$ & ${\varphi_0\hspace{-0.1cm}=\hspace{-0.1cm}0}$ & {$1$} &{$2$} &{$3$} & {$4$}   \\
					\hline 
					& 	\multicolumn{5}{c}{{size ($\varphi = \varphi_0$)}}\\
					250 & 14.2& 12.0& 13.4 &13.2& 14.6	\\
					500 &  9.6& 10.0&  9.6& 10.3& 12.0	\\	
					&&&&&\\[-11pt]
					&	\multicolumn{5}{c}{power ($\varphi = \varphi_0+1$)}\\	
					250 & 95.1& 80.6& 69.0& 61.0& 55.7			\\			
					500 &98.4& 92.4& 87.1 &85.0& 80.4	\\ \hline
				\end{tabular*}
			\end{subtable} \quad
			\begin{subtable}{.48\linewidth}
				\vspace{0.3em}\caption{$\beta_j \sim U[0.5,0.7]$, \,\,\,$h=T^{2/5}$}\vspace{-0.4em}
				\begin{tabular*}{1\textwidth}{@{\extracolsep{\fill}}c|c@{\hspace{1.25\tabcolsep}}cccc}
					\hline
					$T$ & ${\varphi_0\hspace{-0.1cm}=\hspace{-0.1cm}0}$ & {$1$} &{$2$} &{$3$} & {$4$}  \\
					\hline 
					& 	\multicolumn{5}{c}{{size ($\varphi = \varphi_0$)}}\\
					250 &9.6 & 8.6 &10.0 & 8.4& 10.5\\
					500 &  7.3&  7.6  &7.6 & 7.6  &7.0				 \\	
					&&&&&\\[-11pt]
					&	\multicolumn{5}{c}{power ($\varphi = \varphi_0+1$)}\\	
					250 &  90.1& 65.6& 49.2& 42.8& 36.6				 \\
					500 &  94.5 &80.9& 69.0 &64.3 &59.4	\\ \hline
				\end{tabular*}
			\end{subtable}
			\rule{1\textwidth}{1.25pt} 
		\end{table} 
		\begin{table}[H]
			\caption{Simulation results for \eqref{hypo0}, $K=\varphi_0+3$}  \label{tabcorank2a}
			\renewcommand*{\arraystretch}{1}
			\rule[6pt]{1\textwidth}{1.25pt}
			\begin{subtable}{.48\linewidth}
				\vspace{-0.3em}\caption{$\beta_j \sim U[0,0.5]$,  \,\,\,$h=T^{1/3}$}\vspace{-0.4em}
				\begin{tabular*}{1\textwidth}{@{\extracolsep{\fill}}c|c@{\hspace{1.25\tabcolsep}}cccc}
					\hline
					$T$ & ${\varphi_0\hspace{-0.1cm}=\hspace{-0.1cm}0}$ & {$1$} &{$2$} &{$3$} & {$4$} \\
					\hline 
					& 	\multicolumn{5}{c}{{size ($\varphi = \varphi_0$)}}\\
					250 &  4.0  &5.2 & 6.0 & 4.9  &4.3 \\
					500 &   4.0&  5.4 & 5.0&  4.2&  4.6	\\	
					&&&&&\\[-11pt]
					&	\multicolumn{5}{c}{power ($\varphi = \varphi_0+1$)}\\	
					250 & 91.4 &69.0& 50.6& 41.6& 36.4		\\
					500 &96.8 &86.6& 73.7 &69.6& 64.9		\\ \hline
				\end{tabular*}
			\end{subtable} \quad
			\begin{subtable}{.48\linewidth}
				\vspace{-0.3em}\caption{$\beta_j \sim U[0,0.5]$, \,\,\,$h=T^{2/5}$}\vspace{-0.4em}
				\begin{tabular*}{1\textwidth}{@{\extracolsep{\fill}}c|c@{\hspace{1.25\tabcolsep}}cccc}
					\hline
					$T$ & ${\varphi_0\hspace{-0.1cm}=\hspace{-0.1cm}0}$ & {$1$} &{$2$} &{$3$} & {$4$} \\
					\hline 
					& 	\multicolumn{5}{c}{{size ($\varphi = \varphi_0$)}}\\
					250 & 3.5 & 4.8&  5.2&  4.3&  3.9	 \\
					500 & 4.1 & 5.0&  4.4&  3.4  &4.0\\	
					&&&&&\\[-11pt]
					&	\multicolumn{5}{c}{power ($\varphi = \varphi_0+1$)}\\	
					250 & 85.0 &51.6& 32.4 &26.7& 23.0\\
					500 &  92.2 &70.2 &52.4& 46.9& 42.8		\\ \hline
				\end{tabular*}
			\end{subtable}
			
			\vspace{0.5em}
			\begin{subtable}{.48\linewidth}
				\vspace{0.3em}\caption{$\beta_j \sim U[0.5,0.7]$, \,\,\, $h=T^{1/3}$}\vspace{-0.4em}
				\begin{tabular*}{1\textwidth}{@{\extracolsep{\fill}}c|c@{\hspace{1.25\tabcolsep}}cccc}
					\hline
					$T$ & ${\varphi_0\hspace{-0.1cm}=\hspace{-0.1cm}0}$ & {$1$} &{$2$} &{$3$} & {$4$}   \\
					\hline 
					& 	\multicolumn{5}{c}{{size ($\varphi = \varphi_0$)}}\\
					250 & 13.6& 13.6& 14.2 &13.0& 15.0	\\
					500 & 10.7& 10.3& 10.0& 10.1& 12.0	\\	
					&&&&&\\[-11pt]
					&	\multicolumn{5}{c}{power ($\varphi = \varphi_0+1$)}\\	
					250 &  93.4& 73.3& 58.3& 50.4& 47.1			\\			
					500 & 97.5& 88.4& 78.3& 76.4& 72.4	\\ \hline
				\end{tabular*}
			\end{subtable} \quad
			\begin{subtable}{.48\linewidth}
				\vspace{0.3em}\caption{$\beta_j \sim U[0.5,0.7]$, \,\,\,$h=T^{2/5}$}\vspace{-0.4em}
				\begin{tabular*}{1\textwidth}{@{\extracolsep{\fill}}c|c@{\hspace{1.25\tabcolsep}}cccc}
					\hline
					$T$ & ${\varphi_0\hspace{-0.1cm}=\hspace{-0.1cm}0}$ & {$1$} &{$2$} &{$3$} & {$4$}  \\
					\hline 
					& 	\multicolumn{5}{c}{{size ($\varphi = \varphi_0$)}}\\
					250 & 8.7 & 8.4  &9.0&  8.1&  9.6\\
					500 & 7.0 & 7.1 & 6.9 & 6.8 & 6.6	 \\	
					&&&&&\\[-11pt]
					&	\multicolumn{5}{c}{power ($\varphi = \varphi_0+1$)}\\	
					250 &87.6 &55.0 &37.6&31.1& 29.4			 \\
					500 & 93.2& 72.3& 56.0& 51.2 &48.2	\\ \hline
				\end{tabular*}
			\end{subtable}
			\rule{1\textwidth}{1.25pt} 
		\end{table}

		\begin{table}[H]
			\caption{Simulation results for \eqref{hypo0}, $K=\varphi_0+1$ (trend-adjusted)}  \label{tabcorank2trend}
			\renewcommand*{\arraystretch}{1}
			\rule[6pt]{1\textwidth}{1.25pt}
			\begin{subtable}{.48\linewidth}
				\vspace{-0.3em}\caption{$\beta_j \sim U[0,0.5]$,  \,\,\,$h=T^{1/3}$}\vspace{-0.4em}
				\begin{tabular*}{1\textwidth}{@{\extracolsep{\fill}}c|c@{\hspace{1.25\tabcolsep}}cccc}
					\hline
					$T$ & ${\varphi_0\hspace{-0.1cm}=\hspace{-0.1cm}0}$ & {$1$} &{$2$} &{$3$} & {$4$} \\
					\hline 
					& 	\multicolumn{5}{c}{{size ($\varphi = \varphi_0$)}}\\
					250 &  4.3 & 5.1&  4.8 & 5.8&  5.6 \\
					500 &  4.4 & 4.4 & 5.1 & 5.2 & 4.7	\\	
					&&&&&\\[-11pt]
					&	\multicolumn{5}{c}{power ($\varphi = \varphi_0+1$)}\\	
					250 & 98.6& 95.5& 89.3& 82.4& 76.8	\\
					500 & 99.9 &99.4& 98.5& 97.2& 95.2	\\ \hline
				\end{tabular*}
			\end{subtable} \quad
			\begin{subtable}{.48\linewidth}
				\vspace{-0.3em}\caption{$\beta_j \sim U[0,0.5]$, \,\,\,$h=T^{2/5}$}\vspace{-0.4em}
				\begin{tabular*}{1\textwidth}{@{\extracolsep{\fill}}c|c@{\hspace{1.25\tabcolsep}}cccc}
					\hline
					$T$ & ${\varphi_0\hspace{-0.1cm}=\hspace{-0.1cm}0}$ & {$1$} &{$2$} &{$3$} & {$4$} \\
					\hline 
					& 	\multicolumn{5}{c}{{size ($\varphi = \varphi_0$)}}\\
					250 &  4.0 & 5.0 & 4.9&  5.3& 5.1	 \\
					500 & 4.0 & 4.4&  5.0&  4.8&  4.0\\	
					&&&&&\\[-11pt]
					&	\multicolumn{5}{c}{power ($\varphi = \varphi_0+1$)}\\	
					250 &94.7& 85.3 &74.4& 62.1& 54.1					\\
					500 &98.7& 95.7 &91.6& 85.9& 82.8	\\ \hline
				\end{tabular*}
			\end{subtable}
			
			\vspace{0.5em}
			\begin{subtable}{.48\linewidth}
				\vspace{0.3em}\caption{$\beta_j \sim U[0.5,0.7]$, \,\,\, $h=T^{1/3}$}\vspace{-0.4em}
				\begin{tabular*}{1\textwidth}{@{\extracolsep{\fill}}c|c@{\hspace{1.25\tabcolsep}}cccc}
					\hline
					$T$ & ${\varphi_0\hspace{-0.1cm}=\hspace{-0.1cm}0}$ & {$1$} &{$2$} &{$3$} & {$4$}   \\
					\hline 
					& 	\multicolumn{5}{c}{{size ($\varphi = \varphi_0$)}}\\
					250 &  12.8& 14.5& 16.0& 16.8 &17.6	\\
					500 &    8.9 &12.0 &10.4& 12.9& 13.2\\	
					&&&&&\\[-11pt]
					&	\multicolumn{5}{c}{power ($\varphi = \varphi_0+1$)}\\	
					250 & 98.5& 95.7& 89.1& 82.1& 76.0	\\
					500 & 99.9& 99.4& 98.6& 97.2& 95.0	\\ \hline
				\end{tabular*}
			\end{subtable} \quad
			\begin{subtable}{.48\linewidth}
				\vspace{0.3em}\caption{$\beta_j \sim U[0.5,0.7]$, \,\,\,$h=T^{2/5}$}\vspace{-0.4em}
				\begin{tabular*}{1\textwidth}{@{\extracolsep{\fill}}c|c@{\hspace{1.25\tabcolsep}}cccc}
					\hline
					$T$ & ${\varphi_0\hspace{-0.1cm}=\hspace{-0.1cm}0}$ & {$1$} &{$2$} &{$3$} & {$4$}  \\
					\hline 
					& 	\multicolumn{5}{c}{{size ($\varphi = \varphi_0$)}}\\
					250 &  9.6& 10.6& 12.0& 12.9 &14.9\\
					500 &6.2 & 8.8 & 7.0 & 9.3 & 8.5 \\	
					&&&&&\\[-11pt]
					&	\multicolumn{5}{c}{power ($\varphi = \varphi_0+1$)}\\	
					250 &94.7& 85.5 &74.6 &62.9 &54.6	 \\
					500 &98.7& 95.7& 91.8 &86.0& 82.8	\\ \hline
				\end{tabular*}
			\end{subtable}
			\rule{1\textwidth}{1.25pt} 
			{\footnotesize Notes: For each realization of the DGP of Model D2, $\mu_1 = \sum_{j=1}^4 {\tilde{\theta}_j p_{j} }/{\sqrt{ \sum_{j=1}^4 \tilde{\theta}_j^2}}$, $\{\tilde{\theta}_j\}_{j=1}^4$ are independent copies of $N(0,1)$, and $p_j$ is $(j-1)$-th order Legendre polynomial defined on $[0,1]$. $\mu_2$ is generated in a similar manner.} 
		\end{table}

		\begin{table}[H]
			\caption{Rejection frequencies (\%) of the test for \eqref{eqh01}} 
			\renewcommand*{\arraystretch}{1}
			\label{tabbottom01a} \vspace{-0.75em}
			\rule[2pt]{1\textwidth}{1.25pt}
			\begin{subtable}{.48\linewidth}
				\vspace{-0.0em}\caption{$\beta_j \sim U[0,0.5]$, \,\,\,$h=T^{1/3}$}\vspace{-0.4em}
				\begin{tabular*}{1\textwidth}{@{\extracolsep{\fill}}c@{\extracolsep{\fill}}c@{\hspace{3\tabcolsep}}|c@{\hspace{1.25\tabcolsep}}ccc}
					\hline
					$T$ & $\gamma/T$ & ${\varphi=1}$ & {$2$} &{$3$} &{$4$}   \\
					\hline 
					250 & 0&   4.6  &4.8&  5.1  &5.2 \\
					500 & 0&   4.6&  4.3 &4.9 & 4.4\\	
					&&&&&\\[-12pt]
					250 & 0.08& 26.2& 19.9& 15.8& 14.3\\
					500 & 0.04& 14.1&  9.9 &10.6 & 8.8	\\	
					&&&&&\\[-12pt]	
					250 & 0.16&61.1& 50.7& 41.2&38.6\\
					500 & 0.08& 42.9 &33.0 &28.2& 27.6	 \\	
					&&&&&\\[-12pt]	
					250 & 0.24&81.1& 72.1 &62.& 59.6	\\
					500 & 0.12&67.7& 56.3 &49.8& 47.8			\\ 
					\hline
				\end{tabular*}
			\end{subtable} \quad
			\begin{subtable}{.48\linewidth}
				\vspace{-0.0em}\caption{$\beta_j \sim U[0,0.5]$, \,\,\,$h=T^{2/5}$}\vspace{-0.4em}
				\begin{tabular*}{1\textwidth}{@{\extracolsep{\fill}}c@{\extracolsep{\fill}}c@{\hspace{3\tabcolsep}}|c@{\hspace{1.25\tabcolsep}}ccc}
					\hline
					$T$ & $\gamma/T$ & ${\varphi=1}$ & {$2$} &{$3$} &{$4$}  \\
					\hline 
					250 & 0& 4.3 & 4.4 & 4.2 & 4.0 \\
					500 & 0& 4.4&  4.1 & 4.8 & 4.2	\\	
					&&&&&\\[-12pt]	
					250 & 0.08& 25.8& 20.6& 16.8 &14.6 \\
					500 & 0.04& 13.6& 10.7& 11.4& 10.0\\	
					&&&&&\\[-12pt]	
					250 & 0.16&60.0 &50.0& 40.4 &38.4			 \\
					500 & 0.08& 42.1& 34.6 &32.0& 30.6		 \\	
					&&&&&\\[-12pt]	
					250 & 0.24& 79.0& 68.0 &57.5& 54.3	\\
					500 & 0.12&66.7& 57.6& 53.4& 51.1			\\ 
					\hline
				\end{tabular*}
			\end{subtable}
			\quad
			\begin{subtable}{.48\linewidth}
				\vspace{0.3em}\caption{$\beta_j \sim U[0.5,0.7]$, \,\,\,$h=T^{1/3}$}\vspace{-0.4em}
				\begin{tabular*}{1\textwidth}{@{\extracolsep{\fill}}c@{\extracolsep{\fill}}c@{\hspace{3\tabcolsep}}|c@{\hspace{1.25\tabcolsep}}ccc}
					\hline
					$T$ & $\gamma/T$ & ${\varphi=1}$ & {$2$} &{$3$} &{$4$}  \\
					\hline 
					250 & 0&   10.7& 11.5& 10.8 &13.0	 \\
					500 & 0&   8.8 & 8.2&  9.5& 10.8		\\		
					&&&&&\\[-12pt]	
					250 & 0.08& 37.4& 31.4& 25.2& 23.2		 \\
					500 & 0.04&26.4& 20.3& 17.8& 18.4		\\	
					&&&&&\\[-12pt]	
					250 & 0.16&71.3& 57.5& 46.6 &44.0			 \\
					500 & 0.08&56.5 &44.0& 38.4& 35.4			 \\	
					&&&&&\\[-12pt]	
					250 & 0.24&  84.4 &74.0 &63.0 &59.7		\\
					500 & 0.12&   76.0& 64.7& 58.0& 54.6		\\ 
					\hline
				\end{tabular*}
			\end{subtable}
			\quad
			\begin{subtable}{.48\linewidth}
				\vspace{0.3em}\caption{$\beta_j \sim U[0.5,0.7]$, \,\,\,$h=T^{2/5}$}\vspace{-0.4em}
				\begin{tabular*}{1\textwidth}{@{\extracolsep{\fill}}c@{\extracolsep{\fill}}c@{\hspace{3\tabcolsep}}|c@{\hspace{1.25\tabcolsep}}ccc}
					\hline
					$T$ & $\gamma/T$ & ${\varphi=1}$ & {$2$} &{$3$} &{$4$}  \\
					\hline 
					250 & 0&  7.5 & 8.7&  8.4&  9.0\\
					500 & 0&  7.0 & 6.0  &7.4 & 7.6			\\	
					&&&&&\\[-12pt]	
					250 & 0.04& 33.8 &26.4 &21.6& 18.4 \\
					500 & 0.08& 24.0 &18.9& 16.2 &15.2	\\	
					&&&&&\\[-12pt]	
					250 & 0.16& 67.0 &51.7& 41.0& 35.4		 \\
					500 & 0.08& 53.9 &41.5& 36.0& 31.6		 \\	
					&&&&&\\[-12pt]	
					250 & 0.24&80.6& 65.9& 53.9 &47.4	\\
					500 & 0.12& 73.5& 61.7& 55.7& 49.9	\\ 
					\hline
				\end{tabular*}
			\end{subtable}
			\rule{1\textwidth}{1.25pt} 
		\end{table}  
		
		\begin{table}[H]
			\caption{Rejection frequencies (\%) of the test for \eqref{eqh02}} 
			\renewcommand*{\arraystretch}{1}
			\label{tabbottom01}  \vspace{-0.75em}
			
			\rule[2pt]{1\textwidth}{1.25pt}
			\begin{subtable}{.48\linewidth}
				\vspace{-0.0em}\caption{$\beta_j \sim U[0,0.5]$, \,\,\,$h=T^{1/3}$}\vspace{-0.4em}
				\begin{tabular*}{1\textwidth}{@{\extracolsep{\fill}}c@{\extracolsep{\fill}}c@{\hspace{3\tabcolsep}}|c@{\hspace{1.25\tabcolsep}}ccc}
					\hline
					$T$ & $\gamma/T$ & ${\varphi=1}$ & {$2$} &{$3$} &{$4$}   \\
					\hline 
					250 & 0&   4.6&  4.4 & 4.3  &4.8 \\
					500 & 0& 4.8&  4.8&  4.8  &4.6		 \\	
					&&&&&\\[-12pt]
					250 & 0.08&  24.2& 25.4& 27.6& 30.9	\\
					500 & 0.04&  14.7& 14.4 &16.0 &18.2\\	
					&&&&&\\[-12pt]	
					250 & 0.16&  60.7& 61.3& 64.2 &67.0			\\
					500 & 0.08& 42.0& 45.2& 47.4& 48.0 \\	
					&&&&&\\[-12pt]	
					250 & 0.24& 81.1& 80.0& 82.7 &83.8	\\
					500 & 0.12& 66.2 &68.8& 71.5& 72.3	\\ 
					\hline
				\end{tabular*}
			\end{subtable} \quad
			\begin{subtable}{.48\linewidth}
				\vspace{-0.0em}\caption{$\beta_j \sim U[0,0.5]$, \,\,\,$h=T^{2/5}$}\vspace{-0.4em}
				\begin{tabular*}{1\textwidth}{@{\extracolsep{\fill}}c@{\extracolsep{\fill}}c@{\hspace{3\tabcolsep}}|c@{\hspace{1.25\tabcolsep}}ccc}
					\hline
					$T$ & $\gamma/T$ & ${\varphi=1}$ & {$2$} &{$3$} &{$4$}  \\
					\hline 
					250 & 0& 4.4&  4.4 & 4.2&  4.3 \\
					500 & 0&   4.8&  4.6 & 4.6  &4.8	\\	
					&&&&&\\[-12pt]	
					250 & 0.08& 24.0& 24.9& 26.6& 29.5		 \\
					500 & 0.04&   14.6& 14.2& 15.6& 17.8		\\	
					&&&&&\\[-12pt]	
					250 & 0.16& 58.8& 59.8 &61.9& 64.9	 \\
					500 & 0.08&41.6 &44.7& 46.9& 47.4		 \\	
					&&&&&\\[-12pt]	
					250 & 0.24&79.0& 77.9& 79.8 &80.7	\\
					500 & 0.12&64.9& 68.0& 70.5 &71.1\\ 
					\hline
				\end{tabular*}
			\end{subtable}
			\quad
			\begin{subtable}{.48\linewidth}
				\vspace{0.3em}\caption{$\beta_j \sim U[0.5,0.7]$, \,\,\,$h=T^{1/3}$}\vspace{-0.4em}
				\begin{tabular*}{1\textwidth}{@{\extracolsep{\fill}}c@{\extracolsep{\fill}}c@{\hspace{3\tabcolsep}}|c@{\hspace{1.25\tabcolsep}}ccc}
					\hline
					$T$ & $\gamma/T$ & ${\varphi=1}$ & {$2$} &{$3$} &{$4$}  \\
					\hline 
					250 & 0&12.8& 12.9& 13.4& 14.0 \\
					500 & 0&  8.7 & 9.8 & 9.4 &11.0	\\	
					&&&&&\\[-12pt]	
					250 & 0.08&38.8 &41.0& 40.6& 43.0			 \\
					500 & 0.04& 26.2& 29.2 &28.6& 31.1		\\	
					&&&&&\\[-12pt]	
					250 & 0.16&68.7& 69.3& 70.1& 72.5			 \\
					500 &  0.08& 53.8 &59.1& 57.9 &59.2				 \\	
					&&&&&\\[-12pt]	
					250 & 0.24&  85.0 &83.6& 85.2 &86.3	\\
					500 & 0.12&74.3 &76.3& 78.9& 78.4\\ 
					\hline
				\end{tabular*}
			\end{subtable}
			\quad
			\begin{subtable}{.48\linewidth}
				\vspace{0.3em}\caption{$\beta_j \sim U[0.5,0.7]$, \,\,\,$h=T^{2/5}$}\vspace{-0.4em}
				\begin{tabular*}{1\textwidth}{@{\extracolsep{\fill}}c@{\extracolsep{\fill}}c@{\hspace{3\tabcolsep}}|c@{\hspace{1.25\tabcolsep}}ccc}
					\hline
					$T$ & $\gamma/T$ & ${\varphi=1}$ & {$2$} &{$3$} &{$4$}  \\
					\hline 
					250 & 0& 9.4 & 9.0 &10.2&  9.6\\
					500 & 0&  6.7 & 7.4 & 7.2 & 7.6		\\	
					&&&&&\\[-12pt]	
					250 & 0.04& 35.0& 36.2& 36.0 &37.6 \\
					500 & 0.08& 23.9 &25.4& 25.2 &27.0		\\	
					&&&&&\\[-12pt]	
					250 & 0.16& 64.8 &65.2& 65.5& 67.4			 \\
					500 & 0.08& 51.3 &55.4 &54.6& 55.1	 \\	
					&&&&&\\[-12pt]	
					250 & 0.24& 81.0 &79.7& 80.6 &81.6	\\
					500 & 0.12& 71.9 &74.1 &76.3 &75.0\\ 
					\hline
				\end{tabular*}
			\end{subtable}
			\rule{1\textwidth}{1.25pt} 
		\end{table} 	
		\begin{table}[H]
			\caption{Relative frequencies of correct determination of $\varphi$ } \centering \label{tabcorank2determination}
			\renewcommand*{\arraystretch}{1}
			\rule[6pt]{1\textwidth}{1.25pt}
			\begin{subtable}{.9\linewidth} \vspace{1.25em}
				\vspace{-0.0em}\caption{$\beta_j \sim U[0,0.5]$}\label{tabcorank2determination1}\vspace{-0.0em}
				\bt{\begin{tabular*}{1\textwidth}{c@{\extracolsep{\fill}}c|c@{\hspace{1.25\tabcolsep}}ccccc}
						\hline
						Method	&$T$ & ${\varphi\hspace{-0.1cm}=\hspace{-0.1cm}0}$ & {$1$} &{$2$} &{$3$} & {$4$} & $5$ \\			\hline 
						$h=T^{1/3}$ &	250 &  0.945 &0.931 &0.847 &0.742& 0.616& 0.488\\
						&	500 &  0.951& 0.941 &0.920& 0.889 &0.852& 0.825
						\\\hline 	
						$h=T^{2/5}$ &	250 & 0.946& 0.899& 0.713& 0.531& 0.324& 0.200	\\
						&	500 &0.950 &0.924& 0.845& 0.745&0.642& 0.568	\\ \hline 
						NSS$_1$ &	250 & 0.626& 0.529& 0.480& 0.498& 0.472& 0.454	\\
						&	500 &  0.994 &0.861 &0.891& 0.901& 0.913 &0.930			\\ \hline 
						NSS$_2$ &	250 & 0.618& 0.517& 0.471& 0.481& 0.442 &0.436	\\
						&	500 &  0.993& 0.859& 0.893& 0.900& 0.910& 0.929		\\ \hline 	
				\end{tabular*}}
			\end{subtable} \quad
			\begin{subtable}{.9\linewidth} \vspace{1.75em}
				\caption{$\beta_j \sim U[0.5,0.7]$}\label{tabcorank2determination2}\vspace{-0.0em}
				\bt{\begin{tabular*}{1\textwidth}{c@{\extracolsep{\fill}}c|c@{\hspace{1.25\tabcolsep}}ccccc}
						\hline
						Method	&$T$ & ${\varphi\hspace{-0.1cm}=\hspace{-0.1cm}0}$ & {$1$} &{$2$} &{$3$} & {$4$} & $5$ \\			\hline 
						$h=T^{1/3}$ &	250 & 0.874& 0.874& 0.784& 0.687& 0.554 &0.450 \\
						&	500 & 0.916& 0.904& 0.878& 0.847& 0.791& 0.775
						\\\hline 	
						$h=T^{2/5}$ &	250 & 0.906 &0.869 &0.682& 0.509& 0.310& 0.190	\\
						&	500 & 0.933& 0.905& 0.831& 0.726& 0.625& 0.548
						\\ \hline 
						NSS$_1$ &	250 & 0.076 &0.072 &0.040 &0.032& 0.015 &0.011	\\
						&	500 & 0.706& 0.546& 0.531& 0.480& 0.450& 0.399
						\\ \hline 
						NSS$_2$ &	250 & 0.066 &0.062 &0.032& 0.028 &0.010& 0.008	\\
						&	500 & 0.698& 0.535& 0.515& 0.466& 0.432& 0.368	\\ \hline 	
				\end{tabular*}}
			\end{subtable}
			\begin{subtable}{.9\linewidth} \vspace{1.75em}
				\caption{$\beta_j \sim U[0,0.7]$}\label{tabcorank2determination3}\vspace{-0.0em}
				\bt{\begin{tabular*}{1\textwidth}{c@{\extracolsep{\fill}}c|c@{\hspace{1.25\tabcolsep}}ccccc}
						\hline
						Method	&$T$ & ${\varphi\hspace{-0.1cm}=\hspace{-0.1cm}0}$ & {$1$} &{$2$} &{$3$} & {$4$} & $5$ \\			\hline 
						$h=T^{1/3}$ &	250 &   0.922& 0.905& 0.823& 0.721& 0.594 &0.469\\
						&	500 &  0.941& 0.929 &0.908& 0.872& 0.831& 0.802
						\\\hline 	
						$h=T^{2/5}$ &	250 & 0.936& 0.888&0.701& 0.525& 0.318 &0.198	\\
						&	500 & 0.945& 0.917&0.843& 0.732 &0.631& 0.553
						\\ \hline 
						NSS$_1$ &	250 & 0.398 &0.340& 0.279& 0.263& 0.218& 0.192	\\
						&	500 & 0.955& 0.777& 0.801& 0.807& 0.805& 0.801				\\ \hline 
						NSS$_2$ &	250 & 0.376& 0.326& 0.263& 0.245& 0.202& 0.168	\\
						&	500 & 0.951 &0.773& 0.798& 0.803& 0.801& 0.795
						\\ \hline 	
				\end{tabular*}}
			\end{subtable}
			\begin{flushleft}
				\rule{1\textwidth}{1.25pt} 
				{\footnotesize 	\bt{Notes: $\varphi$ is estimated by our proposed method with $h=T^{1/3}$ and $h=T^{2/5}$. For comparison, $\varphi$ is also estimated by the sequential method proposed by \cite{SS2019} when the dimension of the asymptotic superspace (see Remark 13, \citealp{SS2019}) is set to $\varphi_0+1$  (NSS$_1$) and $\varphi_0+2$  (NSS$_2$). The maximal possible number of stochastic trends, which is also a necessary tuning parameter for their method, is set to $7$.}} 
			\end{flushleft}
		\end{table}

		\begin{table}[H]
			\caption{Rejection frequencies (\%) of the test for \eqref{eqh01} with $\hat{\varphi}$} 
			\renewcommand*{\arraystretch}{1}
			\label{tabadd2} \vspace{-0.75em}
			\rule[2pt]{1\textwidth}{1.25pt}
			\begin{subtable}{.48\linewidth}
				\vspace{-0.0em}\caption{$\beta_j \sim U[0,0.5]$, \,\,\,$h=T^{1/3}$}\vspace{-0.4em}
				\bt{\begin{tabular*}{1\textwidth}{@{\extracolsep{\fill}}c@{\extracolsep{\fill}}c@{\hspace{3\tabcolsep}}|c@{\hspace{1.25\tabcolsep}}ccc}
						\hline
						$T$ & $\gamma/T$ & ${\varphi=1}$ & {$2$} &{$3$} &{$4$}   \\
						\hline 
						250 & 0&   4.2  &13.8&  20.8 &22.9 \\
						500 & 0&   3.8& 7.1 &8.1 & 8.8\\	
						&&&&&\\[-12pt]
						250 & 0.08& 23.8& 26.8& 30.1& 29.0\\
						500 & 0.04& 14.1& 13.9 &12.6 & 13.4	\\	
						&&&&&\\[-12pt]	
						250 & 0.16&60.5& 56.9& 53.9&47.8\\
						500 & 0.08& 42.4 &36.6 &32.5& 29.9	 \\	
						&&&&&\\[-12pt]	
						250 & 0.24&80.1& 77.1 &72.2& 65.5	\\
						500 & 0.12&67.4& 60.0 &54.4& 50.0			\\ 
						\hline
				\end{tabular*}}
			\end{subtable} \quad
			\begin{subtable}{.48\linewidth}
				\vspace{-0.0em}\caption{$\beta_j \sim U[0,0.5]$, \,\,\,$h=T^{2/5}$}\vspace{-0.4em}
				\bt{\begin{tabular*}{1\textwidth}{@{\extracolsep{\fill}}c@{\extracolsep{\fill}}c@{\hspace{3\tabcolsep}}|c@{\hspace{1.25\tabcolsep}}ccc}
						\hline
						$T$ & $\gamma/T$ & ${\varphi=1}$ & {$2$} &{$3$} &{$4$}  \\
						\hline 
						250 & 0& 5.4 & 14.0 & 20.6& 20.8 \\
						500 & 0& 6.8&  10.2 & 9.8 & 9.6	\\	
						&&&&&\\[-12pt]	
						250 & 0.08& 24.0& 28.3& 31.0 &27.9 \\
						500 & 0.04& 16.9& 17.4& 16.0& 15.6\\	
						&&&&&\\[-12pt]	
						250 & 0.16&59.6 &57.3& 53.8 &45.2			 \\
						500 & 0.08& 43.4& 41.2 &37.3&33.4		 \\	
						&&&&&\\[-12pt]	
						250 & 0.24& 77.8& 74.3 &69.5& 59.2	\\
						500 & 0.12&67.1& 62.5& 58.0& 52.8			\\ 
						\hline
				\end{tabular*}}
			\end{subtable}
			\quad
			\begin{subtable}{.48\linewidth}
				\vspace{0.3em}\caption{$\beta_j \sim U[0.5,0.7]$, \,\,\,$h=T^{1/3}$}\vspace{-0.4em}
				\bt{\begin{tabular*}{1\textwidth}{@{\extracolsep{\fill}}c@{\extracolsep{\fill}}c@{\hspace{3\tabcolsep}}|c@{\hspace{1.25\tabcolsep}}ccc}
						\hline
						$T$ & $\gamma/T$ & ${\varphi=1}$ & {$2$} &{$3$} &{$4$}  \\
						\hline 
						250 & 0&   10.8& 20.8&26.1 &27.2	 \\
						500 & 0&  9.1 & 10.9&  12.5 & 13.3	\\		
						&&&&&\\[-12pt]	
						250 & 0.08& 36.8& 37.0 &37.0& 33.7		 \\
						500 & 0.04&25.6 &  23.4&20.8&19.7		\\	
						&&&&&\\[-12pt]	
						250 & 0.16&65.9& 62.4&58.8 & 50.8			 \\
						500 & 0.08&54.3 &47.9& 41.0& 37.0			 \\	
						&&&&&\\[-12pt]	
						250 & 0.24& 79.9 &77.4 &71.5 &63.4		\\
						500 & 0.12&   74.7& 66.9& 60.9& 55.5		\\ 
						\hline
				\end{tabular*}}
			\end{subtable}
			\quad
			\begin{subtable}{.48\linewidth}
				\vspace{0.3em}\caption{$\beta_j \sim U[0.5,0.7]$, \,\,\,$h=T^{2/5}$}\vspace{-0.4em}
				\bt{\begin{tabular*}{1\textwidth}{@{\extracolsep{\fill}}c@{\extracolsep{\fill}}c@{\hspace{3\tabcolsep}}|c@{\hspace{1.25\tabcolsep}}ccc}
						\hline
						$T$ & $\gamma/T$ & ${\varphi=1}$ & {$2$} &{$3$} &{$4$}  \\
						\hline 
						250 & 0&  9.8  &19.9& 24.4&  24.0				\\
						500 & 0&  10.6  &12.6 &13.0 & 12.6		\\	
						&&&&&\\[-12pt]	
						250 & 0.04& 33.8 &34.2 &33.4&29.9 \\
						500 & 0.08& 26.1 &24.3& 21.1  &18.8	\\	
						&&&&&\\[-12pt]	
						250 & 0.16& 63.7  &58.7&52.7& 43.8		 \\
						500 & 0.08& 53.2 &47.6& 40.6& 35.1		 \\	
						&&&&&\\[-12pt]	
						250 & 0.24&76.5& 71.9& 65.8 &54.0	\\
						500 & 0.12& 73.9& 65.8& 58.0& 51.8	\\ 
						\hline
				\end{tabular*}}
			\end{subtable}
			\rule{1\textwidth}{1.25pt} 
		\end{table}  
		
		\onehalfspacing
		\medskip

		\subsection{Empirical illustration 1: age-specific employment rates}\label{sec:emp}
		We revisit two empirical applications considered, respectively, by \citet[Section 5.1]{SS2019} and \citet[Section 5.1]{Chang2016152} with  extended time spans. In this section, we discuss the first empirical example in detail. 
		Specifically, we here consider the time series of U.S.\ age-specific employment rates for the working age (15-64) population, observed monthly from January 1986 to Dec 2019.  The raw survey data is available from the Current Population Survey (CPS) at \url{https://www.ipums.org/}. 
		For age $a \in [15,64]$, the age-specific employment rate at time $t$, denoted by $Z_{a,t}$, is computed as in \citet[Section 5.1]{SS2019}. Such employment rates take values in $[0,1]$ by construction, hence we take the logit transformation $\phi(Z_{a,t})$ as suggested by \cite{SS2019}, and then  obtain functional observations $X_t(u)$ for $u\in [15,64]$ by  smoothing $\phi (Z_{a,t})$ over $a$ using 18 quadratic B-spline functions. Due to this construction, $\{X_t\}_{t=1}^T$ may be understood as a high but finite dimensional time series; however as discussed in Section \ref{sec:harris}, our methodology is compatible with the finite dimensionality of $X_t$, so this does not cause any theoretical issues in applying our methodology. 
		In Figure \ref{figadd} we plot the functional observations and univariate time series $\{\langle X_t, x\rangle\}_{t \geq 1}$ for some choices of $x$, which may help us explore characteristics of the FTS. Specifically we consider $x = x_1,x_2$ and $x_3$ defined as follows:
		\begin{align}
			x_1(u) = \frac{1}{25} \mathbbm{1}\left\{15 \leq u < 40\right\}, \quad	x_2(u) = \frac{1}{24} \mathbbm{1}\left\{ 40 \leq u \leq 64\right\}, \quad	x_3 = x_2 - x_1. \label{eqexam01}
		\end{align}
		Clearly, $\langle X_t, x_1\rangle$ (resp.\ $\langle X_t, x_2\rangle$) computes the average (logit) employment rate of individuals aged less than 40 years (resp.\ no less than 40 years), and $\langle X_t, x_3\rangle$ computes their difference.  Provided that  the two time series given in Figure \ref{figadd}-(b) seem to be nonstationary, we expect that there exists at least one stochastic trend; however, it is not possible to conclude  from the plots that there are multiple stochastic trends since a single stochastic trend, say $w$, can make both time series of $\langle X_t, x_1\rangle$ and $\langle X_t, x_2\rangle$ nonstationary if $\langle w,x_1 \rangle$ and $\langle w,x_2 \rangle$ are  nonzero. It can also be seen from Figure \ref{figadd}-(c) that for some choice of $x$ the time series of $\langle X_t,x\rangle$ may exhibit a lower persistence than those of $x_1$ and $x_2$; this may be suggestive of the possible existence of cointegrating vectors.  
		\begin{figure}[b!]
			\caption{Age group characteristics}
			\centering
			\label{figadd}
			\begin{subfigure}[b]{0.32\textwidth}
				\includegraphics[width=1\textwidth,trim = 0.8cm 0.8cm 0.8cm 1.9cm,clip]{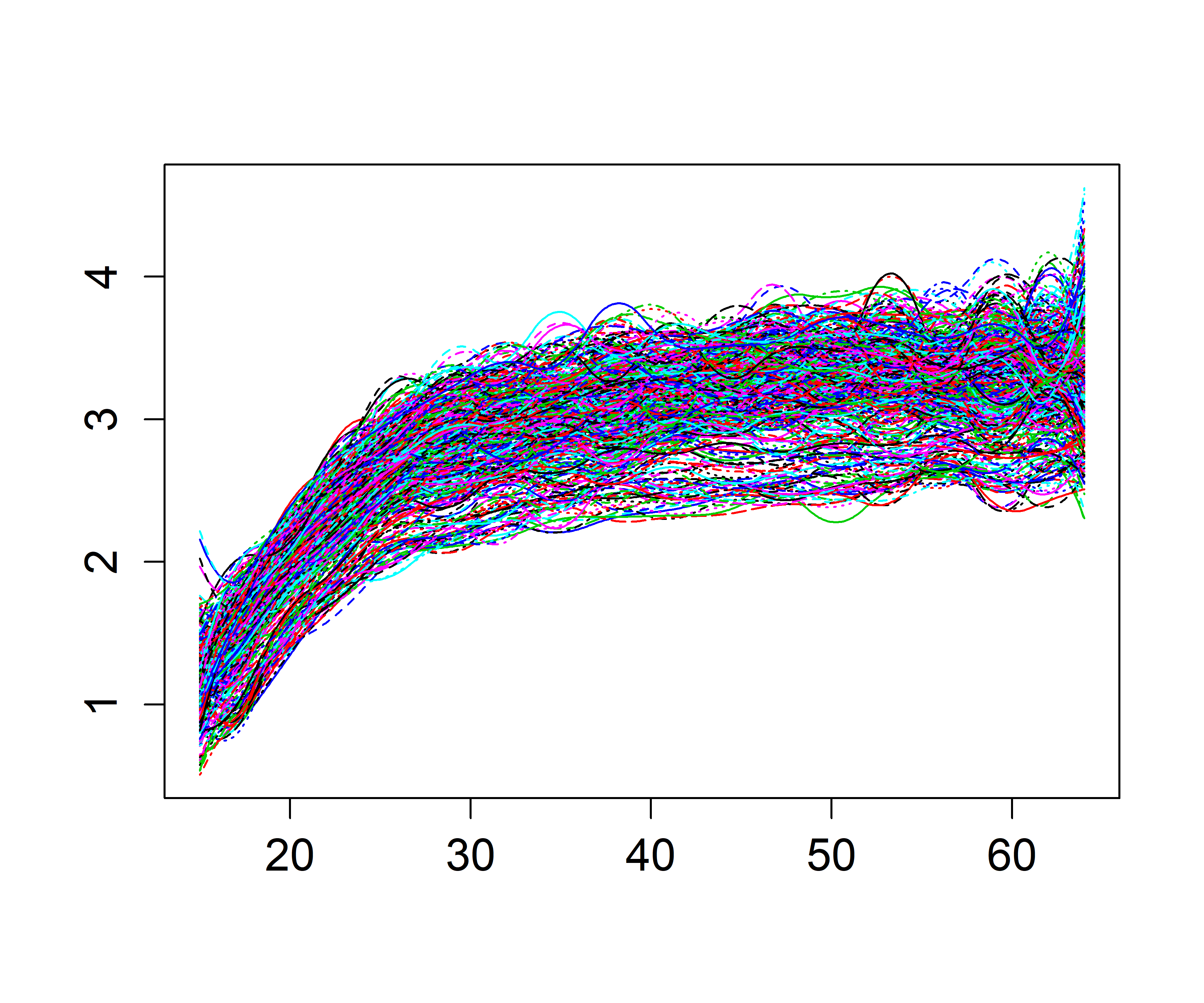}	
				\vspace{-1.5\baselineskip}
				\caption{observations ($X_t$)} 
			\end{subfigure} \quad  
			\begin{subfigure}[b]{0.32\textwidth}
				\includegraphics[width=1\textwidth,trim = 0.8cm 0.8cm 0.8cm 1.9cm,clip]{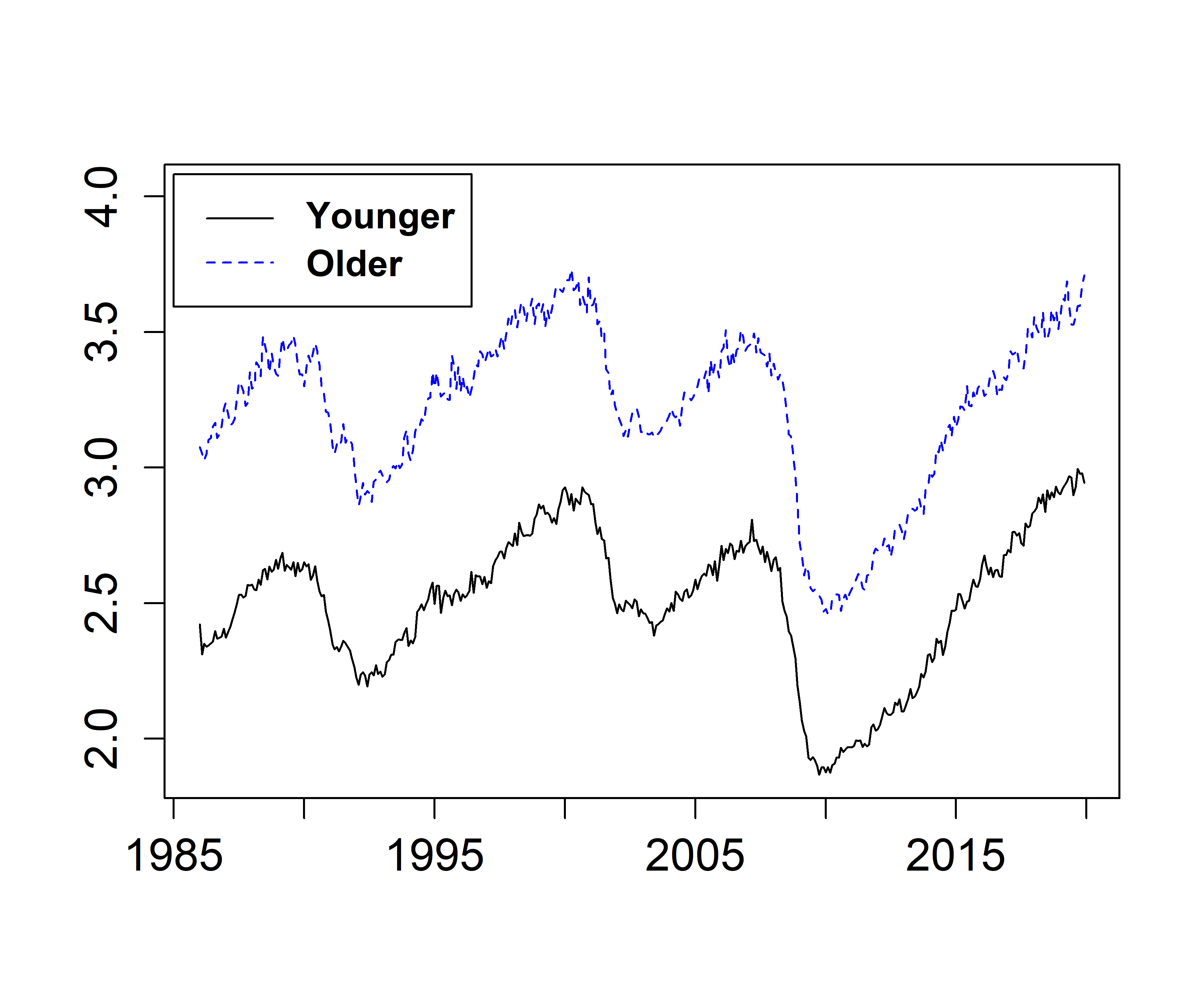}	
				\vspace{-1.5\baselineskip}
				\caption{two groups  ($\langle\hspace{-0.1em} X_t\hspace{-0.075em},\hspace{-0.075em}x_1\hspace{-0.075em}\rangle$ \hspace{-0.13em}\&\hspace{-0.13em} $\langle\hspace{-0.1em} X_t\hspace{-0.075em},\hspace{-0.075em}x_2\hspace{-0.075em}\rangle$)}
			\end{subfigure}
			\begin{subfigure}[b]{0.32\textwidth}
				\includegraphics[width=1\textwidth,trim = 0.8cm 0.8cm 0.8cm 1.9cm,clip]{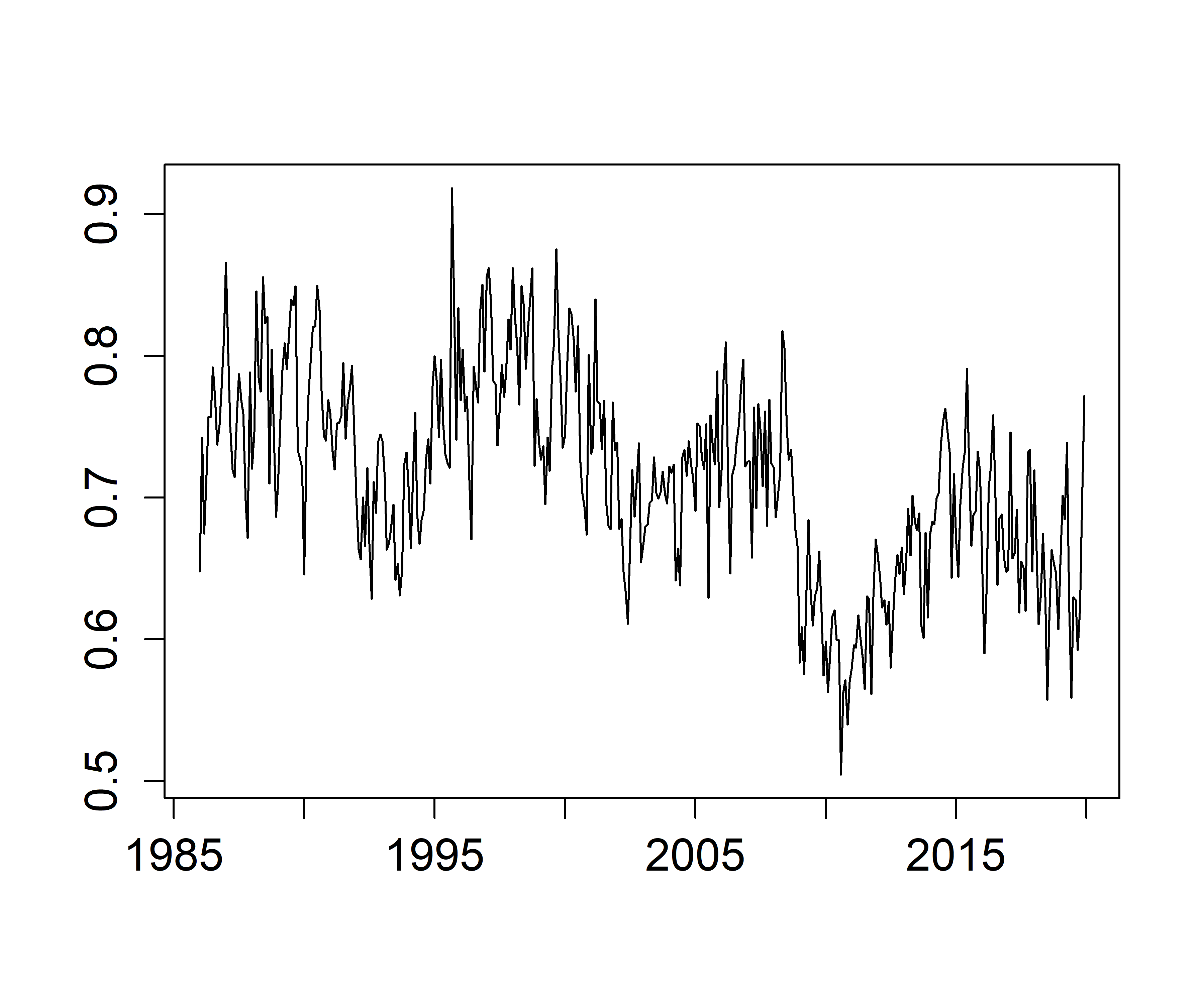}
				\vspace{-1.5\baselineskip}
				\caption{difference   ($\langle X_t,x_3  \rangle$)}
			\end{subfigure} 
		\end{figure}
		
		When a cointegrated FTS is given, it is important to estimate $\mathcal H^N$. We already know  from our earlier results that $\mathcal H^N$ can be estimated by the span of the first $\varphi$ (=$\dim(\mathcal H^N)$) eigenvectors computed from the proposed eigenvalue problem \eqref{eigenp2a}; however in practice $\varphi$ is unknown. We thus apply our FPCA-based sequential procedure to determine $\varphi$. As in \cite{SS2019}, the model with a linear trend (see Section \ref{sec:deterministic}) is considered for this empirical example. Table~\ref{tabl_wage} reports the test results when $\mathrm{k}(\cdot)$, $h$, and  $K$ are set to the Parzen kernel, $T^{2/5}$, and $\varphi_0+1$, respectively. The proposed sequential procedure concludes that the dimension of $\mathcal H^N$ is $1$ both at $5\%$ and $1\%$ significance levels, hence $\mathcal H^N$ can be estimated by the span of the first eigenvector $\overline{w}_{1}$ obtained from our modified FPCA. Figure \ref{figadd2} shows the estimated eigenvector and the time series $\{\langle \overline{U}_t, \overline{w}_{1} \rangle \}_{t=1}^T$. Note that the time series of $\langle X_t, x\rangle$ for any $x \in \mathcal H^N$ is a unit root process with a linear deterministic trend, hence the time series of $\langle \overline{U}_t, \overline{w}_{1} \rangle$ is expected to behave as a unit root process, which is as shown in Figure \ref{figadd2}-(b).  	
		
		\begin{table}[t!] \renewcommand*{\arraystretch}{1}
			\caption{Test results for the dimension of $\mathcal H^N$ - age-specific employment rates} \small
			\label{tabl_wage}	\vspace{-1em}
			\rule[1pt]{1\textwidth}{1.25pt}
			\begin{tabular*}{0.98\textwidth}{@{\extracolsep{\fill}}cccc}
				$\varphi_0$ & $0$ & $1$ & $2$   \\ \hline
				Test statistic & 0.3175$^{\ast\ast}$& 0.1195 & 0.0665 
			\end{tabular*}	
			\rule{1\textwidth}{1.25pt} 
			{\footnotesize Notes: $T=408$.  We use ${}^{\ast}$ and  ${}^{\ast\ast}$ to denote rejection at 5\% and 1\%  significance level, respectively; the approximate critical values for 95\% (resp.\ 99\%) are given by 0.15, 0.12, and 0.10 (resp.\ 0.22, 0.18, and 0.15) sequentially.} 
		\end{table}
		\begin{figure}[t!]
			\caption{Estimated orthonormal basis of $\mathcal H^N$  - age-specific employment rates}
			\centering
			\label{figadd2}
			\begin{subfigure}[b]{0.38\textwidth}
				\includegraphics[width=1\textwidth,trim = 0.2cm 1.5cm 0.7cm 2cm,clip]{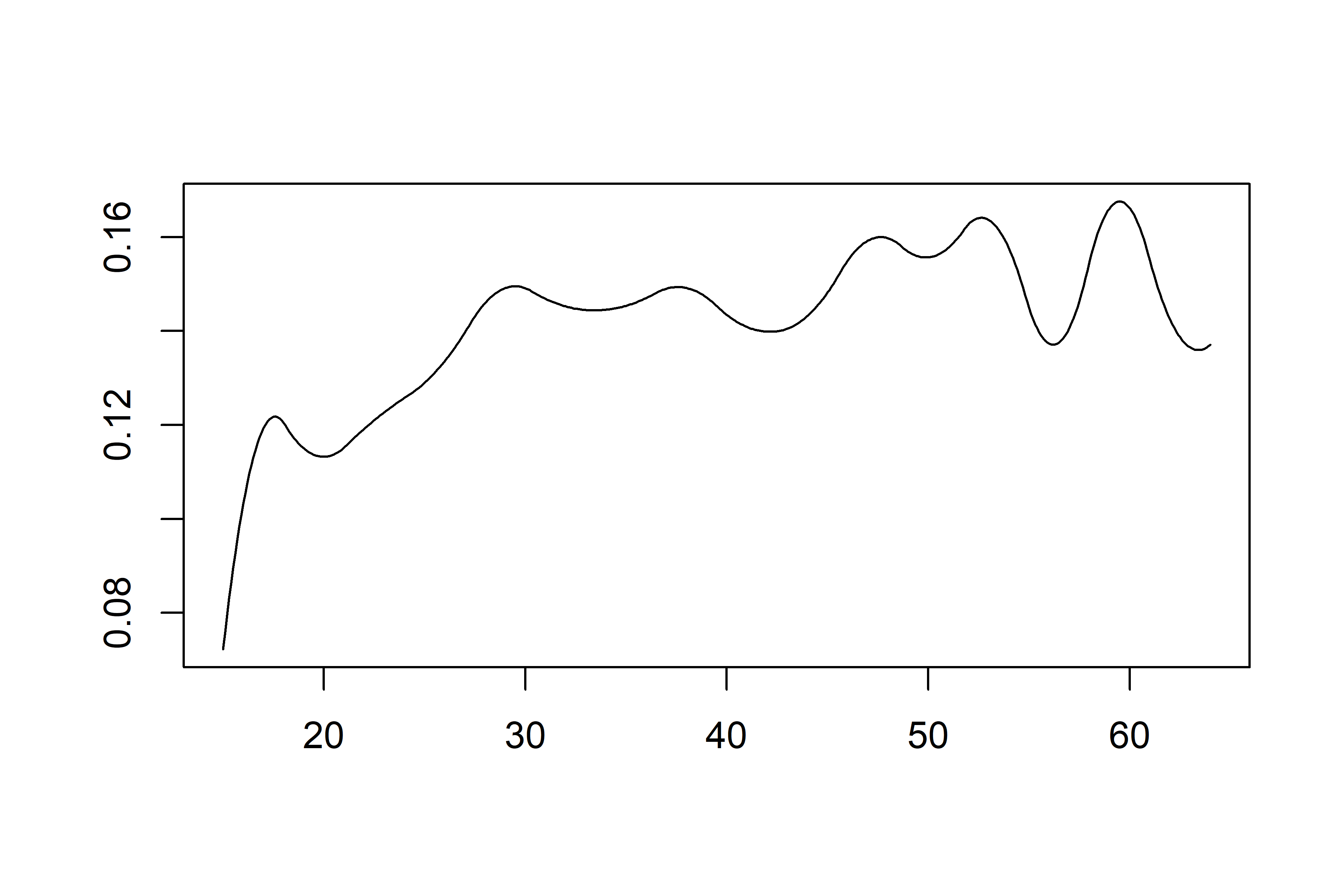}	
				\vspace{-1.5\baselineskip}
				\caption{$\overline{w}_{1}$}
			\end{subfigure}
			\begin{subfigure}[b]{0.38\textwidth}
				\includegraphics[width=1\textwidth,trim = 0.2cm 1.5cm 0.7cm 2cm,clip]{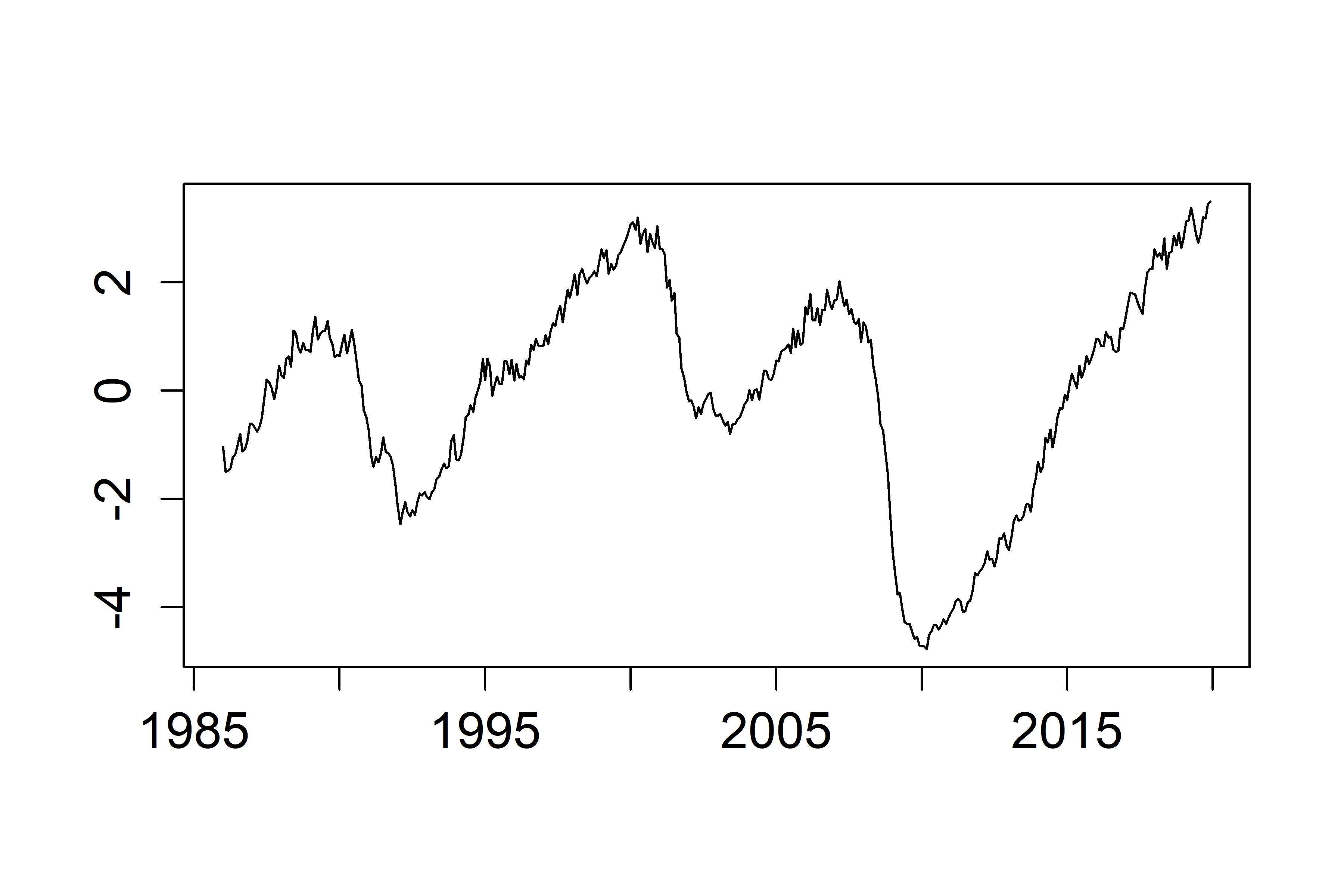}	
				\vspace{-1.5\baselineskip}
				\caption{time series of $\langle \overline{U}_t, \overline{w}_{1} \rangle$}
			\end{subfigure}
		\end{figure}
		
		In practice, one may also be interested in testing various hypotheses about cointegration such as \eqref{hptest1} and \eqref{hptest2}. These hypotheses can be examined by the tests given in Section \ref{sec:inferenceatt} once $\varphi$ (=$\dim(\mathcal H^N)$) is estimated. For illustrative purposes, we consider the following hypotheses, 
		\begin{align*}
			&H_0 : x \in \mathcal H^N \quad \text{against} \quad H_1: x \notin \mathcal H^N, \\ 
			&H_0 :x \in \mathcal H^S \quad \text{against} \quad  H_1: x \notin \mathcal H^S. 
		\end{align*}
		where $x=x_1,x_2$ or $x_3$ given in \eqref{eqexam01}. The test results are  reported in Table \ref{tabl_wage2} under the assumption that ${\varphi}=1$ as we earlier concluded from our sequential procedure. The null hypothesis that $x\in \mathcal H^N$ is rejected in every case at 1\% significance level, which means that each of the considered functions is not entirely included in $\mathcal H^N$ but given by a linear combination of nonzero elements of $\mathcal H^N$ and $\mathcal H^S$; these results are, at least to some extent,  expected from that $\mathcal H^N$ is estimated by $\spn\{\overline{w}_1\}$ and  the considered functions have quite different shapes from that of $\overline{w}_1$; see Figure \ref{figadd2}-(a).  Moreover, the null hypothesis that $x\in \mathcal H^S$  is rejected at $1\%$ significance level for $x=x_1$ or $x_2$, and it is rejected at $5\%$ (but not $1\%$) significance level for $x=x_3$. These test results would lead us to similarly conclude that each of the considered functions is not entirely included in $\mathcal H^S$, but now it is worth noting that the null hypothesis for $x=x_3$ is rejected with  less confidence; that is,  the data supports that $x_3$ is relatively closer to $\mathcal H^S$ than $x_1$ and $x_2$. This result was earlier conjectured from the plots given in Figure \ref{figadd}, and  we here note that such a conjecture has been, at least to some extent, examined by our proposed FPCA-based test.   
		
		
		In this  example, we found a strong evidence that there exists at least one stochastic trend in the time series of age-specific employment rates. It may be of interest to investigate whether this time series is cointegrated with some  economic variables exhibiting a unit root-type nonstationarity. This can be further explored in the future by developing a cointegrating regression model involving functional observations.

			\begin{table}[t!] \renewcommand*{\arraystretch}{1}
				\caption{Test results for $H_0 : x \in \mathcal H^N$ or $H^S$  - age-specific employment rates} \small	\vspace{-1em}
				\label{tabl_wage2}
				\rule[1pt]{1\textwidth}{1.25pt}
				\begin{tabular*}{1\textwidth}{@{\extracolsep{\fill}}cccc}
					$x$  & $x_1$ & $x_2$ & $x_3$   \\ \hline  
					&&& \vspace{-9pt} \\
					Test of $H_0 : x \in \mathcal H^N$ & 0.3128$^{\ast\ast}$ &0.3215$^{\ast\ast}$ &   0.3178$^{\ast\ast}$   \\
					Test of $H_0 : x \in \mathcal H^S$\, &  0.3221$^{\ast\ast}$ &0.3133$^{\ast\ast}$& 0.1960$^{\ast}$
				\end{tabular*}
				\rule{1\textwidth}{1.25pt}  \\
				{\footnotesize Notes: $T=408$. We use ${}^{\ast}$ and  ${}^{\ast\ast}$ to denote rejection at 5\% and 1\%  significance level, respectively; the approximate critical value for 95\% (resp.\ 99\%) is 0.15 (resp.\ 0.22). in each case.} 
			\end{table}
			

				\subsection{Empirical illustration 2: cross-sectional earning densities}\label{sec:emp2}
				We consider a monthly sequence of U.S.\ earning densities running from January 1989 to December 2019; this empirical example is similar to that given by \citet[Section 5.1]{Chang2016152}.  The raw individual weekly earning data can be obtained from the CPS at \url{https://ipums.org}. \nocite{IPUMS2020} Since individual earnings in each month are reported in current dollars, they are all adjusted to January 2000 prices using monthly consumer price index data obtained from the Federal Reserve
				Economic Database at \url{https://fred.stlouisfed.org}. Reported weekly earnings are censored from above, with the threshold for censoring switching partway through the sample: the top-coded nominal earning is 1923 before January 1998, and 2885 afterward. In addition, the raw dataset contains many abnormally small nominal earnings, such as \$0.01  a week. In the subsequent analysis, weekly earnings below the 3th percentile and above the 97th percentile are excluded to avoid potential effects of those abnormal earnings. Each earning density is estimated from the remaining  individual earnings as in \cite{SEO2019} by applying local likelihood density estimation proposed in \cite{loader1996}; this will be more detailed at the end of this section. One may treat such  density-valued observations as random elements of the familiar Hilbert space $L^2$ of square integrable functions to apply our methodology; however, as pointed out by  \cite{petersen2016}, this is not in general advisable since the set of probability densities is not a linear subspace in this case (see also  \citeauthor{delicado2011dimensionality}, \citeyear{delicado2011dimensionality};  \citeauthor{Hron2016330},  \citeyear{Hron2016330}; \citeauthor{kokoszka2019forecasting},  \citeyear{kokoszka2019forecasting}; \citeauthor{zhang2020wasserstein}, \citeyear{zhang2020wasserstein}).  Therefore, we first embed the estimated densities  into a  Hilbert space via the transformation approach proposed in \cite{Egozcue2006}. Let $\mathrm{S} \subset \mathbb{R}$ be the support of the probability density functions and let $|\mathrm{S}|$ be the length of $\mathrm{S}$; in this empirical example $\mathrm{S}$ is set to $[75.36,  1823.94]$ where the left endpoint (resp.\ the right endpoint) corresponds to the minimal (resp.\ the maximal) individual earning over the time span.  We then define
				\begin{align}
					&f_t(u) = \psi(X_t) = \log X_t(u) - \frac{1}{|\mathrm{S}|}\int \log X_t(u)  du, \quad u \in \mathrm{S}. \label{eqtrans1}\end{align}
				Then the transformed series $\{{f}_{t}\}_{t \geq 1}$ are regarded as a time series taking values in $L^2_0[0,1]$, the collection of all $x \in L^2[0,1]$ satisfying $\int x(u)du = 0$, which is a Hilbert space. The transformation $\psi$ given above turns out to be an isomorphism between a Hilbert space of probability density functions (called a Bayes Hilbert space)  and $L_0^2[0,1]$, and its inverse is given by $\psi^{-1}(f) = \exp(f)/\int \exp(f(u))du$ for any $f \in L_0^2[0,1]$; see \cite{Egozcue2006}. For our purpose to illustrate our modified FPCA methodology, we hereafter only concern with the transformed series $\{{f}_{t}\}_{t \geq 1}$; the results obtained in this section can be naturally converted into those for the original density-valued time series $\{X_t\}_{t=1}^T$ using the inverse transformation $\psi^{-1}$ under certain mathematical conditions; see \cite{SEO2019} for a more detailed discussion.  Figure \ref{figadd4} reports the estimated earning densities  and their transformations as our functional observations. 
				
				Given that the functional observations are constructed from weekly earnings which tend to increase over time, it may be reasonable to consider the model with a linear trend as in Section \ref{sec:emp}.  We apply the sequential procedure based on our proposed test to determine $\varphi$.  We use 18 quadratic B-spline functions for the representation of the functional observations, and   $\mathrm{k}(\cdot)$, $h$, and $K$ are set in the same way.  Table \ref{tabl_earning} reports the test results.  Our sequential procedure  concludes that the dimension of $\mathcal H^N$ is $2$ at $1\%$ significant level. Given this result, $\mathcal H^N$ is estimated by the span of the first two eigenvectors $\{\overline{w}_j\}_{j=1}^2$, which are obtained from our modified FPCA. If we let $\overline{f}_t$ denote the residual from detrending $f_t$, it is then  expected that the time series of $\langle \overline{f}_t,\overline{w}_j \rangle$ behaves as a unit root process for $j=1$ and $2$. These are shown in Figure \ref{figadd3}. Once the dimension of $\mathcal H^N$ is estimated, then we may examine various hypotheses about cointegration using the tests developed in Section \ref{sec:inferenceatt}.
				
				\begin{figure}[t!]
					\caption{Estimated earning densities and transformed densities}
					\centering
					\label{figadd4}
					\begin{subfigure}[b]{0.33\textwidth}
						\includegraphics[width=1\textwidth,trim = 0.9cm 0.9cm 0.9cm 2cm,clip]{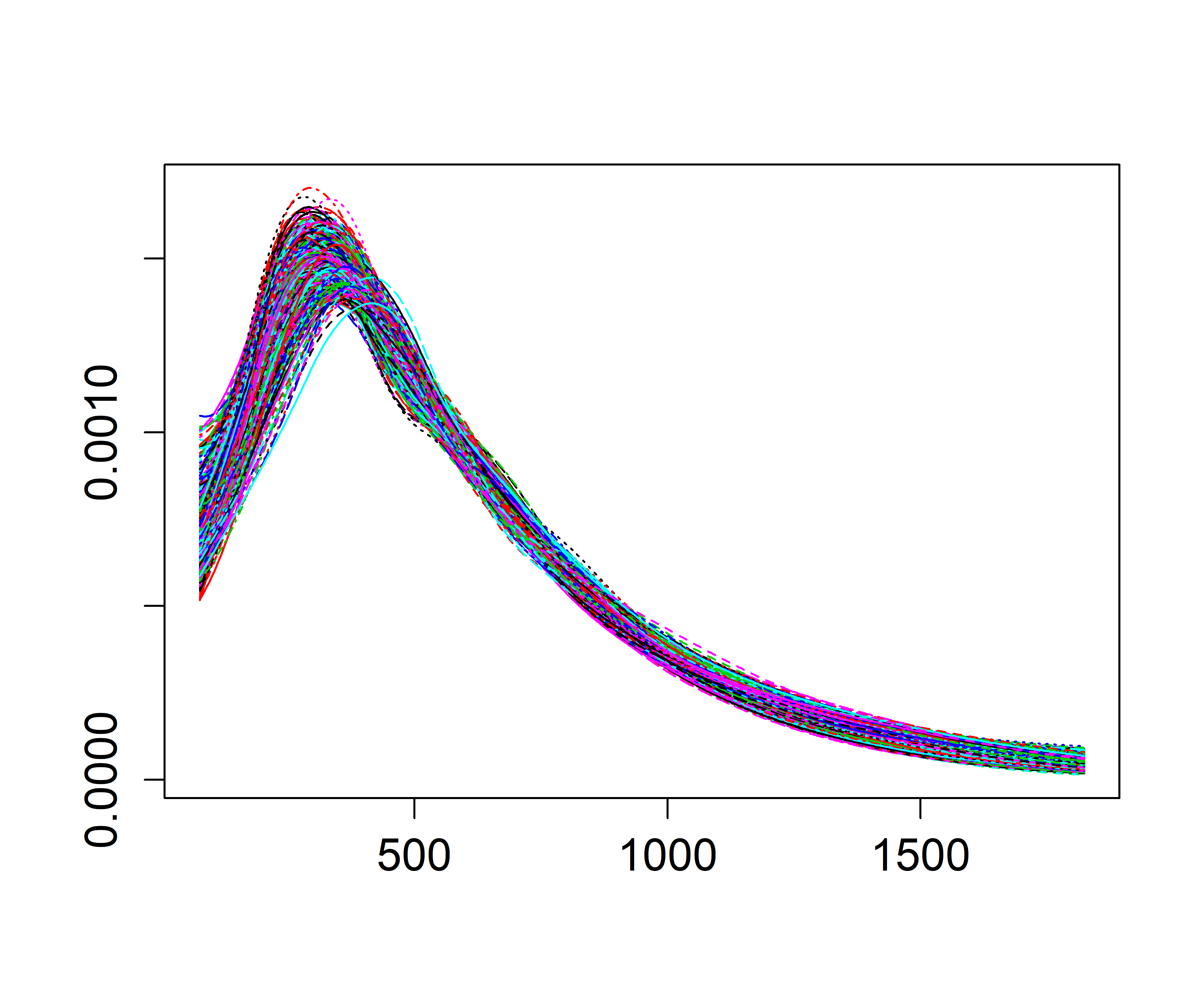}	
						\vspace{-1.5\baselineskip}
						\caption{earning densities}
					\end{subfigure} 
					\begin{subfigure}[b]{0.33\textwidth}
						\includegraphics[width=1\textwidth,trim = 0.9cm 0.9cm 0.9cm 2cm,clip]{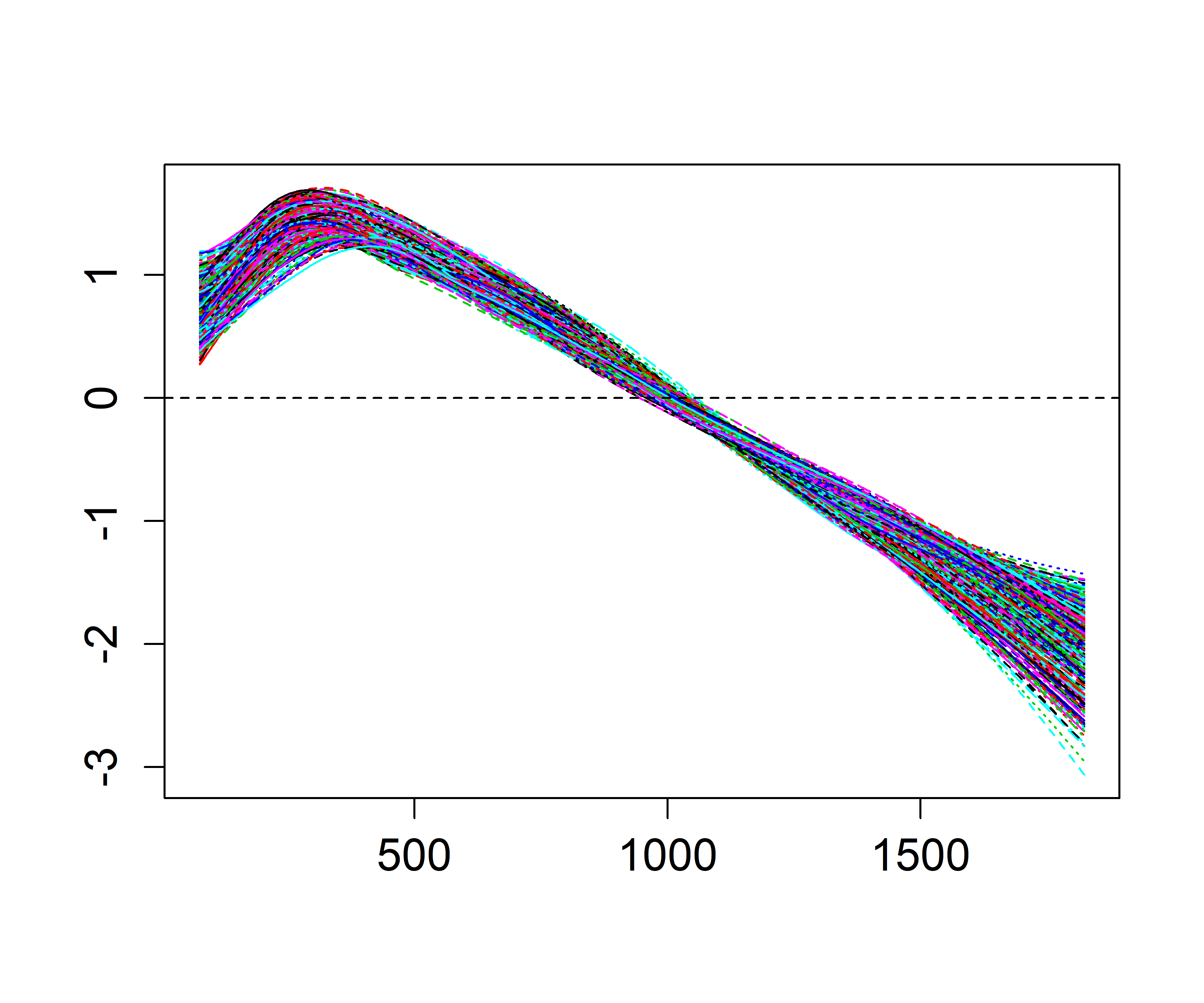}	
						\vspace{-1.5\baselineskip}
						\caption{transformed densities}
					\end{subfigure}
					\end{figure}
					\begin{table}[t!] \renewcommand*{\arraystretch}{1}
						\caption{Test results for the dimension of $\mathcal H^N$ -  earning densities} \small \centering
						\label{tabl_earning}	
						\vspace{-1em}
						\rule[1pt]{0.75\textwidth}{1.25pt}
						\begin{tabular*}{0.725\textwidth}{@{\extracolsep{\fill}}cccc}
							$\varphi_0$ & $0$ & $1$ & $2$  \\ \hline
							Test statistic &  0.4398$^{\ast\ast}$&  {0.2734}$^{\ast\ast}$ & {0.0715}  \\
						\end{tabular*}	
						\rule{0.75\textwidth}{1.25pt} 
						\begin{flushleft}
							{\footnotesize Notes: $T=372$. We use ${}^{\ast}$ and  ${}^{\ast\ast}$ to denote rejection at 5\% and 1\%  significance level, respectively; the approximate critical values for 95\% (resp.\ 99\%) are given by 0.15, 0.12, and 0.10 (resp.\ 0.22, 0.18, and 0.15) sequentially.} 
						\end{flushleft}
					\end{table}
					\begin{figure}[t!]
						\caption{Eigenvectors from the modified FPCA \& their characteristics - earning densities}
						\centering
						\label{figadd3} %
						\begin{subfigure}[b]{0.38\textwidth}
							\includegraphics[width=1\textwidth,trim =  0.2cm 1.5cm 0.7cm 2cm,clip]{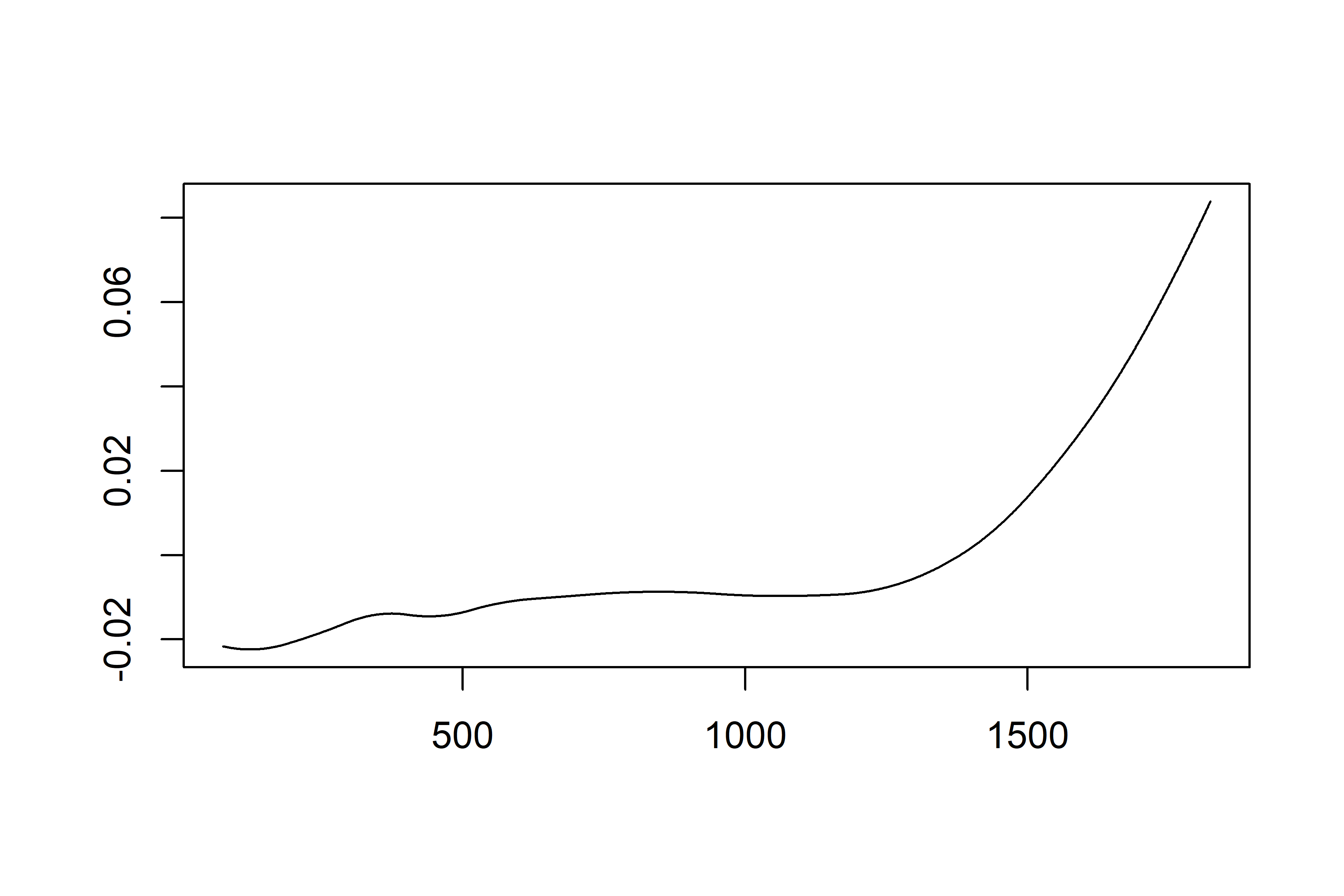}	
							\vspace{-1.\baselineskip}
							\caption{$\overline{w}_{1}$}
						\end{subfigure} 
						\begin{subfigure}[b]{0.38\textwidth}
							\includegraphics[width=1\textwidth,trim  =  0.2cm 1.5cm 0.7cm 2cm,clip]{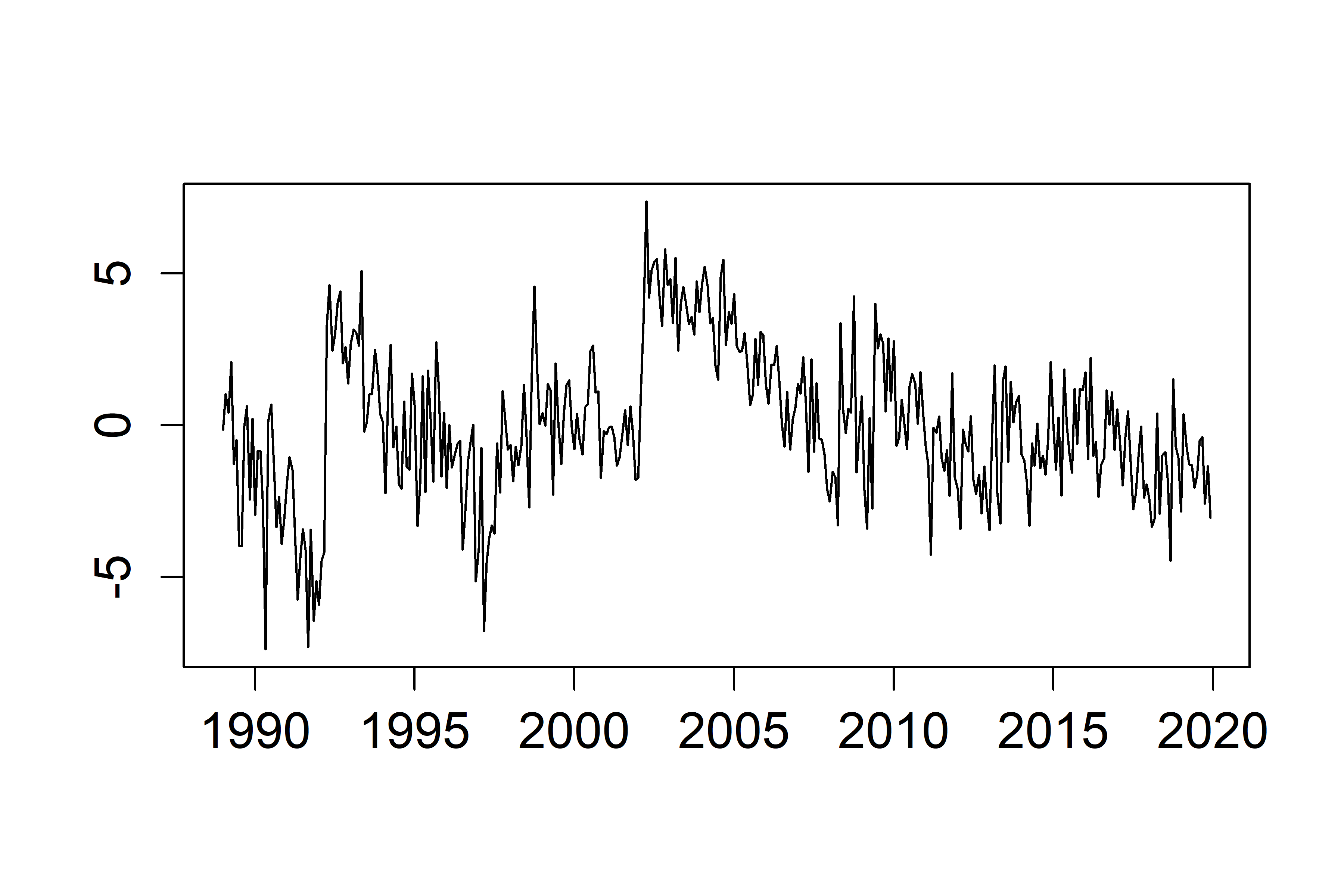}	
							\vspace{-1.\baselineskip}
							\caption{time series of $\langle \overline{f}_{t}, \overline{w}_{1} \rangle$}
						\end{subfigure}
						\begin{subfigure}[b]{0.38\textwidth}
							\includegraphics[width=1\textwidth,trim = 0.2cm 1.5cm 0.7cm 2cm,clip]{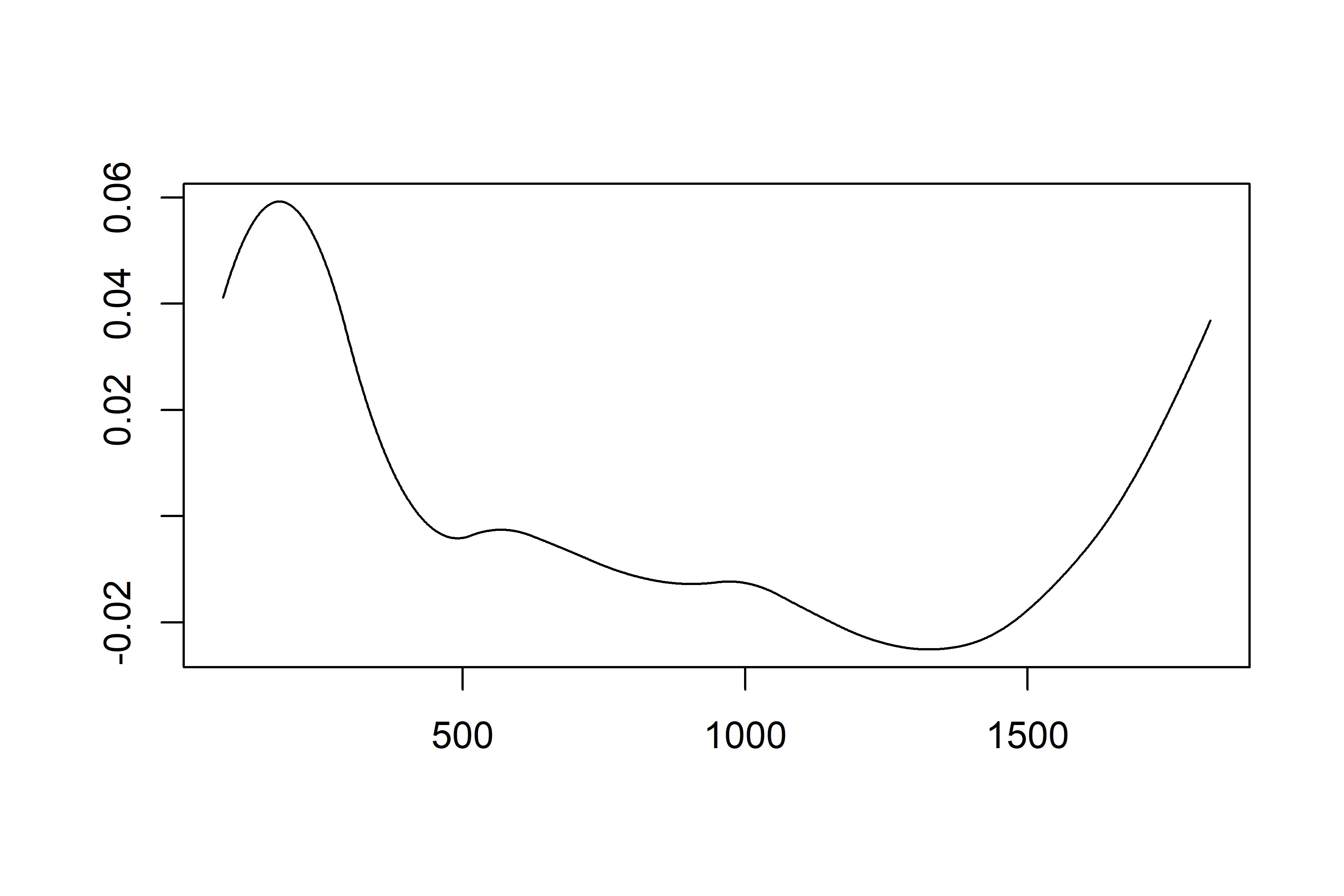}	
							\vspace{-1.\baselineskip}
							\caption{$\overline{w}_{2}$}
						\end{subfigure}
						\begin{subfigure}[b]{0.38\textwidth}
							\includegraphics[width=1\textwidth,trim =  0.2cm 1.5cm 0.7cm 2cm,clip]{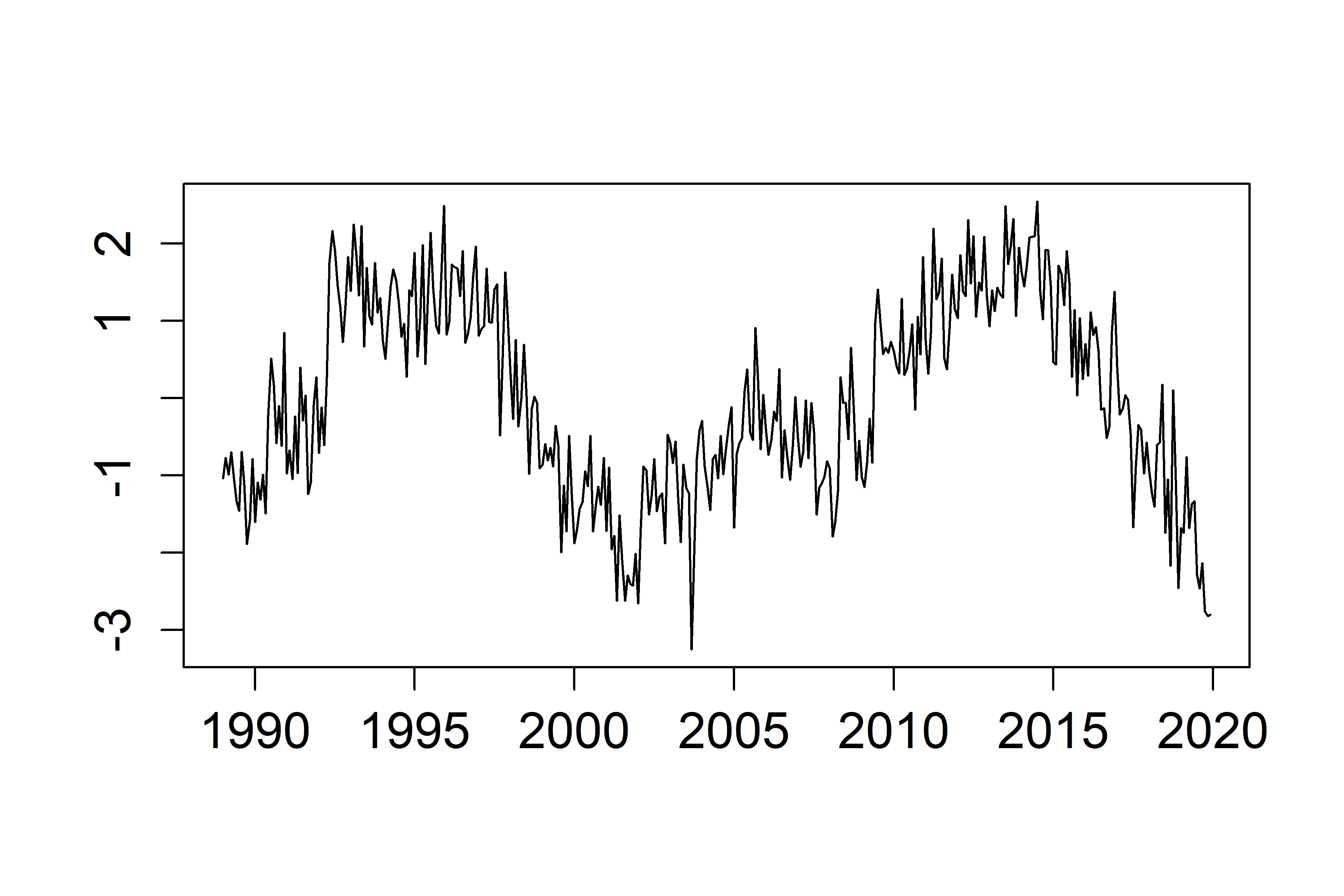}	
							\vspace{-1.\baselineskip}
							\caption{time series of $\langle \overline{f}_{t}, \overline{w}_{2} \rangle$}
						\end{subfigure}
						\vspace{-0.7em}
					\end{figure}
					We in this example found an evidence that the time series of earning densities exhibits stochastic trends. As in the example given in Section \ref{sec:emp},  it may be interesting to investigate if this time series of earning densities is cointegrated with other economic time series, and this can certainly be further explored in the future study. 
					
					\subsubsection*{Estimation of densities of earning} \label{app:logdensity}
					To obtain densities of earning, we employ the local likelihood method  by \cite{loader1996,Loader2006}. Suppose that $X$ is the density of interest which is supported on $\mathrm{S}$, and there are $n$ individual earnings available for estimation of $X$.   Given survey responses $a_1, \ldots, a_{n}$ with design weights $w_1, \ldots, w_{n}$ such that $\sum_{i=1}^n w_i = n$, the weighted log-likelihood is given by $	\mathcal L(X) = \sum_{i=1}^n w_i \log (X(a_i)) - n \left(\int X(u) du - 1 \right)$.
					Under some smoothness assumptions, we can consider a localized version of $\mathcal L(X)$, and $\log X(a)$ can be locally approximated by a polynomial function, as follows.
					\begin{equation} \label{locallf}
						\mathcal L_p(X)(a) = \sum_{i=1}^n w_i  \mathcal W\left(\frac{a_i-a}{b}\right) \mathcal Q(a_i-a ; \varpi)- n \int \mathcal W\left(\frac{u-a}{b}\right) \exp(\mathcal Q(u-a ; \varpi)) du,
					\end{equation} 
					where  $\mathcal W(\cdot)$ is a suitable kernel function,  $b$ is a nearest neighborhood bandwidth ensuring that a fixed percent of the data is included in the local neighborhood of $a$, and $\mathcal Q(u-a ; \varpi)$ is the $q$-th order polynomial in $u-a$ with coefficients $\varpi = (\varpi_{0},\ldots,\varpi_{q})$. Let $(\hat{\varpi}_{0},\ldots,\hat{\varpi}_{q})$ be the maximizer of \eqref{locallf}. The local likelihood log-density estimate  is then given by $\widehat{\log X} (a) = \hat{\varpi}_{0}$	and therefore $\hat{X} (a) =  \exp(\hat{\varpi}_{0})$. The procedure is repeated for a fine grid of points, and then $\hat{X}$ may be obtained from an interpolation method  described in  \citep[Chapter 12]{Loader2006}. Following this  procedure, each earning density is obtained on the common support $\mathrm{S} = [75.35, 1823.94]$ 		in our empirical application in Section \ref{sec:emp2}: to be more specific, we set $\mathcal W(u) = (1-|u|^3)^3 1\{|u|<1\}$,  $\mathcal Q(u-a ; \varpi) = \varpi_{0}  + \varpi_{1} (u-a) + \varpi_{2} (u-a)^2$, and choose $b$ so that $40\%$ of the data is contained in the local neighborhood of $a$. The employed kernel function is called the tricube kernel and used in \cite{loader1996}.

					\newpage 
					
					\noindent \Large{\textbf{Part II : Mathematical Appendix}} \normalsize
					\section{Preliminaries}	\label{appintro0}
					\subsection{Hilbert space and bounded linear operators}\label{appintro01}
					Let $\mathcal H$ denote a real separable Hilbert space with inner product $\langle \cdot, \cdot \rangle$ and norm $\|\cdot\| = \langle \cdot,\cdot \rangle^{1/2}$. We let $\mathcal{L}_{\mathcal H}$ denote the space of bounded  linear operators on $\mathcal H$, equipped with the usual operator norm $\|A\|_{\mathcal L_{\mathcal H}} = \sup_{\|x\|\leq 1} \|Ax\|$. For an operator $A \in \mathcal L_{\mathcal H}$, we let $A^\ast \in \mathcal L_{\mathcal H}$ denote the adjoint of $A$, and let $\ran A$ (resp.\ $\ker A$) denote the range (resp.\ kernel) of $A$, which are, respectively, defined by $\ran A = \{Ax: x \in \mathcal H\}$ and 	$\ker A = \{x\in \mathcal H: Ax = 0\}$.	The rank of $A$, denoted by $\rank A$, is equal to $\dim(\ran A)$.
					If $A=A^*$, then $A$ is said to be self-adjoint. An operator $A \in \mathcal L_{\mathcal H}$ is positive semidefinite if $\langle Ax,x\rangle \geq 0$ for any $x \in \mathcal H$, and positive definite if also $\langle Ax,x\rangle \neq 0$ for any nonzero $x \in \mathcal H$. Throughout this paper, $x\otimes y$ for $x,y \in \mathcal H$ denotes the operator $z \mapsto \langle x,z \rangle y$ of rank one. An operator $A \in \mathcal L_{\mathcal H}$ is said to be compact if there exists two orthonormal bases $\{u_j\}_{j \geq 1}$ and $\{v_j\}_{j \geq 1}$, and a real-valued sequence     $\{a_j\}_{j \geq 1}$ tending to zero, such that  $A = \sum_{j=1}^\infty a_ju_j \otimes v_j$; we may assume that $u_j = v_j$ and  $a_1 \geq a_2 \geq \ldots \geq 0$ if $A$ is also self-adjoint and positive semidefinite \citep[p.\ 35]{Bosq2000}. 
					For any $A \in \mathcal L_{\mathcal H}$ with a closed range, we let $A^\dag\in\mathcal L_{\mathcal H}$ denote the Moore-Penrose inverse; see \cite{Engl1981}.  
					For any self-adjoint, positive semidefinite, and compact  operator $A \in \mathcal L_{\mathcal H}$ satisfying $A = \sum_{j=1}^\infty a_j u_j \otimes u_j$ for some $a_1\geq a_2 \geq \ldots \geq 0$ with $a_{m} > 0$, we hereafter let $A{\mid_{m}^\dagger}= \sum_{j=1}^m a_j^{-1} u_j \otimes u_j$ and be called the $m$-regularized inverse of $A$. $A{\mid_{m}^\dagger}$ is set to $0 \in \mathcal L_{\mathcal H}$ if $m = 0$. The $m$-regularized inverse $A{\mid_{m}^\dagger}$ is understood as the partial inverse of $A$ on the restricted domain $\spn(\{u_j\}_{j=1}^m)$. If $\rank A = m$, then $A{\mid_{m}^\dagger} = A^\dag$. 
					\subsection{Random elements in $\mathcal H$ and $\mathcal L_{\mathcal H}$} \label{appintro}
					Let $(\mathcal S,\mathcal F, \mathcal P)$ denote the underlying probability triple. An $\mathcal H$-valued random variable $X$ is defined as a measurable map from $\mathcal S$ to $\mathcal H$, where $\mathcal H$ is understood to be equipped with the usual Borel $\sigma$-field.  An $\mathcal H$-valued random variable $X$ is said to be integrable (resp.\ square integrable) if $\mathbb{E} \|X\| < \infty$ (resp.\  $\mathbb{E} \|X\|^2 < \infty$). If $X$ is integrable, there exists a unique element $\mathbb{E}X \in \mathcal H$ satisfying $\mathbb{E}\langle X,x\rangle = \langle \mathbb{E}X,x\rangle$ for any $x \in \mathcal H$. The element $\mathbb{E}X$ is called the expectation of $X$. Let $L^2_{\mathcal H}$ denote the space of $\mathcal H$-valued random variables $X$ (identifying random elements that are equal  almost surely) that satisfy $\mathbb{E}X = 0$ and $\mathbb{E}\|X\|^2 < \infty$. For any $X \in L^2_{\mathcal H}$, we may define its covariance operator as $C_X = \mathbb{E} \left[  X \otimes X  \right]$, which is guaranteed to be self-adjoint, positive semidefinite and compact.
					
					Now let $A$ be a map from $\mathcal S$ to $\mathcal L_{\mathcal H}$ such that $\langle A x,y \rangle$ is Borel measurable for all $x,y \in \mathcal H$. Such a map $A$ is called a random bounded linear operator; see  \cite{skorohod2001}. For such operators $A$ and $B$, we write $A =_{\fdd} B$ if $(\langle Ax_1,y_1 \rangle,\ldots,\langle Ax_k,y_k \rangle) =_d (\langle Bx_1,y_1 \rangle,\ldots,\langle Bx_k,y_k \rangle)$
				for every $k\geq 1$ and $x_1,\ldots,x_k$, $y_1,\ldots,y_k \in \mathcal H$.
				For any sequence $\{A_j\}_{j \geq 1}$ in $\mathcal L_{\mathcal H}$, we write $A_j \to_{\mathcal L_{\mathcal H}} A$, if  $\|A_j - A\|_{\mathcal L_{\mathcal H}}  \to_p  0$.  
				It will be convenient to define two other modes of convergence of random bounded linear operators for our  proofs of the main results. First, we write $A_j \to_w A$, if for all $x,y \in \mathcal H$, 
				\begin{equation*} |\langle A_jx,y \rangle  - \langle Ax,y \rangle |  \to_p   0. 
				\end{equation*}  Moreover, we write $A_j \to_{wd} A$ if, for all $k$ and $x_1,\ldots,x_k$, $y_1,\ldots,y_k \in \mathcal H$,  
				\begin{equation*}
					\lim_{j\to \infty}\mathbb{E} f\left(\langle A_jx_1,y_1 \rangle,\ldots,\langle A_jx_k,y_k \rangle \right)  = \mathbb{E} f\left(\langle Ax_1,y_1 \rangle,\ldots,\langle Ax_k,y_k\rangle \right)
				\end{equation*}  for any bounded continuous function $f$. When $A_j \to_w A$ (resp.\ $A_j \to_{wd} A$), 
				$A_j$ is said to converge weakly (resp.\ weakly in distribution) to $A$; these modes of convergence are introduced in detail by \cite{skorohod2001}. Note that $A_j \to _{wd} A$ means that the finite dimensional distribution of $A_j$ converges to that of $A$. Furthermore, it can be shown that $$A_j \to_{\mathcal L_{\mathcal H}} A \quad \Rightarrow \quad A_j \to_{w} A \quad \Rightarrow \quad A_j \to_{wd} A.$$ 
				

				\subsection{Convergence of random bounded linear operators} \label{app:conv}
				We collect some useful results on convergence of random bounded linear operators. 
				\begin{lemma}\label{applem1} Let $\{A_j\}_{j \geq 1}$ and $\{B_j\}_{j \geq 1}$ be sequences of random bounded linear operators. Then the following hold.
					\begin{itemize}  \setlength\itemsep{0em}
						\item[$\mathrm{(i)}$] If $A_j \to_{wd} A$, then $\sup_j\|A_j\| = \mathrm{O}_p(1)$.	 
						\item[$\mathrm{(ii)}$] If $A_j \to_{\mathcal L_{\mathcal H}} A$ and $B_j \to_{\mathcal L_{\mathcal H}} B$ , then $A_jB_j \to_{\mathcal L_{\mathcal H}} AB$. 
						\item[$\mathrm{(iii)}$] If $A_j \to_{\mathcal L_{\mathcal H}} A$, then $A_j^\ast \to_{\mathcal L_{\mathcal H}} A^\ast$.
					\end{itemize}
				\end{lemma}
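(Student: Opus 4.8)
The plan is to dispose of (iii) and (ii) by direct operator-norm estimates and to regard (i) as the substantive claim, where the hypothesis that the limit is a genuine random bounded operator carries the weight.

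Part (iii) is immediate. Since $\langle A_j^\ast x,y\rangle = \langle A_j y,x\rangle$ is measurable in $\omega$ for all $x,y\in\mathcal H$, $A_j^\ast$ is again a random bounded operator; and because adjunction is an isometry of $\mathcal L_{\mathcal H}$, $\|A_j^\ast - A^\ast\|_{\mathcal L_{\mathcal H}} = \|(A_j-A)^\ast\|_{\mathcal L_{\mathcal H}} = \|A_j - A\|_{\mathcal L_{\mathcal H}} \to_p 0$.

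For part (ii) I would first record that $A_jB_j$ is a random bounded operator: fixing an orthonormal basis $\{e_i\}_{i\ge 1}$ of $\mathcal H$, the identity $\langle A_jB_jx,y\rangle = \lim_{N\to\infty}\sum_{i=1}^N \langle B_jx,e_i\rangle\langle A_je_i,y\rangle$ (the partial sums equalling $\langle A_j(P_NB_jx),y\rangle$, with $A_j$ continuous since it is a.s.\ bounded) displays it as a pointwise limit of measurable functions of $\omega$. The convergence then follows from $A_jB_j - AB = A_j(B_j-B) + (A_j - A)B$ and submultiplicativity: $\|A_jB_j - AB\|_{\mathcal L_{\mathcal H}} \le \|A_j\|_{\mathcal L_{\mathcal H}}\|B_j-B\|_{\mathcal L_{\mathcal H}} + \|A_j-A\|_{\mathcal L_{\mathcal H}}\|B\|_{\mathcal L_{\mathcal H}}$, where $\|B\|_{\mathcal L_{\mathcal H}}$ is a.s.\ finite and $\|A_j\|_{\mathcal L_{\mathcal H}} \le \|A_j-A\|_{\mathcal L_{\mathcal H}} + \|A\|_{\mathcal L_{\mathcal H}} = \mathrm{O}_p(1)$ (one may alternatively invoke (i), using $A_j\to_{\mathcal L_{\mathcal H}}A \Rightarrow A_j\to_{wd}A$), so both summands are $o_p(1)$.

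Part (i) is the heart of the matter; the content is that $\{\|A_j\|_{\mathcal L_{\mathcal H}}\}_{j\ge 1}$ is bounded in probability. By separability, fix a countable dense subset $D$ of the closed unit ball of $\mathcal H$, so that $\|B\|_{\mathcal L_{\mathcal H}} = \sup_{x,y\in D}|\langle Bx,y\rangle|$ for every bounded $B$ (in particular an operator norm of a random operator is a genuine, measurable random variable). The decisive point is that $A$, being a random element of $\mathcal L_{\mathcal H}$, satisfies $\|A\|_{\mathcal L_{\mathcal H}}<\infty$ a.s., so given $\epsilon>0$ there is $M_0$ with $P(\|A\|_{\mathcal L_{\mathcal H}} \ge M_0) < \epsilon$, and hence $P(\max_{x,y\in F}|\langle Ax,y\rangle| \ge M_0) < \epsilon$ uniformly over finite $F\subset D$. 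For each fixed finite $F$, $B\mapsto \max_{x,y\in F}|\langle Bx,y\rangle|$ is a continuous function of the finitely many reals $\langle Bx,y\rangle$, so $A_j\to_{wd}A$ together with the continuous mapping theorem gives $\max_{x,y\in F}|\langle A_jx,y\rangle| \to_d \max_{x,y\in F}|\langle Ax,y\rangle|$, whence $\limsup_j P(\max_{x,y\in F}|\langle A_jx,y\rangle| \ge M_0) \le \epsilon$. The delicate step — and the only place where the infinite-dimensionality of $\mathcal L_{\mathcal H}$ is felt — is to upgrade this to a bound uniform in $j$ and in the increasing net $F\uparrow D$ (a priori the threshold in $j$ beyond which the finite-$F$ comparison is in force depends on $F$). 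This is a uniform boundedness (Banach--Steinhaus) phenomenon in the complete metric linear space of random variables under convergence in probability: the continuous linear maps $x\mapsto A_jx$ and then $y\mapsto\langle A_jx,y\rangle$ form, by finite-dimensional convergence to the genuine limit $A$, a pointwise bounded — hence equicontinuous — family, and genuineness of $A$ is exactly what precludes the escape of probability mass across infinitely many coordinates that would otherwise break the estimate. I would complete this step following \cite{skorohod2001}, where these modes of convergence and the attendant boundedness and compactness results are developed; this uniformization is the principal obstacle, whereas (ii) and (iii) are routine once (i), or the simpler estimate used in its place, is available.
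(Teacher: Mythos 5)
Your proof is correct and follows essentially the same route as the paper: part (i) is in both cases ultimately delegated to \cite{skorohod2001} (the paper cites the proof of Theorem 3 in Ch.~1, exactly the uniformization step you flag as the only non-routine point), and part (iii) is the same isometry-of-adjoints observation. The only, immaterial, difference is in (ii), where you group the telescoping as $A_j(B_j-B)+(A_j-A)B$ and bound $\|A_j\|_{\mathcal L_{\mathcal H}}\leq\|A_j-A\|_{\mathcal L_{\mathcal H}}+\|A\|_{\mathcal L_{\mathcal H}}$ so that (i) is never needed, whereas the paper writes $\|A_jB_j-AB\|_{\mathcal L_{\mathcal H}}\leq\|A_j-A\|_{\mathcal L_{\mathcal H}}\|B_j\|_{\mathcal L_{\mathcal H}}+\|A\|_{\mathcal L_{\mathcal H}}\|B_j-B\|_{\mathcal L_{\mathcal H}}$ and invokes (i) to control $\|B_j\|_{\mathcal L_{\mathcal H}}$.
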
	
				\begin{proof}  (i) is given in \cite{skorohod2001}; see, the proof of Theorem 3 in Ch 1.  
					
					Note  that $\|A_jB_j-AB\|_{\mathcal L_{\mathcal H}} \leq \|A_j-A\|_{\mathcal L_{\mathcal H}} \|B_j\|_{\mathcal L_{\mathcal H}} + \|A\|_{\mathcal L_{\mathcal H}}\|B_j-B\|_{\mathcal L_{\mathcal H}}$. Using (i), we may deduce that the right hand side converges to zero, hence (ii) is established. 
					
					(iii) immediately follows from the fact that $\|A_j - A\|_{\mathcal L_{\mathcal H}} = \|A_j^\ast - A^\ast\|_{\mathcal L_{\mathcal H}}$.			
				\end{proof}
				
				\begin{lemma}\label{applem2} Let $\{A_j\}_{j \geq 1}$ be a random sequence in $\mathcal L_{\mathcal H}$ and for $j \geq 1$ $\{\Psi_{A_j}\}_{j \geq 1}$ be their characteristic functionals that are given by $\Psi_{A_j} (\mathcal U) = \mathbb{E} \exp\left( \ii \Tr(A_j \mathcal U) \right)$, where $\mathcal U$ is any operator of the form 
					\begin{equation}
						\mathcal U = \sum_{i=1}^k x_i \otimes y_i, \quad x_i, y_i \in \mathcal H, \quad i=1,\ldots,k,\quad k=1,2, \ldots. \label{equps01}
					\end{equation}
					Then the following hold.
					\begin{itemize} \setlength\itemsep{0em}
						\item[$\mathrm{(i)}$]	$\Psi_{A_j} (\mathcal U) \to \Psi_{A} (\mathcal U) \, \text { for all $\mathcal U$ of the form \eqref{equps01}}$ if and only if  $A_j \to_{wd} A$.  
						\item[$\mathrm{(ii)}$] (Cram\'{e}r-Wold device) $A_j \to_{wd} A$ if and only if  
						$\sum_{i=1}^k \langle A_jy_i,x_i \rangle  \to_d \sum_{i=1}^k \langle Ay_i,x_i \rangle$ 
						for any $k$, $x_1,\ldots,x_k \in \mathcal H$ and $y_1,\ldots,y_k \in \mathcal H$.  
						\item[$\mathrm{(iii)}$] If $\Psi_{A}(\alpha (x\otimes y))$ is continuous in $\alpha$  for all $x,y \in \mathcal H$, then there exists a sequence of random operators $\{\widetilde{A}_j\}_{j \geq 1}$ and $\widetilde{A}$ such that (i) $	\Psi_{\widetilde{A}_j}(\cdot) = \Psi_{A_j}(\cdot)$, (ii) $\Psi_{\widetilde{A}}(\cdot) = \Psi_{A}(\cdot)$, and $A_j \to_w  \widetilde{A}$.
					\end{itemize}
				\end{lemma}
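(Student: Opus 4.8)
The plan is to prove (i) and (ii) as the L\'evy continuity theorem and the Cram\'er--Wold device transcribed to random operators, and (iii) as a Skorokhod almost-sure representation. For (i), the key elementary fact is that for any $B\in\mathcal L_{\mathcal H}$ and any $\mathcal U=\sum_{i=1}^k x_i\otimes y_i$ of the form \eqref{equps01} one has $B(x_i\otimes y_i)=x_i\otimes(By_i)$, and the trace of the rank-one operator $x_i\otimes(By_i)$ equals $\langle By_i,x_i\rangle$, so $\Tr(B\mathcal U)=\sum_{i=1}^k\langle By_i,x_i\rangle$. Hence $\Psi_{A_j}(\mathcal U)=\mathbb E\exp\!\big(\ii\sum_{i=1}^k\langle A_jy_i,x_i\rangle\big)$, and, since the $x_i$ may be freely rescaled, as $\mathcal U$ ranges over all operators of the form \eqref{equps01} this recovers the joint characteristic function of the vector $(\langle A_jy_1,x_1\rangle,\dots,\langle A_jy_k,x_k\rangle)$ at an arbitrary point of $\mathbb R^k$. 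Thus $\Psi_{A_j}(\mathcal U)\to\Psi_A(\mathcal U)$ for all such $\mathcal U$ is exactly pointwise convergence of the characteristic functions of every finite-dimensional vector of this form; because $\Psi_A$ is the characteristic functional of a genuine random operator, each limit is a genuine characteristic function, and L\'evy's theorem gives $(\langle A_jy_1,x_1\rangle,\dots)\to_d(\langle Ay_1,x_1\rangle,\dots)$ for every $k$ and every choice of vectors, i.e.\ $A_j\to_{wd}A$. The converse is immediate upon taking $f=\cos(\xi_1+\dots+\xi_k)$ and $f=\sin(\xi_1+\dots+\xi_k)$ in the definition of $\to_{wd}$.

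For (ii), combine (i) with the classical Cram\'er--Wold theorem on $\mathbb R^k$: a linear combination $\sum_{i=1}^k c_i\langle A_jy_i,x_i\rangle$ equals $\sum_{i=1}^k\langle A_j(c_iy_i),x_i\rangle$, again of the admissible form, so convergence in distribution of all such scalars forces joint convergence in distribution of every finite family $(\langle A_jy_1,x_1\rangle,\dots,\langle A_jy_k,x_k\rangle)$, which by (i) is $A_j\to_{wd}A$; conversely, if $A_j\to_{wd}A$ then each such family converges jointly, and the continuous-mapping theorem applied to $(\xi_1,\dots,\xi_k)\mapsto\sum c_i\xi_i$ yields convergence of every linear combination.

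For (iii), fix a countable dense set $\{e_m\}_{m\ge1}\subset\mathcal H$ and regard the scalar matrices $\Xi_j=(\langle A_je_m,e_n\rangle)_{m,n\ge1}$ and $\Xi=(\langle Ae_m,e_n\rangle)_{m,n\ge1}$ as random elements of the Polish space $\mathbb R^{\mathbb N\times\mathbb N}$ with the product topology. Under the standing convergence $A_j\to_{wd}A$, part (i) shows every finite-dimensional marginal of $\Xi_j$ converges in distribution to that of $\Xi$, hence $\Xi_j\to_d\Xi$; the classical Skorokhod representation theorem then furnishes, on a common probability space, copies $\widetilde\Xi_j=_d\Xi_j$ and $\widetilde\Xi=_d\Xi$ with $\widetilde\Xi_j\to\widetilde\Xi$ almost surely coordinatewise. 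It remains to (a) reassemble $\widetilde\Xi_j,\widetilde\Xi$ into genuine random bounded linear operators $\widetilde A_j,\widetilde A$ whose characteristic functionals agree with those of $A_j,A$ (using the tightness $\sup_j\|A_j\|_{\mathcal L_{\mathcal H}}=\mathrm O_p(1)$ from Lemma~\ref{applem1}(i) so that each column of $\widetilde\Xi_j$ a.s.\ defines a bounded functional and the rows assemble into a bounded operator, and using the continuity of $\Psi_A(\alpha(x\otimes y))$ in $\alpha$ to rule out escape of mass and force the limit matrix to correspond a.s.\ to a bounded operator; the distributional identities then transfer to the characteristic functionals via the identity from (i)); and (b) upgrade coordinatewise a.s.\ convergence to $\langle\widetilde A_jx,y\rangle\to\langle\widetilde Ax,y\rangle$ in probability for all $x,y\in\mathcal H$, which follows because the uniform $\mathrm O_p(1)$ bound makes the bilinear forms $\langle\widetilde A_j\cdot,\cdot\rangle$ equicontinuous on bounded sets, so convergence on the dense set $\{e_m\}$ propagates by bilinearity and an $\varepsilon/3$ argument to all pairs.

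The step I expect to be the main obstacle is exactly (a)--(b): passing from the Skorokhod representation of the scalar matrix back to random \emph{operators} and verifying the convergence is genuinely in the $\to_w$ sense rather than merely coordinatewise. This is the only place the continuity hypothesis on $\Psi_A$ is essential, and the cleanest discharge is to check the hypotheses of the corresponding statement for random bounded linear operators in \cite{skorohod2001} (the companion of Lemma~\ref{applem1}(i)) rather than reprove it; a self-contained argument needs precisely the tightness bound of Lemma~\ref{applem1}(i) together with the equicontinuity/$\varepsilon/3$ propagation sketched above.
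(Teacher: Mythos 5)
Your proofs of (i) and (ii) are correct, but they take a genuinely different route from the paper: the paper's entire proof of Lemma \ref{applem2} is a citation to Chapter 3.3 of \cite{skorohod2001}, whereas you rederive (i) and (ii) from scratch. Your key identity $\Tr(B\mathcal U)=\sum_{i=1}^k\langle By_i,x_i\rangle$ for $\mathcal U=\sum_i x_i\otimes y_i$ is exactly right, the rescaling observation does recover the joint characteristic function of $(\langle A_jy_1,x_1\rangle,\dots,\langle A_jy_k,x_k\rangle)$ at every point of $\mathbb R^k$, and the reduction of (ii) to the classical Cram\'er--Wold device via $\sum_i c_i\langle A_jy_i,x_i\rangle=\sum_i\langle A_j(c_iy_i),x_i\rangle$ is clean and complete. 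What your approach buys is a self-contained, elementary verification of the two parts that are actually used repeatedly in the paper's main proofs (Claims 1--2 of Theorem \ref{prop1}, the derivation of \eqref{eqref0} and \eqref{eqref1}); what the paper's approach buys is uniform coverage of all three parts, including the one that is genuinely nontrivial.

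That nontrivial part is (iii), and there your argument does not close. The Skorokhod representation on $\mathbb R^{\mathbb N\times\mathbb N}$ gives almost surely convergent copies of the coordinate matrices, but the passage back to random \emph{bounded operators} is the whole content of the theorem: continuity of $\Psi_A(\alpha(x\otimes y))$ in $\alpha$ along rays guarantees only that each bilinear form $\langle Ax,y\rangle$ is a genuine (non-defective) random variable, i.e.\ that the limit is a weak (cylindrical) random operator; it does not by itself ``force the limit matrix to correspond a.s.\ to a bounded operator'' (compare cylindrical white noise on $\mathcal H$, whose characteristic functional is perfectly continuous along rays yet which is generated by no $\mathcal H$-valued random element). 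Likewise, Lemma \ref{applem1}(i) gives $\sup_j\|A_j\|_{\mathcal L_{\mathcal H}}=\mathrm O_p(1)$ only once one already knows $A_j\to_{wd}A$ for a genuine random operator $A$, so invoking it to build $\widetilde A_j$ is circular at the point where you need it most. You recognize this and propose to discharge steps (a)--(b) by checking the corresponding statement in \cite{skorohod2001} --- which is precisely the paper's one-line proof of the entire lemma, so for part (iii) your proposal ultimately reduces to the same citation rather than an independent argument. As a minor point, the displayed conclusion $A_j\to_w\widetilde A$ in the lemma is presumably a typo for $\widetilde A_j\to_w\widetilde A$, and your construction targets the latter, which is the version actually needed.
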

				\begin{proof}
					The results can be found in or deduced from Chapter 3.3 of \cite{skorohod2001}. 
				\end{proof}
				
				\section{Mathematical Proofs}\label{appproof}
				Let $\{A_j\}_{j\geq 1}$ be a sequence in the space equipped with norm $\|\cdot\|_{\mathcal B}$ and satisfy that $\|A_j-A\|_{\mathcal B} = \mathrm{O}_p(n)$. 	As is common in the literature, we sometimes, for convenience, write this as  $A_j = A + \mathrm{O}_p(n)$. We similarly write $A_j = A + \mathrm{o}_p(n)$ to denote $\|A_j-A\|_{\mathcal B} = \mathrm{o}_p(n)$.  
				\subsection{Mathematical proofs for the results in Sections \ref{sec:fpca} and  \ref{sec:harris}}  \label{appproof1}
				\subsubsection{Preliminary results} \label{appsecproof1}
				We provide useful lemmas.  
			
				\begin{lemma}\label{lem1} Under Assumptions \ref{assumm0} for every $k\geq 1$ and $x_1,\ldots,x_k \in \mathcal H$, 
					\begin{equation}
						\frac{1}{\sqrt{T}} \sum_{t=1}^{\lfloor Ts\rfloor} \mathcal E_{k,t}   \quad \to_d \quad   W_k(s) \label{eqweakconv}
					\end{equation}
					in the Skorohod space $\mathcal D[0,1]^k$, where  $\mathcal E_{k,t} = (\langle \mathcal E_{t}, x_1 \rangle, \ldots, \langle \mathcal E_{t}, x_k \rangle)'$ and $W_k(s) = (\langle W(s),x_1 \rangle,$ $\ldots,\langle W(s),x_k \rangle)'$.
					\begin{proof}
						Under the summability conditions $\sum_{j=1}^{\infty}j\|{\Phi}_j\|_{\mathcal L_{\mathcal H}}<\infty$ and $\sum_{j=1}^{\infty}j\|\widetilde{\Phi}_j\|_{\mathcal L_{\mathcal H}}<\infty$,  	$\{\mathcal E_t^N\}_{t\in \mathbb{Z}}$ and 	$\{\mathcal E_t^S\}_{t\in \mathbb{Z}}$ are, respectively, so-called $L^2$-m-approximable \citep[Proposition 2.1]{hormann2010}, and then $\{\mathcal E_t\}_{t\in \mathbb{Z}}$ is also $L^2$-m-approximable \citep[Lemma 2.1]{hormann2010}. Then the desired weak convergence result follows from Theorem 1.1. of \cite{berkes2013weak}.	
					\end{proof} 
				\end{lemma}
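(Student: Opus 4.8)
The plan is to derive the finite-dimensional invariance principle \eqref{eqweakconv} from an invariance principle for the $\mathcal H$-valued process $\{\mathcal E_t\}_{t\in\mathbb Z}$ together with the continuous mapping theorem, rather than arguing directly with the $k$-dimensional vectors. First I would record that, by the definitions in Section~\ref{sec:model}, $\mathcal E_t = P^N\zeta_t + P^S\eta_t = \sum_{j\ge 0}\Psi_j\varepsilon_{t-j}$ with $\Psi_j := P^N\Phi_j + P^S\widetilde\Phi_j$, so that $\|\Psi_j\|_{\mathcal L_{\mathcal H}}\le \|\Phi_j\|_{\mathcal L_{\mathcal H}}+\|\widetilde\Phi_j\|_{\mathcal L_{\mathcal H}}$ and hence, by the standing summability $\sum_{j\ge0}j\|\Phi_j\|_{\mathcal L_{\mathcal H}}<\infty$ and Assumption~\ref{assumm0}-(i), $\sum_{j\ge0}j\|\Psi_j\|_{\mathcal L_{\mathcal H}}<\infty$. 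In particular $\{\mathcal E_t\}$ is a well-defined, mean-zero, stationary sequence in $L^2_{\mathcal H}$, and its long-run covariance operator is precisely the operator $\Omega$ in \eqref{omegat}, which is the covariance operator of the Brownian motion $W$ fixed in Section~\ref{sec:prelim}.

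Second, I would verify that $\{\mathcal E_t\}$ is $L^2$-$m$-approximable: defining the coupled sequence $\mathcal E_t^{(m)}$ by replacing every innovation $\varepsilon_s$ with $s\le t-m$ in the series for $\mathcal E_t$ by an independent copy, the summands attached to distinct lags are independent and mean zero, so $(\mathbb E\|\mathcal E_t-\mathcal E_t^{(m)}\|^2)^{1/2}\le c\sum_{j\ge m}\|\Psi_j\|_{\mathcal L_{\mathcal H}}$ for a constant $c$ depending only on $\mathbb E\|\varepsilon_0\|^2<\infty$, and $\sum_{m\ge1}\sum_{j\ge m}\|\Psi_j\|_{\mathcal L_{\mathcal H}}=\sum_{j\ge1}j\|\Psi_j\|_{\mathcal L_{\mathcal H}}<\infty$. (Alternatively one shows $L^2$-$m$-approximability of $\{\mathcal E^N_t\}$ and $\{\mathcal E^S_t\}$ separately — the route used in the work of H\"ormann and Kokoszka — and adds them.) This is the step in which the moment and summability hypotheses do the actual work.

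Third, I would invoke the Hilbert-space invariance principle for $L^2$-$m$-approximable sequences of Berkes, Horv\'ath, and Rice to conclude $T^{-1/2}\sum_{t=1}^{\lfloor Ts\rfloor}\mathcal E_t \to_d W$ in the Skorohod space $\mathcal D([0,1],\mathcal H)$. Finally, the map $\pi:\mathcal H\to\mathbb R^k$ given by $\pi(x)=(\langle x,x_1\rangle,\dots,\langle x,x_k\rangle)'$ is bounded linear, hence continuous, and induces a continuous map $\mathcal D([0,1],\mathcal H)\to\mathcal D([0,1])^k$; since $\pi\bigl(T^{-1/2}\sum_{t\le\lfloor Ts\rfloor}\mathcal E_t\bigr)=T^{-1/2}\sum_{t\le\lfloor Ts\rfloor}\mathcal E_{k,t}$ and $\pi\circ W=W_k$, the continuous mapping theorem delivers \eqref{eqweakconv}. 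I expect the only genuine obstacle to be the bookkeeping in the second step — the coupling inequality, and, for an argument not leaning on cited FCLTs, the tightness underpinning the Hilbert-space invariance principle; everything else is routine. An alternative that bypasses the $\mathcal H$-valued FCLT is to treat $\{\mathcal E_{k,t}\}$ directly as a $k$-dimensional $L^2$-$m$-approximable (equivalently, absolutely-summable) linear sequence and apply a Phillips--Solo / Beveridge--Nelson decomposition of its partial sums together with Donsker's theorem in $\mathbb R^k$, the limiting covariance being $(\langle\Omega x_j,x_i\rangle)_{i,j\le k}$.
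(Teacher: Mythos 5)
Your proposal is correct and follows essentially the same route as the paper: establish $L^2$-$m$-approximability of $\{\mathcal E_t\}$ from the summability of $\{\Phi_j\}$ and $\{\widetilde\Phi_j\}$ (the paper does this componentwise for $\{\mathcal E_t^N\}$ and $\{\mathcal E_t^S\}$ via H\"ormann and Kokoszka's Proposition 2.1 and Lemma 2.1, which is the alternative you mention), and then invoke the invariance principle of Berkes, Horv\'ath, and Rice. Your explicit coupling inequality and the final continuous-mapping projection onto $(\langle\cdot,x_1\rangle,\dots,\langle\cdot,x_k\rangle)'$ merely spell out steps the paper leaves implicit.
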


				\begin{lemma}  \label{lem2} Under Assumptions \ref{assumm0} and \ref{assum2}, $\widehat{\Omega}_{\varphi}$ and $\widehat{\Gamma}_{\varphi}$  in \eqref{omegae}  are consistent, i.e.,	$	\widehat{\Omega}_{\varphi} \to_{\mathcal L_{\mathcal H}} \Omega$ and $\widehat{\Gamma}_{\varphi} \to_{\mathcal L_{\mathcal H}}  \Gamma.$	Moreover, let $\overline{\Omega}_{\varphi}$ and  $\overline{\Gamma}_{\varphi}$ be defined as in Section \ref{sec:deterministic}. Under the same assumptions,   $\overline{\Omega}_{\varphi}$ and   $\overline{\Gamma}_{\varphi}$ are consistent in the same sense. 
				\end{lemma}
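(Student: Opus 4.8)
The plan is to reduce the feasible kernel estimators to \emph{infeasible} ones built directly from the unobserved stationary sequence $\{\mathcal E_t\}$, for which consistency is available off the shelf, and then to control the error caused by substituting the preliminary FPCA projections. The starting point is that, since $\Delta X_t=\zeta_t$ and $P^S\Phi(1)=0$, one has $P^N\Delta X_t=\mathcal E^N_t$ and $P^SX_t=\mathcal E^S_t$, so the population analogue of $Z_{\varphi,t}$ is exactly $\mathcal E_t$. Let $\widetilde\Omega,\widetilde\Gamma$ be the estimators defined as in \eqref{omegae}--\eqref{gammae} but with $Z_{\varphi,t}$ replaced by $\mathcal E_t$. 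As in the proof of Lemma~\ref{lem1}, $\{\mathcal E_t\}_{t\in\mathbb Z}$ is $L^2$-$m$-approximable under Assumption~\ref{assumm0}, and Assumption~\ref{assum2} is precisely the kernel/bandwidth condition under which \citet{horvath2013estimation} (see also \citealp{rice2017plug}) establish consistency of such estimators for a weakly dependent stationary functional time series; hence $\widetilde\Omega\to_{\mathcal L_{\mathcal H}}\Omega$ and $\widetilde\Gamma\to_{\mathcal L_{\mathcal H}}\Gamma$, and it suffices to prove $\widehat\Omega_\varphi-\widetilde\Omega\to_{\mathcal L_{\mathcal H}}0$, the argument for $\widehat\Gamma_\varphi$ being identical up to dropping one of the two symmetric parts.

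For the approximation error I would use the exact identity
\[
Z_{\varphi,t}-\mathcal E_t=(\widehat P^N_\varphi-P^N)(\Delta X_t-X_t)=-(\widehat P^N_\varphi-P^N)X_{t-1},
\]
which holds because $\widehat P^S_\varphi-P^S=-(\widehat P^N_\varphi-P^N)$; write $R_t$ for this quantity. By Theorem~\ref{prop1} (equivalently, \citealp[Proposition~2.1]{Chang2016152}) one has $\|\widehat P^N_\varphi-P^N\|_{\mathcal L_{\mathcal H}}=\mathrm{O}_p(T^{-1})$, and $\sup_{t\le T}\|X_t\|=\mathrm{O}_p(T^{1/2})$ by \eqref{eqcointeg2} and Lemma~\ref{lem1}. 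Expanding $Z_{\varphi,t}\otimes Z_{\varphi,t-s}-\mathcal E_t\otimes\mathcal E_{t-s}=R_t\otimes\mathcal E_{t-s}+\mathcal E_t\otimes R_{t-s}+R_t\otimes R_{t-s}$ and substituting into $\widehat\Omega_\varphi-\widetilde\Omega$, each resulting term, after pulling the $t$-free operator $\widehat P^N_\varphi-P^N$ out of the inner sum over $t$, is bounded in $\mathcal L_{\mathcal H}$-norm by $\|\widehat P^N_\varphi-P^N\|_{\mathcal L_{\mathcal H}}$ (once or twice) times one of $T^{-1}\sum_s|\mathrm{k}(s/h)|\,\|\sum_t X_{t-1}\otimes\mathcal E_{t-s}\|_{\mathcal L_{\mathcal H}}$, $T^{-1}\sum_s|\mathrm{k}(s/h)|\,\|\sum_t\mathcal E_t\otimes X_{t-s-1}\|_{\mathcal L_{\mathcal H}}$, or $T^{-1}\sum_s|\mathrm{k}(s/h)|\,\|\sum_t X_{t-1}\otimes X_{t-s-1}\|_{\mathcal L_{\mathcal H}}$. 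Granting the uniform-in-$s$ bounds $\sup_s\|\sum_t X_{t-1}\otimes\mathcal E_{t-s}\|_{\mathcal L_{\mathcal H}}=\mathrm{O}_p(T)$, $\sup_s\|\sum_t\mathcal E_t\otimes X_{t-s-1}\|_{\mathcal L_{\mathcal H}}=\mathrm{O}_p(T)$, and $\sup_s\|\sum_t X_{t-1}\otimes X_{t-s-1}\|_{\mathcal L_{\mathcal H}}=\mathrm{O}_p(T^2)$, together with the facts that $\mathrm{k}$ is bounded and supported on $[0,\kappa]$ (so at most $\mathrm{O}(h)$ lags contribute), every such term is $\mathrm{O}_p(h/T)=\mathrm{o}_p(1)$ under Assumption~\ref{assum2}; the diagonal correction $T^{-1}\sum_t(Z_{\varphi,t}\otimes Z_{\varphi,t}-\mathcal E_t\otimes\mathcal E_t)$ is $\mathrm{O}_p(T^{-1})$ by the same estimates.

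I expect the main obstacle to be establishing those cross-product sums \emph{uniformly in the lag $s$}: a crude termwise bound, using $\sup_t\|R_t\|=\mathrm{O}_p(T^{-1/2})$ and Cauchy--Schwarz, only gives $\mathrm{O}_p(hT^{-1/2})$, which would require $h=\mathrm{o}(T^{1/2})$ rather than merely $h/T\to0$. The remedy is to keep the operator-valued partial sums intact and exploit their internal cancellation: decomposing $X_{t-1}$ into its random-walk part $\Phi(1)\sum_{j\le t-1}\varepsilon_j$ and its stationary part $\eta_{t-1}$, the stationary piece contributes $\mathrm{O}_p(T)$, resp.\ $\mathrm{O}_p(T^2)$, by a law of large numbers for products of weakly dependent sequences, and the random-walk piece is an operator-valued partial-sum/stochastic-integral object of the same order; the invariance principle of Lemma~\ref{lem1} and the $L^2$-$m$-approximability of $\{\mathcal E_t\}$ make both bounds hold uniformly over $0\le s<T$, since replacing $X_{t-1}$ by $X_{t-s-1}$ changes the partial sums only through increments accumulated over $\le\kappa h=\mathrm{o}(T)$ periods.

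The deterministic-term versions follow by the same route. Demeaning annihilates $\mu_1$ exactly, so in Model~D1 one has $\overline U_t=X_t-\bar X$ with $\Delta\overline U_t=\zeta_t$, and ordinary-least-squares detrending annihilates $\mu_1+\mu_2t$ exactly, leaving $\widetilde U_t=X_t-\bar X-(t-\tfrac{T+1}{2})\hat\beta_X$ with $\hat\beta_X=\mathrm{O}_p(T^{-1/2})$ and $\Delta\widetilde U_t=\zeta_t-\hat\beta_X$ in Model~D2. Repeating the argument above with $\overline P^N_\varphi$, resp.\ $\widetilde P^N_\varphi$, in place of $\widehat P^N_\varphi$ — whose $\mathrm{O}_p(T^{-1})$ rates are the analogue of Theorem~\ref{prop1} for the demeaned/detrended preliminary estimators, proved exactly as in \citet[Proposition~2.1]{Chang2016152} — the only new contributions are $t$-constant terms such as $\overline P^S_\varphi\bar X$ and, in Model~D2, a $t$-linear remainder proportional to $\widetilde P^S_\varphi\hat\beta_X$. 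Since $P^S\bar X=T^{-1}\sum_t\mathcal E^S_t=\mathrm{O}_p(T^{-1/2})$ and $\overline P^S_\varphi-P^S=\mathrm{O}_p(T^{-1})$, while $P^S\hat\beta_X=\mathrm{O}_p(T^{-3/2})$ so that $(t-\tfrac{T+1}{2})\widetilde P^S_\varphi\hat\beta_X=\mathrm{O}_p(T^{-1/2})$ uniformly in $t$, these terms enter the kernel sums exactly as $R_t$ did and contribute only $\mathrm{O}_p(h/T)$; hence $\overline\Omega_\varphi,\overline\Gamma_\varphi$ are consistent under Model~D1 and $\widetilde\Omega_\varphi,\widetilde\Gamma_\varphi$ under Model~D2.
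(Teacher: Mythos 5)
Your proposal follows essentially the same route as the paper's own proof: both reduce the feasible kernel estimators to the infeasible ones built from $\{\mathcal E_t\}$ (using $P^N\Delta X_t=\mathcal E_t^N$, $P^SX_t=\mathcal E_t^S$), invoke the $L^2$-$m$-approximability established in the proof of Lemma \ref{lem1} together with Theorem 2 of \cite{horvath2013estimation} (and Theorem 5.3 of \cite{kokoszka2016kpss} for the detrended case) for consistency of the infeasible estimators, and then argue that the substitution error induced by $\|\widehat P^N_\varphi-P^N\|_{\mathcal L_{\mathcal H}}=\mathrm{O}_p(T^{-1})$ is negligible. The only difference is one of detail: the paper dispatches the substitution step in a single line, whereas you correctly flag that a crude termwise bound only yields $\mathrm{O}_p(hT^{-1/2})$ and therefore supply the uniform-in-lag partial-sum bounds needed to cover every bandwidth permitted by Assumption \ref{assum2} — a refinement of, not a departure from, the paper's argument.
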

				\begin{proof}
					We know from Theorem \ref{prop1} that $\widehat{P}^N_{\varphi}  = P^N + \mathrm{O}_p(T^{-1})$ and  $\widehat{P}^S_{\varphi}  = P^S + \mathrm{O}_p(T^{-1})$, hence \begin{equation*} 
						\widehat{\Omega}_{\varphi} = \widehat{\Omega}^{NN}_{\varphi} + \widehat{\Omega}^{NS}_{\varphi} + \widehat{\Omega}^{SN}_{\varphi} + \widehat{\Omega}^{SS}_{\varphi}  =  \widehat{\Omega}_{0,\varphi}  + \mathrm{O}_p(T^{-1}),
					\end{equation*} where $\widehat{\Omega}_{0,\varphi}  = P^N\widehat{\Omega}_{\varphi}  P^N + P^N\widehat{\Omega}_{\varphi}  P^S  + P^S\widehat{\Omega}_{\varphi}  P^N +  P^S\widehat{\Omega}_{\varphi} P^S$. Note that $\widehat{\Omega}_{0,\varphi}$ is the sample long-run covariance of $\{\mathcal E_t\}_{t =1}^T$, where $\mathcal E_t = \mathcal E^N_t + \mathcal E^S_t$, $\mathcal E^N_t = \sum_{j=0}^\infty P^N \Phi_j \varepsilon_{t-j}$, and $\mathcal E^S_t =  \sum_{j=0}^\infty P^S \widetilde{\Phi}_j \varepsilon_{t-j}$. 
					We know from our proof of Lemma \ref{lem1} that the summability conditions $\sum_{j=1}^{\infty}j\|{\Phi}_j\|_{\mathcal L_{\mathcal H}}<\infty$ and $\sum_{j=1}^{\infty}j\|\widetilde{\Phi}_j\|_{\mathcal L_{\mathcal H}}<\infty$ imply that   
					$\{\mathcal E_t\}_{t\in \mathbb{Z}}$ is $L^2$-m-approximable. We then apply Theorem 2 of \cite{horvath2013estimation} to obtain $\widehat{\Omega}_{0,\varphi}   \to_{\mathcal L_{\mathcal H}} \Omega$, which in turn establishes $\widehat{\Omega}_{\varphi}  \to_{\mathcal L_{\mathcal H}} \Omega$. 
					To show $\widehat{\Gamma}_{\varphi}  \to_{\mathcal L_{\mathcal H}}  \Gamma$, we note that $\widehat{\Gamma}_{\varphi}  = \widehat{\Gamma}_{0,\varphi}  + \mathrm{O}_p(T^{-1})$ where $\widehat{\Gamma}_{0,\varphi} = P^N\widehat{\Gamma}_{\varphi}  P^N + P^N\widehat{\Gamma}_{\varphi}  P^S  + P^S\widehat{\Gamma}_{\varphi}  P^N +  P^S\widehat{\Gamma}_{\varphi}  P^S$. Then  $\widehat{\Gamma}_{\varphi}  \to_{\mathcal L_{\mathcal H}} \Gamma$ may be deduced from the proof of Theorem 2 in \cite{horvath2013estimation}.
					
					For the cases with deterministic terms, the desired results are deduced from Theorem 2 of \cite{horvath2013estimation} (when Model D1 is true), Theorem 5.3 of \cite{kokoszka2016kpss} (when Model D2 is true),   and our previous proof for the case without deterministic terms.
				\end{proof}
				
				
				\subsubsection{Proofs of the main results} \label{appsecproof2}
				\paragraph*{\normalfont \textbf{Proof of Theorem \ref{prop1}.}}  We note the identity 
				\begin{equation} \label{eq001}
					\widehat{P}^N_{\varphi}  - P^N = P^S\widehat{P}^N_{\varphi} - P^N\widehat{P}^S_{\varphi}.
				\end{equation}
				Since $\widehat{P}^N_{\varphi}$ is the projection onto the first $\varphi$ leading eigenvectors, $P^N\widehat{C}\widehat{P}^S_{\varphi}= 	P^N\widehat{C}P^N\widehat{P}^S_{\varphi} + P^N\widehat{C}P^S\widehat{P}^S_{\varphi}  = P^N\widehat{C}(I-\sum_{j=1}^{\varphi} \langle \hat{v}_{j},\cdot \rangle \hat{v}_{j} ) =P^N\widehat{P}^S_{\varphi} \widehat{\Lambda}$,
				where $\widehat{\Lambda} = \sum_{j=1}^{\infty}  \hat{\lambda}_{j} \hat{v}_{j} \otimes \hat{v}_{j}$. 
				We  thus find that
				\begin{equation}
					T P^N\widehat{P}^S_{\varphi}  = -\left(T^{-1}P^N  \widehat{C} P^N\right)^\dag  P^N  \widehat{C} P^S + \left(T^{-1}P^N  \widehat{C} P^N\right)^\dag   P^N\widehat{P}^S_{\varphi}  \widehat{\Lambda}, \label{eq002}
				\end{equation}	
				where $(T^{-1}P^N  \widehat{C} P^N)^\dag$ is well defined since $T^{-1}P^N  \widehat{C} P^N$ is a finite rank operator, hence has a closed range. We first deduce from the weak convergence result given in \eqref{eqweakconv} and the continuous mapping theorem that the following holds under Assumption \ref{assumm0}: for $k \geq 1$, $x_1,\ldots,x_k \in \mathcal H$ and $y_1,\ldots,y_k \in \mathcal H$,
				\begin{equation*} 
					\langle T^{-1} P^N\widehat{C}P^N x_j, y_j \rangle = \frac{1}{T^2} \sum_{t=1}^T \langle P^N X_t, x_j \rangle  \langle P^N X_t,y_j \rangle   \to_d \int \langle W^N(r), x_j \rangle \langle W^N(r),y_j \rangle, \quad \text{jointly.}
				\end{equation*}
				This result implies that 
				\begin{equation} \label{eqref0}
					T^{-1} P^N\widehat{C}P^N   \to_{wd}   \int W^N(r) \otimes W^N(r),
				\end{equation}
				due to the Cram\'{e}r-Wold device (Lemma \ref{applem2}-(ii)). Let $X_{k,t}^S = (\langle P^SX_t, x_1 \rangle, \ldots, \langle P^SX_t, x_k \rangle)'$,   $X_{k,t}^N = (\langle P^NX_t, y_1 \rangle, \ldots, \langle P^NX_t, y_k \rangle)'$. We  deduce from the  convergence result given by Assumption \ref{assumW} that 
				
				\begin{equation*} 
					\frac{1}{T} \sum_{t=1}^T X_{k,t}^S  X_{k,t}^{N}\,'  \to_d  \int_{0}^1 d W_k^S(s)  W_k^N(s)' + \sum_{j=0}^\infty \mathbb{E}[\mathcal E^S_{k,t-j}\mathcal E^N_{k,t}\,'],
				\end{equation*} 
				where $\mathcal E^S_{k,t} = (\langle \mathcal E_{t}^S, x_1 \rangle, \ldots, \langle \mathcal E_{t}^S, x_k \rangle)'$, $\mathcal E^N_{k,t} = (\langle \mathcal E_{t}^N, y_1 \rangle, \ldots, \langle \mathcal E_{t}^N, y_k \rangle)'$, $W_k^S(s) =  (\langle W^S(s), x_1 \rangle,$ $\ldots, \langle W^S(s), x_k \rangle)'$ and $W_k^N(s) =  (\langle W^N(s), y_1 \rangle, \ldots, \langle W^N(s), y_k \rangle)'$. For any $x,y \in \mathcal H$, $\langle P^N\widehat{C}P^S x, y \rangle = T^{-1} \sum_{t=1}^T  \langle P^S X_t, x \rangle \langle P^N X_t,y \rangle $, from which we find that $\sum_{j=1}^k \langle P^N\widehat{C}P^S x_j, y_j \rangle = \tr(T^{-1}\sum_{t=1}^T X_{k,t}^S  X_{k,t}^{N}\,')$. From the continuous mapping theorem, we have
				\begin{equation}
					\sum_{j=1}^k	\langle P^N\widehat{C}P^S x_j, y_j \rangle \to_d \sum_{j=1}^k \left( \int \langle d W^S(r), x_j \rangle \langle W^N(r),y_j \rangle + \langle \Gamma^{NS}x_j,y_j \rangle\right). \label{eqadd0001}
				\end{equation}
				From \eqref{eqadd0001} and the Cram\'{e}r-Wold device (Lemma \ref{applem2}-(ii)), we deduce that 
				\begin{equation} \label{eqref1}
					P^N\widehat{C}P^S   \to_{wd}   \int dW^S(r) \otimes W^N(r) +  \Gamma^{NS}.
				\end{equation}
				It will be shown that the convergence results given by \eqref{eqref0} and \eqref{eqref1} can be strengthened as follows: \\ \vspace{-0.7em}
				\begin{claim}
					$T^{-1}P^N  \widehat{C} P^N \to_{\mathcal L_{\mathcal H}} \mathcal A_1=_{\fdd} \int W^N(r) \otimes W^N(r)$.	
				\end{claim}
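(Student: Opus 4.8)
The plan is to exploit the finite-dimensionality of $\mathcal H^N$ (which has dimension $\varphi<\infty$ by Assumption \ref{assumm0}): the sequence $T^{-1}P^N\widehat{C}P^N$ is made up of random operators that are a.s.\ supported on the fixed finite-dimensional subspace $\mathcal H^N$, and on such operators the weak mode of convergence $\to_w$ and the uniform mode $\to_{\mathcal L_{\mathcal H}}$ coincide. So the substantive work is only to upgrade the finite-dimensional-distribution convergence \eqref{eqref0} to convergence in probability after passing to suitable versions of the operators.

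First I would record that, writing $X_t^N=P^NX_t$ and using $P^S\Phi(1)=0$ (which holds since $\mathcal H^S=[\ran\Phi(1)]^\perp$), the decomposition \eqref{eqcointeg2} gives $X_t^N=\sum_{s=1}^t\mathcal E_s^N+\mathrm{O}_p(1)$ uniformly in $t\le T$; hence, by Lemma \ref{lem1} applied with $x_1,\dots,x_\varphi$ an orthonormal basis of $\mathcal H^N$, the partial-sum process $r\mapsto T^{-1/2}X_{\lfloor Tr\rfloor}^N$ converges in distribution in $\mathcal D([0,1],\mathcal H^N)$ to $W^N$. Since $f\mapsto\int_0^1 f(r)\otimes f(r)\,dr$ is continuous on $\mathcal D([0,1],\mathcal H^N)$ and $T^{-1}P^N\widehat{C}P^N=T^{-1}\sum_{t=1}^T\bigl(T^{-1/2}X_t^N\bigr)\otimes\bigl(T^{-1/2}X_t^N\bigr)$ is a Riemann sum for that functional, the continuous mapping theorem reproduces \eqref{eqref0}, i.e.\ $T^{-1}P^N\widehat{C}P^N\to_{wd}\int W^N(r)\otimes W^N(r)$.

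Next I would invoke the representation result Lemma \ref{applem2}-(iii). Its hypothesis holds because the bilinear forms of the candidate limit, $\langle\mathcal A_1x,y\rangle=\int_0^1\langle W^N(r),x\rangle\langle W^N(r),y\rangle\,dr$, are a.s.\ finite, so $\alpha\mapsto\Psi_{\mathcal A_1}(\alpha(x\otimes y))$ is continuous; combined with $\Psi$-convergence (Lemma \ref{applem2}-(i)) this yields versions $\widehat{A}_T$ of $T^{-1}P^N\widehat{C}P^N$ and an operator $\mathcal A_1=_{\fdd}\int W^N(r)\otimes W^N(r)$ on a common probability space with $\widehat{A}_T\to_w\mathcal A_1$, that is $\langle\widehat{A}_Tx,y\rangle\to_p\langle\mathcal A_1x,y\rangle$ for all $x,y\in\mathcal H$. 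Because equality of finite-dimensional distributions forces $\langle\widehat{A}_Tx,y\rangle=0$ a.s.\ whenever $P^Nx=0$ or $P^Ny=0$ (the same being true of $T^{-1}P^N\widehat{C}P^N$), both $\widehat{A}_T$ and $\mathcal A_1$ are a.s.\ supported on $\mathcal H^N$, so evaluating at the basis $\{e_i\}_{i=1}^\varphi$ gives $\|\widehat{A}_T-\mathcal A_1\|_{\mathcal L_{\mathcal H}}\le\sum_{i,j=1}^\varphi|\langle(\widehat{A}_T-\mathcal A_1)e_i,e_j\rangle|\to_p 0$, i.e.\ $\widehat{A}_T\to_{\mathcal L_{\mathcal H}}\mathcal A_1$; identifying $T^{-1}P^N\widehat{C}P^N$ with its version (as throughout the proof, where the limiting operators are specified only up to $=_{\fdd}$) gives the claim. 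The only delicate point is this last bookkeeping step — legitimizing the passage from convergence in distribution to convergence in probability via Lemma \ref{applem2}-(iii) and checking that the chosen version inherits the support property $P^N(\cdot)P^N=(\cdot)$; the FCLT input, the continuous-mapping step, and the upgrade from $\to_w$ to $\to_{\mathcal L_{\mathcal H}}$ are all routine once the finite-dimensionality of $\mathcal H^N$ is used.
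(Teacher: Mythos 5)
Your proof is correct and follows essentially the same route as the paper: establish the finite-dimensional-distribution convergence via the FCLT and continuous mapping (the paper's display \eqref{eqref0}), pass to versions converging weakly via Lemma \ref{applem2}-(iii), and upgrade to $\to_{\mathcal L_{\mathcal H}}$ using that the operators are supported on the finite-dimensional subspace $\mathcal H^N$. The paper only writes out the argument for Claim 2 (where the trace-norm bound on $\widehat{A}\widehat{A}^\ast-\cdots$ is needed because the domain side lives in the infinite-dimensional $\mathcal H^S$) and asserts Claim 1 by ``minor modification''; your direct entrywise bound $\|\widehat{A}_T-\mathcal A_1\|_{\mathcal L_{\mathcal H}}\le\sum_{i,j=1}^\varphi|\langle(\widehat{A}_T-\mathcal A_1)e_i,e_j\rangle|$ is exactly that modification, legitimate here since both the range and the support of $T^{-1}P^N\widehat{C}P^N$ lie in $\mathcal H^N$.
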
\\\vspace{-0.7em}
				\begin{claim2}
					$P^N  \widehat{C} P^S \to_{\mathcal L_{\mathcal H}} \mathcal A_2=_{\fdd} \int dW^S(r) \otimes W^N(r) +  \Gamma^{NS}$.		
				\end{claim2}\\\vspace{-0.7em}
				
				\noindent Almost surely, the eigenvalues of $\mathcal A_1$ are   distinct and each eigenvalue is associated with one-dimensional eigenspace. We then know from Claim 1 and Lemma 4.3 of \cite{Bosq2000}  that the eigenvectors of $T^{-1}P^N  \widehat{C} P^N$ converge to those of $\mathcal A_1$. From similar arguments to the proofs of Proposition 3.2 and Theorem 3.3 of \cite{Chang2016152}, we find that  $P^N \widehat{P}^S_{\varphi} = \mathrm{O}_p (T^{-1})$, $P^S \widehat{P}^S_{\varphi} - P^S= \mathrm{O}_p(T^{-1})$ and $\hat{\lambda}_{\varphi+k} = \mathrm{O}_p(1)$ for $k \geq 1$. Given these results, the following can be additionally proved. \\ \vspace{-0.7em}
				
				\begin{claim3}
					$TP^S\widehat{P}^N_{\varphi} = - T(P^N\widehat{P}^S_{\varphi})^\ast + \mathrm{O}_p(T^{-1})$.
				\end{claim3}\\ \vspace{-0.7em}
				
				\noindent Since $P^N \widehat{P}^S_{\varphi} = \mathrm{O}_p (T^{-1})$, \eqref{eq002} can be written as $T P^N\widehat{P}^S_{\varphi} = -\left(T^{-1}P^N  \widehat{C} P^N\right)^\dag  P^N  \widehat{C} P^S + \mathrm{O}_p(T^{-1}).$ From this equation, Claim 3 and \eqref{eq001}, we find that  
			\begin{equation*}
				T(\widehat{P}^N_{\varphi} - P^N) = \left(T^{-1}P^N  \widehat{C} P^N\right)^\dag  P^N  \widehat{C} P^S + P^S  \widehat{C} P^N \left(T^{-1}P^N  \widehat{C} P^N\right)^\dag  + \mathrm{O}_{p}(T^{-1}).
			\end{equation*}
			We know from Claims 1-2 and Lemma \ref{applem1} that our proof becomes complete if we show
			\begin{equation}
				\left(T^{-1}P^N  \widehat{C} P^N\right)^\dag   \to_{\mathcal L_{\mathcal H}}  \mathcal A_1^\dag.\label{show1} 
			\end{equation}
			Note that both $T^{-1}P^N \widehat{C} P^N$ and $\mathcal A_1$ are self-adjoint positive definite operators of rank $\varphi$, almost surely. Consider the spectral representations of $T^{-1}P^N \widehat{C} P^N$  and  $\mathcal A_1$ as follows:
			\begin{equation*}
				T^{-1}P^N \widehat{C} P^N = \sum_{j=1}^\varphi \hat{\gamma}_{j} \hat{u}_{j} \otimes \hat{u}_{j}, \quad \mathcal A_1= \sum_{j=1}^\varphi \gamma_j u_j \otimes u_j.
			\end{equation*}	
			$T^{-1}P^N \widehat{C} P^N \to_{\mathcal L_{\mathcal H}} \mathcal A_1$ implies that the eigenvalues of $T^{-1}P^N \widehat{C} P^N$ converge to those of $\mathcal A_1$ \citep[Lemma 4.2]{Bosq2000}, i.e., 
			\begin{equation}\label{argeq1a}
				\sup_{1\leq j\leq \varphi} |\hat{\gamma}_{j}- \gamma_j|   \to_p   0. 
			\end{equation}
			Note that $\gamma_j > 0$ and the associated eigenspace is one-dimensional for each $j=1,\ldots,\varphi$ almost surely. It is deduced from Lemma 4.3 of \cite{Bosq2000} that the eigenvectors satisfy
			\begin{equation} \label{argeq2}
				\sup_{1\leq j\leq \varphi}\| \hat{u}_{j} - \sgn(\langle \hat{u}_{j},u_j \rangle)u_j \|   \to_p    0.
			\end{equation}
			Let $\bar{u}_j=\sgn(\langle \hat{u}_{j},u_j \rangle)u_j$. Then it can be shown that $\sup_{\|x\|\leq 1}\|(T^{-1}P^N \widehat{C} P^N)^\dag x - \mathcal A_1^\dag x \|$ is bounded above by 
			\begin{equation}
				\sum_{j=1}^\varphi |\gamma_j^{-1}-\hat{\gamma}_{j}^{-1}|  + 2 \sum_{j=1}^\varphi \hat{\gamma}_{j}^{-1} \|\bar{u}_j-\hat{u}_{j}\|,   \label{eqadd01}
			\end{equation}
			We then deduce from \eqref{argeq1a} and \eqref{argeq2} that \eqref{eqadd01} converges in probability to zero as desired, so \eqref{show1} holds.\\
			

			\noindent \textbf{Proofs of Claims 1 \& 2} : We only provide our proof of Claim 2; the arguments to be given can be applied to prove Claim 1 with a minor modification. To simplify expressions, we let $\widehat{A} = P^N\widehat{C}P^S$ and $ \widehat{A}^\ast = P^S\widehat{C}P^N$. From \eqref{eqref1} and Lemma \ref{applem2}-(iii), we may assume that $\widehat{A} \to_{w} \mathcal A_2 =_{\fdd} \int dW^S(r) \otimes W^N(r) +  \Gamma^{NS}$ and $\widehat{A}^\ast \to_w \mathcal A_2^\ast$. Let $\{u_j\}_{j=1}^{\varphi}$ (resp.\ $\{u_j\}_{j=\varphi+1}^{\infty}$) denote an orthonormal basis of $\mathcal H^N$ (resp.\ $\mathcal H^S$). Since $\widehat{A} \to_{w} \mathcal A_2$,  $\ran \widehat{A}\subset {\mathcal H}^N$ and $\ran \mathcal A_2 \subset \mathcal H^N$, we have for any $x \in \mathcal H$
			\begin{equation}
				\|(\widehat{A}-\mathcal A_2)x\|^2 = \sum_{j=1}^\varphi |\langle (\widehat{A}-\mathcal A_2)x, u_j  \rangle|^2 \to 0, \label{lemeqh00}
			\end{equation} 
			where the equality follows from  Parseval's identity. From the properties of operator norm, we have   
			\begin{equation}\label{lemeqh}
				\|\widehat{A} - \mathcal A_2\|_{\mathcal L_{\mathcal H}}^2 = \|\widehat{A} \widehat{A}^\ast - \mathcal A_2 \widehat{A}^\ast - \widehat{A} \mathcal A_2^\ast + \mathcal A_2 \mathcal A_2^\ast\|_{\mathcal L_{\mathcal H}}. 
			\end{equation}
			Note that $\widehat{A} \widehat{A}^\ast - \mathcal A_2 \widehat{A}^\ast - \widehat{A} \mathcal A_2^\ast + \mathcal A_2 \mathcal A_2^\ast$ is self-adjoint, positive semidefinite and has finite rank. Moreover, its operator norm is bounded above by the trace norm  (see equation (1.55) in \citeauthor{Bosq2000}, \citeyear{Bosq2000}), which is in turn bounded by $\sum_{j=1}^\infty | \langle (\widehat{A} \widehat{A}^\ast - \mathcal A_2 \widehat{A}^\ast - \widehat{A} \mathcal A_2^\ast + \mathcal A_2 \mathcal A_2^\ast)u_j,u_j \rangle |$. Since  $\ran (\widehat{A} \widehat{A}^\ast - \mathcal A_2 \widehat{A}^\ast - \widehat{A} \mathcal A_2^\ast + \mathcal A_2 \mathcal A_2^\ast)$ is orthogonal to $\mathcal H^S$, this upper bound is simplified to
			\begin{equation}\label{lemeqh0}
				\sum_{j=1}^\varphi | \langle (\widehat{A} \widehat{A}^\ast - \mathcal A_2 \widehat{A}^\ast - \widehat{A} \mathcal A_2^\ast + \mathcal A_2 \mathcal A_2^\ast)u_j,u_j \rangle |.
			\end{equation}
			We note that $|\langle \widehat{A} \widehat{A}^\ast u_j, u_j \rangle - \langle \mathcal A_2 \mathcal A_2^\ast u_j, u_j \rangle |\leq  |\langle \widehat{A}^\ast u_j, (\widehat{A}^\ast-\mathcal A_2^\ast)u_j \rangle| + |\langle (\widehat{A}^\ast-\mathcal A_2^\ast) u_j, \mathcal A_2^\ast u_j \rangle |$, hence
			\begin{equation}
				|\langle \widehat{A} \widehat{A}^\ast u_j, u_j \rangle - \langle \mathcal A_2 \mathcal A_2^\ast u_j, u_j \rangle | \leq \|\widehat{A}^\ast u_j\|\|(\widehat{A}^\ast -\mathcal A_2^\ast )u_j\| + \|\mathcal A_2^\ast u_j\| \|(\widehat{A}^\ast -\mathcal A_2^\ast )u_j\| =\mathrm{o}_p(1), \notag
			\end{equation}
			where the last equality follows from the fact that \eqref{lemeqh00} and $\sup_T \|\widehat{A}^\ast\| = \mathrm{O}_p(1)$ (Lemma \ref{applem1}-(i)). We therefore conclude that 
			\begin{equation}\label{lemeqh1}
				\widehat{A} \widehat{A}^\ast   \to_w   \mathcal A_2 \mathcal A_2^\ast.
			\end{equation}
			From parallel arguments,
			\begin{equation}\label{lemeqh2}
				\mathcal A_2 \widehat{A}^\ast    \to_w   \mathcal A_2 \mathcal A_2^\ast, \quad  \widehat{A} \mathcal A_2^\ast   \to_w   \mathcal A_2 \mathcal A_2^\ast.
			\end{equation}
			Equations \eqref{lemeqh1} and \eqref{lemeqh2} imply that \eqref{lemeqh0} is $\mathrm{o}_p(1)$, which in turn implies that \eqref{lemeqh} is $\mathrm{o}_p(1)$.  \\
			
			\noindent \textbf{Proof of Claim 3}: 
			Note that $\widehat{P}^N_{\varphi} = P^S \widehat{P}^N_{\varphi} + P^N \widehat{P}^N_{\varphi}$ and $\widehat{P}^S_{\varphi} \widehat{P}^N_{\varphi}= 0$ by construction, we thus have $	T \widehat{P}^S_{\varphi} P^S \widehat{P}^N_{\varphi} = -  T \widehat{P}^S_{\varphi} P^N \widehat{P}^N_{\varphi}$. From this equation and the identity $I=\widehat{P}^N_{\varphi} + \widehat{P}^S_{\varphi}$, the following can be shown:		\begin{equation*}
				T  P^S \widehat{P}^N_{\varphi} - T\widehat{P}^N_{\varphi}P^S \widehat{P}^N_{\varphi} =  - T (P^N\widehat{P}^S_{\varphi})^\ast + T(\widehat{P}^S_{\varphi}P^N\widehat{P}^S_{\varphi})^\ast. 
			\end{equation*} 
			Since $P^N \widehat{P}^S_{\varphi} = \mathrm{O}_p (T^{-1})$ and $P^S \widehat{P}^S_{\varphi} - P^S= \mathrm{O}_p(T^{-1})$,  we find that $\|T\widehat{P}^N_{\varphi}P^S \widehat{P}^N_{\varphi}\|_{\mathcal L_{\mathcal H}} = \|T(I-\widehat{P}^S_{\varphi})P^S P^S (I-\widehat{P}^S_{\varphi})\|_{\mathcal L_{\mathcal H}} = \mathrm{O}_p(T^{-1})$ and $\|T\widehat{P}^S_{\varphi}P^N\widehat{P}^S_{\varphi}\|_{\mathcal L_{\mathcal H}} = \|T\widehat{P}^S_{\varphi}P^NP^N\widehat{P}^S_{\varphi}\|_{\mathcal L_{\mathcal H}} = \mathrm{O}_p(T^{-1})$.
			We therefore conclude that $TP^S\widehat{P}^N_{\varphi} = - T(P^N\widehat{P}^S_{\varphi})^\ast = \mathrm{O}_p(T^{-1})$.  \qed
			
			\paragraph*{\normalfont \textbf{Proof of Theorem \ref{prop2}.}} \hspace{0.1em}\\
			We first show that the desired result can be obtained from the following eigenvalue problem:
			\begin{equation} \label{eigenp2}
				\left(\widehat{C}_{\varphi} - \widehat{\Upsilon}_{\varphi}\right) \hat{w}_{j} = \hat{\mu}_{j} \hat{w}_{j}, \quad j=1,2,\ldots.
			\end{equation} 
			It can be shown that $P^N (\widehat{C}_{\varphi} - \widehat{\Upsilon}_{\varphi})P^S$ and $P^S(\widehat{C}_{\varphi}  - \widehat{\Upsilon})P^S$ weakly converge in distribution to some elements in $\mathcal L_{\mathcal H}$,  then Lemma \ref{applem1}-(i) implies that each of $\|P^N (\widehat{C}_{\varphi} - \widehat{\Upsilon}_{\varphi})P^S\|_{\mathcal L_{\mathcal H}}$ and $\|P^S (\widehat{C}_{\varphi} - \widehat{\Upsilon}_{\varphi})P^S\|_{\mathcal L_{\mathcal H}}$ is $\mathrm{O}_p(1)$. We know from Lemma \ref{lem2} that $\widehat{\Upsilon}_{\varphi} = \mathrm{O}_p(1)$, and also deduce from our construction of ${X}_{\varphi,t}$ that  $T^{-1}P^N \widehat{C}_{\varphi} P^N  =  T^{-1}P^N\widehat{C}P^N + \mathrm{O}_p(T^{-1})$. Combining all these results and arguments used in our proof of Theorem \ref{prop1}, we have 
			\begin{equation}
				T^{-1} (\widehat{C}_{\varphi}  - \widehat{\Upsilon}_{\varphi}) =  T^{-1}P^N \widehat{C}P^N + \mathrm{O}_p(T^{-1}) \,\,\to_{\mathcal L_{\mathcal H}} \,\,  \mathcal A_1 =_{\fdd} \int W^N(r)\otimes W^N(r). \label{eq0001}
			\end{equation}
			Then from similar arguments to the proofs of Proposition 3.2 and Theorem 3.3 of \cite{Chang2016152}, we find that  $P^N \widehat{\Pi}^S_{\varphi} = \mathrm{O}_p (T^{-1})$, $P^S \widehat{\Pi}^S_{\varphi} - P^S= \mathrm{O}_p(T^{-1})$ and $\hat{\mu}_{\varphi+k} = \mathrm{O}_p(1)$ for $k \geq 1$.  Since $P^N(\widehat{C}_{\varphi}  - \widehat{\Upsilon}_{\varphi}) \widehat{\Pi}^S_{\varphi} =P^N\widehat{\Pi}^S \widehat{M}$ for $\widehat{M} = \sum_{j=1}^{\infty}  \hat{\mu}_{j} \hat{w}_{j} \otimes \hat{w}_{j}$ and $\widehat{\Pi}^S_{\varphi} = P^N\widehat{\Pi}^S_{\varphi} + P^S\widehat{\Pi}^S_{\varphi}$,  
			\begin{equation*}
				P^N (\widehat{C}_{\varphi}  - \widehat{\Upsilon}_{\varphi}) P^N\widehat{\Pi}^S_{\varphi} + P^N (\widehat{C}_{\varphi}  - \widehat{\Upsilon}_{\varphi}) P^S\widehat{\Pi}^S_{\varphi} =    P^N\widehat{\Pi}^S_{\varphi} \widehat{M}. 
			\end{equation*}	
			We know from $\widehat{\Pi}^S_{\varphi}P^N=\mathrm{O}_p(T^{-1})$ and $\widehat{P}^S_\varphi P^N=\mathrm{O}_p(T^{-1})$ that   $P^N \widehat{\Pi}^S_{\varphi} \widehat{M}= \mathrm{O}_p(T^{-1})$ and $P^N \widehat{\Upsilon}_{\varphi} P^N = \mathrm{O}_p(T^{-1})$. This in turn implies that 
			
			\begin{equation} \label{eqprop01}
				T P^N\widehat{\Pi}^S_{\varphi} = -\left(T^{-1}P^N \widehat{C}_{\varphi} P^N\right)^\dag \left(P^N \widehat{C}_{\varphi} P^S -  P^N\widehat{\Upsilon}_{\varphi} P^S\right) + \mathrm{O}_p(T^{-1}),
			\end{equation}		
			where note that $(T^{-1}P^N  \widehat{C}_{\varphi} P^N)^\dag$ is well define since $T^{-1}P^N  \widehat{C}_{\varphi} P^N$ is a finite rank operator, so has a closed range.
			Using similar results to \eqref{eq001} and Claim 3 in our proof of Theorem \ref{prop1}, we find that
			\begin{align}
				T(\widehat{\Pi}^N_{\varphi} - P^N) = &\left(P^S \widehat{C}_{\varphi} P^N -  P^S \widehat{\Upsilon}_{\varphi}^\ast P^N\right)  \left(T^{-1}P^N \widehat{C}_{\varphi} P^N\right)^\dag  \notag \\ &+  \left(T^{-1}P^N \widehat{C}_{\varphi} P^N\right)^\dag \left(P^N \widehat{C}_{\varphi} P^S -  P^N \widehat{\Upsilon}_{\varphi} P^S\right) + \mathrm{O}_{p}(T^{-1}). \label{eqqpp1}
			\end{align}
			From \eqref{eq0001} and nearly identical arguments to those used to derive \eqref{show1}, 
			\begin{equation} \label{prop2com1}
				\left(T^{-1}P^N \widehat{C}_{\varphi} P^N\right)^\dag  \to_{\mathcal L_{\mathcal H}}   \mathcal A_1^\dag.  
			\end{equation}	We will derive the limit of $\widehat{A} = P^N\widehat{C}_{\varphi}P^S - P^N \widehat{\Upsilon}_{\varphi} P^S$. First, it may be deduced from Lemma \ref{lem2} that  
			\begin{equation} \label{eqeq01}
				\widehat{\Gamma}^{NN}_{\varphi} \to_{\mathcal L_{\mathcal H}} \Gamma^{NN}, \quad \widehat{\Gamma}^{NS}_{\varphi}  \to_{\mathcal L_{\mathcal H}} \Gamma^{NS}, \quad \widehat{\Gamma}^{SS}_{\varphi}   \to_{\mathcal L_{\mathcal H}} \Gamma^{SS}.
			\end{equation}
			Moreover, from the fact that $\rank \widehat{\Omega}^{NN} = \rank {\Omega}^{NN}= \varphi$ almost surely, $B \mapsto B^{-1}$ is a continuous map for every positive definite matrix $B\in \mathbb{R}^{\varphi\times\varphi}$, and  $\widehat{\Omega}^{NN}_{\varphi}\to_{\mathcal L_{\mathcal H}} {\Omega}^{NN}_{\varphi}$ (Lemma \ref{lem2}) and this may be understood as a convergence in probability in  $\mathbb{R}^{\varphi\times \varphi}$, we find that
			\begin{equation}\label{eqeq02}
				\widehat{\Omega}^{NN}_{\varphi}{\mid_{\varphi}^\dagger}  \to_{\mathcal L_{\mathcal H}}  ({\Omega}^{NN})^\dag.
			\end{equation}
			Combining \eqref{eqeq01} and \eqref{eqeq02}, we have, for any $k \geq 1$, $x_1,\ldots,x_k \in \mathcal H$ and $y_1,\ldots,y_k \in \mathcal H$, $\langle P^N \widehat{\Upsilon}_{\varphi} P^Sx_j,y_j \rangle  \to_{p}   \langle \Upsilon x_j,y_j\rangle$,
			where $\Upsilon = \Gamma^{NS}- \Gamma^{NN} \left({\Omega}^{NN} \right)^\dag \Omega^{NS}$.
			We then note that
			\begin{equation*}
				\langle \widehat{A} x_j, y_j \rangle =  \frac{1}{T} \sum_{t=1}^T \langle P^S X_t - {\Omega}^{SN} \left({\Omega}^{NN}\right)^\dag P^N \Delta X_t , x_j \rangle  \langle P^N X_t,y_j \rangle -  \langle \Upsilon x_j,y_j\rangle + \mathrm{o}_p(1),
			\end{equation*}
			where the first term converges in distribution to $\int \langle d {W}^{S.N}(r), x_j \rangle \langle W^N(r),y_j \rangle + \langle \Upsilon x_j,y_j\rangle$; see e.g.\ the proofs of Theorems 1 and 2 of \cite{harris1997principal}. Using this result and nearly identical arguments used to derive \eqref{eqref1}, it is quite obvious to establish that $\widehat{A} \to_{wd} \int dW^{S.N}(r) \otimes W^N(r)$. Furthermore, from similar  arguments to those used to prove Claim 2 in our proof of Theorem \ref{prop1},  we find that
			\begin{equation}\label{prop2com2}
				\widehat{A}  \to_{\mathcal L_{\mathcal H}}    {\mathcal A}_2 =_{\fdd}  \int dW^{S.N}(r) \otimes W^N(r).
			\end{equation}
			Combining \eqref{eqqpp1}, \eqref{prop2com1}, \eqref{prop2com2}, and Lemma \ref{applem1}-(ii) and (iii), we obtain the desired result. 
			
			We now consider the eigenvalue problem \eqref{eigenp2a}. On top of the previous proof, it is easy to show that $\widehat{\Upsilon}_{\varphi}^\ast= \mathrm{O}_p(1)$ and $P^N \widehat{\Upsilon}_{\varphi}^\ast P^S  = \mathrm{O}_{p}(T^{-2})$, so \eqref{eqprop01} still holds for $\widehat{\Pi}^S_{\varphi}$ obtained from \eqref{eigenp2a}. The rest of proof is almost identical to our proof for the eigenvalue problem  \eqref{eigenp2}.
			
			Independence of ${W}^{S.N}$ and $W^N$ is deduced from the fact that $\mathbb{E}[{W}^{S.N} \otimes W^N] = 0$.
			
			\paragraph*{\normalfont \textbf{Proof of Proposition \ref{prop2a}}.} 
			Since $\dim(\mathcal H) < \infty$ and the minimum eigenvalue of $\mathbb{E}[\mathcal E_t \otimes \mathcal E_t]$ is strictly positive, we may deduce from the proof of Theorem 2 in \cite{harris1997principal} and arguments similar to those used in Theorems \ref{prop1} and \ref{prop2} that 
			\begin{align}
				&T^{-1} \ddot{C}_\varphi = T^{-1}P^N \ddot{C}_\varphi P^N + \mathrm{O}_p(T^{-1})  \,\,\to_{\mathcal L_{\mathcal H}}\,\, \mathcal A_1 =_{\fdd} \int W^N(r) \otimes W^N(r), \label{hreq01} \\
				&  P^N\ddot{C}_\varphi P^S\,\,\to_{\mathcal L_{\mathcal H}}\,\, \mathcal A_2 =_{\fdd} \int dW^{S.N}(r) \otimes W^N(r).\label{hreq02}
			\end{align}
			As in our proof of Theorem \ref{prop2}, we have $P^N  \ddot{\Pi}^S_\varphi = \mathrm{O}_p(T^{-1})$, $P^S  \ddot{\Pi}^S_\varphi - P^S = \mathrm{O}_p(T^{-1})$ and $\ddot{\lambda}_{\varphi+k} = \mathrm{O}_p(1)$ for $k \geq 1$. From similar arguments to those used to derive \eqref{eqqpp1}, we have 
			\begin{equation}
				T(\ddot{\Pi}^N_\varphi - P^N) = \left(T^{-1}P^N  \ddot{C}_\varphi P^N\right)^\dag  P^N  \ddot{C}_\varphi P^S + P^S  \ddot{C}_\varphi P^N \left(T^{-1}P^N  \ddot{C}_\varphi P^N\right)^\dag  + \mathrm{O}_{p}(T^{-1}).  \label{hreq03}
			\end{equation}
			Then  the desired result is deduced from \eqref{hreq01}-\eqref{hreq03}. \qed

		\paragraph*{\normalfont \textbf{A detailed discussion on Remark \ref{rem2}}.} \label{addremark}
		We here consider the simple case when $k=1$. From Theorem 3.1 of \cite{saikkonen1991asymptotically}, it can be shown that 
		\begin{equation}
			\lim_{T\to \infty} \text{Prob.\ }  \{|T \langle P^N \widehat{\Pi}^S_{\varphi}x,y \rangle|<\delta\} \geq \lim_{T\to \infty} \text{Prob.\ }  \{|T \langle P^N \widehat{P}^S_{\varphi}x,y \rangle|<\delta\}. \label{eq001lem1}
		\end{equation}
		If $\Omega^{NS}= \Gamma^{NS}=0$, then $T \langle P^N \widehat{P}^S_\varphi\rangle$ and $T \langle P^N \widehat{\Pi}^S_{\varphi}\rangle$ have the same limiting distribution. Moreover, if $y \in \mathcal H^S$ 
		both sides of \eqref{eq001lem1} are equal to zero regardless of the limiting behaviors of $P^N \widehat{P}^S_\varphi$ and $ P^N \widehat{\Pi}^S_{\varphi}$. Therefore the inequality given in \eqref{eq001lem1} is not always strict. However, if $\Omega^{NS}= \Gamma^{NS} = 0$ is not true and our choice of $x$, $y$ and $\delta$ makes the left hand side of  \eqref{eq001lem1} positive, then the inequality is strict; see Theorem 3.1 of \cite{saikkonen1991asymptotically}.  Moreover, the left hand side can be made positive by  choosing $x,y$ and $\delta$ appropriately. For example, suppose that  $x\in \mathcal H^S \setminus\{0\}$ and  $y\in \mathcal H^N\setminus\{0\}$. As shown in our proof of Theorem \ref{prop1}, we may assume that $\int W^N(r) \otimes W^N (r)= \sum_{j=1}^\varphi \gamma_j u_j \otimes u_j$, $\{\gamma_j\}_{j=1}^\varphi$ are all positive, and $\spn(\{u_j\}_{j=1}^\varphi) = \mathcal H^N$. Then from Theorem \ref{prop2}, we have $T \langle P^N \widehat{\Pi}^S_{\varphi}x,y \rangle   \to_d   	\sum_{j=1}^\varphi \gamma_j^{-1} \int \langle dW^{S.N}(r),x \rangle \langle u_j, W^N(r) \rangle\langle u_j, y \rangle$.
		That is, $T \langle P^N \widehat{\Pi}^S_{\varphi}x,y \rangle $ converges to a nondegenerate distiribution centered at $0 \in \mathbb{R}$. Therefore, the left hand side of \eqref{eq001lem1} becomes positive for any strictly positive $\delta$. 	A similar result can be obtained for $T^{-1}  P^S \widehat{\Pi}^N_{\varphi}$ with only a minor modification of the above arguments.

		
		\subsection{Mathematical proofs for the results in Sections \ref{sec:test} and \ref{sec_general_test}}  \label{appproof2}
		\subsubsection{Preliminary results}
		\begin{lemma}  \label{lem3} Suppose that Assumptions \ref{assumm0}, \ref{assumW} and \ref{assum2} hold.
			\begin{itemize}
				\item[$\mathrm{(i)}$]  For any $\varphi_0 \leq \varphi$ and $K > \varphi$, $\hat{w}_j$ given in \eqref{eigenptest} satisfies
				\begin{align} 
					\|\hat{w}_j - \sgn(\langle\hat{w}_j, w_j\rangle ) w_j\|  \to_p 0, \quad j=1,\ldots,K,  \label{eqlem301}
				\end{align} 
				where $\{w_j\}_{j=1}^\varphi$ (resp.\ $\{w_j\}_{j=\varphi+1}^K$) is an orthonormal set of $\mathcal H^N$  (resp.\ $\mathcal H^S$).
				\item[$\mathrm{(ii)}$] 	 Then \eqref{eqlem301} still holds if we replace $\hat{w}_j$ with  $\overline{w}_j$ given in \eqref{eigenptest2} when Model D1 given by \eqref{eqdeter1} or Model D2 given by \eqref{eqdeter2} is true.
			\end{itemize}
		\end{lemma}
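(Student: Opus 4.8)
The statement has two parts, and the deterministic‑term versions in part (ii) will follow part (i) essentially verbatim once the substitutions $W^N\rightsquigarrow\overline{W}^N$ or $\widetilde{W}^N$, $\widehat{P}^N_{\varphi_0}\rightsquigarrow\overline{P}^N_{\varphi_0}$ or $\widetilde{P}^N_{\varphi_0}$, $\widehat{C}_{\varphi_0}\rightsquigarrow\overline{C}_{\varphi_0}$ or $\widetilde{C}_{\varphi_0}$, etc.\ are made: the only ingredients are the limit theory for the preliminary projections (Theorem \ref{prop3}), consistency of the associated long‑run covariance estimators (Lemma \ref{lem2}), and stationarity of the first differences of the detrended residuals. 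So the plan is to prove (i) in full and then record the substitutions.

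For (i), set $\widehat{D}_{\varphi_0}:=\widehat{C}_{\varphi_0}-\widehat{\Upsilon}_{\varphi_0}-\widehat{\Upsilon}_{\varphi_0}^\ast$, which is self‑adjoint and of finite rank. First I would note, exactly as in the proof of Theorem \ref{prop1} (and Proposition 3.2 of \cite{Chang2016152}) with $\varphi_0$ replacing $\varphi$ in the preliminary eigenvalue problem, that $\widehat{P}^N_{\varphi_0}$ is a rank‑$\varphi_0$ orthogonal projection with $P^S\widehat{P}^N_{\varphi_0}=\mathrm{O}_p(T^{-1})$ and $\widehat{P}^N_{\varphi_0}\to_{\mathcal L_{\mathcal H}}$ the projection onto the span of the $\varphi_0$ leading eigenvectors of $\int W^N(r)\otimes W^N(r)$; in particular $\widehat{P}^N_{\varphi_0}=\mathrm{O}_p(1)$, so by Lemma \ref{lem2} the operators $\widehat{\Omega}_{\varphi_0},\widehat{\Gamma}_{\varphi_0}$ are consistent and $\widehat{\Upsilon}_{\varphi_0}\to_{\mathcal L_{\mathcal H}}\Gamma^{NS}-\Gamma^{NN}(\Omega^{NN})^\dagger\Omega^{NS}=\mathrm{O}_p(1)$. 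The key point when $\varphi_0<\varphi$ is that the correction $\widehat{\Omega}^{SN}_{\varphi_0}(\widehat{\Omega}^{NN}_{\varphi_0}{\mid_{\varphi_0}^\dagger})\widehat{P}^N_{\varphi_0}\Delta X_t$ defining $X_{\varphi_0,t}$ is a stationary sequence, so $P^NX_{\varphi_0,t}$ still grows at rate $T^{1/2}$ and spans all of $\mathcal H^N$, while $P^SX_{\varphi_0,t}$ is stationary. Combining this with the weak‑convergence computations and Claims 1--2 in the proof of Theorem \ref{prop1}, I would then establish the two‑scale block structure
\begin{align*}
& T^{-1}P^N\widehat{D}_{\varphi_0}P^N\to_{\mathcal L_{\mathcal H}}\mathcal A=_{\fdd}\int W^N(r)\otimes W^N(r), \\
& P^S\widehat{D}_{\varphi_0}P^S\to_{\mathcal L_{\mathcal H}}\mathcal B,\qquad P^N\widehat{D}_{\varphi_0}P^S=\mathrm{O}_p(1),
\end{align*}
where $\mathcal A$ has rank $\varphi$ with a.s.\ distinct positive nonzero eigenvalues (by the absolute‑continuity argument already used for $\int W^N\otimes W^N$ in the proof of Theorem \ref{prop1}) and $\mathcal B$ is a bounded self‑adjoint operator with $\ran\mathcal B\subset\mathcal H^S$.

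Given this structure, the eigenelements of $\widehat{D}_{\varphi_0}$ separate at the two scales. A Weyl‑type eigenvalue inequality (the device underlying Proposition 3.2 of \cite{Chang2016152}) gives $T^{-1}\hat{\mu}_j\to_p\mu_j$ for $j=1,\dots,\varphi$, where $\mu_1>\dots>\mu_\varphi>0$ are the nonzero eigenvalues of $\mathcal A$, while $\hat{\mu}_{\varphi+k}=\mathrm{O}_p(1)$ for $k\geq1$; hence the top $\varphi$ eigenvalues of $T^{-1}\widehat{D}_{\varphi_0}$ are asymptotically isolated and simple, so by Lemma 4.3 of \cite{Bosq2000} $\hat{w}_j\to_p \sgn(\langle\hat{w}_j,w_j\rangle)\,w_j$ with $w_j\in\mathcal H^N$ for $j\leq\varphi$ and $\sum_{j\leq\varphi}\hat{w}_j\otimes\hat{w}_j\to_{\mathcal L_{\mathcal H}}P^N$. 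Restricting $\widehat{D}_{\varphi_0}$ to the orthogonal complement of its top‑$\varphi$ eigenspace, which converges to $\mathcal H^S$, and using that the restricted operator converges to $\mathcal B|_{\mathcal H^S}$, the same perturbation lemma applied now at the $\mathrm{O}_p(1)$ scale yields $\hat{w}_j\to_p \sgn(\langle\hat{w}_j,w_j\rangle)\,w_j$ with $w_j\in\mathcal H^S$ for $j=\varphi+1,\dots,K$, where $w_{\varphi+1},\dots,w_K$ are the leading eigenvectors of $\mathcal B|_{\mathcal H^S}$.

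The macro‑scale step (the $\varphi$ eigenvalues of exact order $T$ and their eigenvectors) is essentially already contained in the proof of Theorem \ref{prop2}; the genuinely new work, and the main obstacle, is the passage at the $\mathrm{O}_p(1)$ scale. Concretely: one must verify that $P^S\widehat{D}_{\varphi_0}P^S$ has a bounded limit and that $P^N\widehat{D}_{\varphi_0}P^S=\mathrm{o}_p(T)$ with enough uniformity to run the second‑scale perturbation argument — this uses stationarity of $P^SX_{\varphi_0,t}$ and the $\mathrm{O}_p(1)$ bound on partial‑sum‑times‑stationary cross moments coming from Assumption \ref{assumW}, carefully combined with the $\mathrm{O}_p(T^{-1})$ error in $\widehat{P}^N_{\varphi_0}$ — and then one must make sense of the limits $w_j$ for $j>\varphi$, which requires the top $K-\varphi$ eigenvalues of the (in general random) operator $\mathcal B|_{\mathcal H^S}$ to be a.s.\ distinct, proved by an absolute‑continuity argument analogous to the one for $\int W^N\otimes W^N$. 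If only the downstream consequences in Section \ref{sec:test} are needed it suffices that $\hat{w}_j$ lies asymptotically in $\mathcal H^S$ for $j>\varphi$, which is immediate from $\sum_{j\leq\varphi}\hat{w}_j\otimes\hat{w}_j\to_{\mathcal L_{\mathcal H}}P^N$, and this last subtlety can be bypassed.
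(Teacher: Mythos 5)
Your proposal follows essentially the same route as the paper's proof: show that the correction defining $X_{\varphi_0,t}$ is stationary so that $T^{-1}(\widehat{C}_{\varphi_0}-\widehat{\Upsilon}_{\varphi_0}-\widehat{\Upsilon}_{\varphi_0}^{\ast})$ converges to $\int W^N(r)\otimes W^N(r)$, invoke Lemma 4.3 of \cite{Bosq2000} to get the first $\varphi$ eigenvectors converging to an orthonormal basis of $\mathcal H^N$, and then handle $j>\varphi$ at the $\mathrm{O}_p(1)$ scale by the argument of Theorem 3.3 of \cite{Chang2016152}, with part (ii) obtained by the same substitutions the paper uses. Your extra caution about the identity of the limits $w_j$ for $j>\varphi$ (and the observation that only $w_j\in\mathcal H^S$ is needed downstream) is if anything more careful than the paper, which simply identifies them with the eigenvectors of $\mathbb{E}[\mathcal E^S_t\otimes\mathcal E^S_t]$.
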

		\begin{proof}
			\noindent We first show (i). The proof is trivial when $\varphi=0$, so we hereafter assume $\varphi \geq 1$. Note that 
			\begin{equation}
				{{X}_{\varphi_0,t}}/{\sqrt{T}} = 	{X_t}/{\sqrt{T}} - 	\widehat{\Omega}^{SN}_{\varphi_0}\left(\widehat{\Omega}^{NN}_{\varphi_0}{\mid_{\varphi_0}^\dagger}\right) 	\widehat{P}^{N}_{\varphi_0}\Delta X_t/\sqrt{T}. \label{eqlem01}
			\end{equation}  
			It can be shown that  $\widehat{\Omega}^{SN}_{\varphi_0}(\widehat{\Omega}^{NN}_{\varphi_0}{\mid_{\varphi_0}^\dagger}) 	\widehat{P}^{N}_{\varphi_0}$ converges to a well defined limit if $\varphi_0 \leq \varphi$. Thus the second term on the right hand side of \eqref{eqlem01} is $O_p(T^{-1/2})$ and this implies that $T^{-1} \widehat{C}_{\varphi_0} = T^{-1} \widehat{C} + \mathrm{O}_p(T^{-1})$. Moreover, we deduce from Lemma \ref{lem2} that  $\widehat{\Upsilon}_{\varphi_0} = \mathrm{O}_p(1)$. From similar arguments to our proof of Theorem \ref{prop2}, we have
			\begin{equation} \label{pfeqclaim}
				T^{-1} (\widehat{C}_{\varphi_0} -\widehat{\Upsilon}_{\varphi_0} -\widehat{\Upsilon}_{\varphi_0}^\ast)   \to_{\mathcal L_{\mathcal H}}    \mathcal A_1  =_{\fdd} \int W^N (r)\otimes W^N(r).
			\end{equation}
			We note that the eigenvalues of the limiting operator are distinct and each eigenvalue is associated with  one-dimensional eigenspace, almost surely.  Then Lemma 4.3 of \cite{Bosq2000} and \eqref{pfeqclaim} jointly imply that the eigenvectors of $T^{-1} (\widehat{C}_{\varphi} -\widehat{\Upsilon}_{\varphi_0} -\widehat{\Upsilon}_{\varphi_0}^\ast)$ converge to those of $\int W^N (r)\otimes W^N(r)$. Note also that $\int W^N (r)\otimes W^N(r)$ is a  positive definite operator  of rank $\varphi = \dim(\mathcal H^N)$ almost surely. This implies that the eigenvectors of $T^{-1} (\widehat{C}_{\varphi_0} -\widehat{\Upsilon}_{\varphi_0} -\widehat{\Upsilon}_{\varphi_0}^\ast)$ corresponding to the largest $\varphi$ eigenvalues converge to an orthonormal basis of $\mathcal H^N$. 	We now let $\widehat{Q}^N_{\varphi}=\sum_{j=1}^\varphi \hat{w}_j \otimes\hat{w}_j$ and let $\widehat{Q}^S_{\varphi} = I-\widehat{Q}^N_{\varphi}$. Then from \eqref{pfeqclaim} and similar arguments used in the proof of Proposition 3.2 in \cite{Chang2016152}, it can be shown that $\|\widehat{Q}^N_{\varphi}-P^N\|_{\mathcal L_{\mathcal H}} = \mathrm{O}_p(T^{-1})$ and $\|\widehat{Q}^S_{\varphi}-P^S\|_{\mathcal L_{\mathcal H}} = \mathrm{O}_p(T^{-1})$. Then from a nearly identical argument used in the proof of Theorem 3.3 in \cite{Chang2016152}, it can be shown that 
			\begin{equation*}
				\|\hat{w}_j- \sgn(\langle\hat{w}_j, w_j\rangle ) w_j\|   \to_p    0, \quad j = \varphi+1,\ldots,
			\end{equation*} 
			where $w_{\varphi+1},\ldots$ are the eigenvectors of $\mathbb{E}[\mathcal E^S_t \otimes \mathcal E^S_t]$. This completes our proof of (i). 
			
			We now show (ii). From our proof of Theorem \ref{prop3}, we may similarly deduce that $T^{-1} (\overline{C}_{\varphi_0}-\overline{\Upsilon}_{\varphi_0} -\overline{\Upsilon}_{\varphi_0}^\ast)   \to_{\mathcal L_{\mathcal H}} \overline{ \mathcal A}_{1}  =_{\fdd} \int  \overline{W}^N (r)\otimes  \overline{W}^N(r)$. 
			Then the rest of proof is similar to that for the case without deterministic terms.
		\end{proof}
		\subsubsection{Proofs of the main results}
		\paragraph*{\normalfont \textbf{Proof of Proposition \ref{propkpss} and Theorem \ref{prop3a}}.} Since Proposition \ref{propkpss} is a special case of Theorem \ref{prop3a}, we here only prove the latter. For notational convenience, we let $Y_{\varphi_0,t} =  \sum_{s=1}^t {X}_{\varphi_0,t}$. Then $\sum_{s=1}^t z_{\varphi_0,s} = (\langle Y_{\varphi_0,t},\hat{w}_{\varphi_0+1}\rangle, \ldots, \langle Y_{\varphi_0,t}, \hat{w}_{\varphi}\rangle)'$.  
		We first consider the case when $\varphi_0 \geq 1$. Note that 
		\begin{equation}\label{proofthmeq01}
			T^{-1/2}\langle Y_{\varphi_0,t}, \hat{w}_{j} \rangle = 	\langle T^{-1/2}{Y_{\varphi_0,t}}, P^S \widehat{\Pi}^S_{\varphi_0}  \hat{w}_{j} \rangle + 	\left\langle T^{-3/2}{Y_{\varphi_0,t}}, T P^N \widehat{\Pi}^S_{\varphi_0} \hat{w}_{\varphi_0+1} \right\rangle 
		\end{equation}
		for $j=\varphi_0+1,\ldots,K$.
		From Theorem \ref{prop1} and Lemma \ref{lem3}-(i), 
		\begin{align}
			& P^S \widehat{\Pi}^S_{\varphi_0}  \to_{\mathcal L_{\mathcal H}}   P^S, \label{proofthmeq00} \\
			&\|\hat{w}_{j}- \sgn(\langle \hat{w}_{j},w_j \rangle) w_{j}\| = \mathrm{o}_p(1), \quad j=\varphi_0+1,\ldots,K, \label{proofthmeq00a}
		\end{align}
		where $\{w_{j}\}_{j=\varphi_0+1}^K$ is an orthonormal set included in $\mathcal H^S$.	Moreover, we may deduce the following from our proof of Theorem \ref{prop2}: for any $v \in \mathcal H^S$ and $w \in \mathcal H^N$,
		\begin{equation} 
			\left(\left\langle T^{-1/2}{Y_{\varphi_0,t}}, v \right\rangle, \left\langle T^{-3/2}{Y_{\varphi_0,t}}, w \right\rangle\right) \to_d \left(\left\langle W^{S.N}(r),v \right\rangle, \left\langle \int_{0}^r W^N(s),w \right\rangle  \right).  \label{proofthmeq00aa} 
		\end{equation}
		From \eqref{proofthmeq00} and \eqref{proofthmeq00a}, we conclude that \text{for any $v \in \mathcal H$,}
		\begin{equation}
			\langle v,  P^S \widehat{\Pi}^S_{\varphi_0} \hat{w}_{j}\rangle  \to_p    -  \sgn(\langle \hat{w}_{j},w_j \rangle)  \left\langle v,w_{j}\right\rangle. \label{proofthmeq00aa2} 
		\end{equation}
		From Theorem \ref{prop2} and \eqref{proofthmeq00a}, we may deduce that   for any $v \in \mathcal H^N$ and $j=\varphi_0+1,\ldots,K$,
		\begin{equation}
			\langle v, T P^N \widehat{\Pi}^S_{\varphi_0} \hat{w}_{j}\rangle \to_d   -  \sgn(\langle \hat{w}_{j},w_j \rangle)  \left\langle v,A w_{j}\right\rangle =_d  - \sgn(\langle\hat{w}_{j},w_j\rangle) \left\langle A^\ast v, w_{j}\right\rangle , \quad  \label{proofthmeq00aa3} 
		\end{equation}
		where $A^\ast =_{\fdd} \left(\int W^N(r) \otimes  d {W}^{S.N}(r)  \right)\left(\int W^N(r) \otimes W^N(r)\right)^\dag$. 	Combining \eqref{proofthmeq01}, \eqref{proofthmeq00aa}, \eqref{proofthmeq00aa2}, \eqref{proofthmeq00aa3}, and the Cram\'{e}r-Wold device, we obtain the following convergence result:
		\begin{equation}
			T^{-1/2}\langle Y_{\varphi_0,t}, \hat{w}_{j}\rangle  \to_d	\left\langle {W}^{S.N}(r) - A^\ast \int_{0}^r W^N(s), {w}_{j} \right\rangle, \quad  \text{jointly for $j=\varphi_0+1,\ldots,K$},\label{proofthmeq06}
		\end{equation} 
		where we used the property that the limiting distribution does not depend on $\sgn(\langle\hat{w}_{j},w_j\rangle)$.
		We let 
		\begin{equation*}
			{B}^N(r) = (\langle {W}^N(r),w_{1} \rangle,  \ldots,\langle {W}^N(r), w_{\varphi_0} \rangle)', \quad {B}^{S.N}(r) = (\langle {W}^{S.N}(r), w_{\varphi_0+1} \rangle, \ldots, {W}^{S.N}(r), w_K \rangle)'.
		\end{equation*}
		Due to isomorphism between $n$-dimensional subspace of $\mathcal H$ and $n$-dimensional Euclidean space for any finite integer $n$, the joint limiting distribution in \eqref{proofthmeq06} may be understood as
		\begin{equation}
			V^S(r) = {B}^{S.N}(r) - \int d	{B}^{S.N}(s) \mathsf{B}(s)' \left(\int \mathsf{B}(s)\mathsf{B}(s)'\right)^{-1} \int_{0}^r\mathsf{B}(s), \label{eqfinal01}
		\end{equation}	
		where we used the distributional identity given by $$B^N(s)'\left(\int B^N(s) {B^N}(s)'\right)^{-1} \int_{0}^r B^N(s) =_d \mathsf{B}(s)' \left(\int \mathsf{B}(s)\mathsf{B}(s)'\right)^{-1} \int_{0}^r\mathsf{B}(s).$$
		We next show that $\LRV(z_{\varphi_0,t})$ converges to the covariance matrix of $B^{S.N}$. For  time series $\{V_t\}_{t=1}^T$ and $\{W_t\}_{t=1}^T$, we   let $\mathcal G(V_t,W_t)$ denote the operator given by 
		\begin{equation}
			\mathcal G(V_t,W_t)=\frac{1}{T}\sum_{t=1}^T V_t \otimes W_t + 	\frac{1}{T}\sum_{s=1}^{T-1} \mathrm{k}\left(\frac{s}{h}\right) \sum_{t=s+1}^T \left\{V_t \otimes  W_{t-s} + W_{t-s} \otimes  V_t\right\}. 
		\end{equation}
		We will show that $\|\mathcal G(\widehat{\Pi}^S_{\varphi_0} X_{\varphi_0,t},\widehat{\Pi}^S_{\varphi_0} X_{\varphi_0,t}) - \Omega^{S.N}\|_{\mathcal L_{\mathcal H}} = \mathrm{o}_p(1)$, which is a stronger result implying that $\LRV(z_{\varphi_0,t})$ converges to the covariance matrix of $B^{S.N}$. Note that $\mathcal G(\widehat{\Pi}^S_{\varphi_0} X_{\varphi_0,t},\widehat{\Pi}^S_{\varphi_0} X_{\varphi_0,t})$ is equal to 
		\begin{align}
			&\mathcal G(\widehat{\Pi}^S_{\varphi_0} X_t,\widehat{\Pi}^S_{\varphi_0} X_t) -  \mathcal G(\widehat{\Pi}^S_{\varphi_0} X_t, \widehat{\Pi}^S_{\varphi_0} \widehat{\Omega}^{SN}(\widehat{\Omega}^{NN}_{\varphi_0}{\mid_{\varphi_0}^\dagger})\widehat{P}^N_{\varphi_0} \Delta X_t)  -  \mathcal G(\widehat{\Pi}^S_{\varphi_0} \widehat{\Omega}^{SN}(\widehat{\Omega}^{NN}_{\varphi_0}{\mid_{\varphi_0}^\dagger})\widehat{P}^N_{\varphi_0} \Delta X_t, \widehat{\Pi}^S_{\varphi_0} X_t)\notag \\ & +  \mathcal G(\widehat{\Pi}^S_{\varphi_0} \widehat{\Omega}^{SN}(\widehat{\Omega}^{NN}_{\varphi_0}{\mid_{\varphi_0}^\dagger})\widehat{P}^N_{\varphi_0} \Delta X_t, \widehat{\Pi}^S_{\varphi_0} \widehat{\Omega}^{SN}(\widehat{\Omega}^{NN}_{\varphi_0}{\mid_{\varphi_0}^\dagger})\widehat{P}^N_{\varphi_0} \Delta X_t).\notag
		\end{align}
		From Lemma \ref{lem2}, Theorems \ref{prop1} and \ref{prop3}, and \eqref{eqeq02}, we find that $\mathcal G(\widehat{\Pi}^S_{\varphi_0} X_t,\widehat{\Pi}^S_{\varphi_0} X_t)\to_{\mathcal L_{\mathcal H}}\Omega^{SS}$ and each of the other terms converges to $\Omega^{SN}({\Omega}^{NN})^\dagger \Omega^{NS}$ in the same sense. This implies that $	\mathcal G(\widehat{\Pi}^S_{\varphi_0} X_{\varphi_0,t},\widehat{\Pi}^S_{\varphi_0} X_{\varphi_0,t})  \to_{\mathcal L_{\mathcal H}}   \Omega^{S.N}$.
		From this result and \eqref{eqfinal01}, we find that under $H_0$,
		\begin{equation}
			\LRV(z_{\varphi_0,t})^{-1/2} V^S(r)  \to_d    V(r).  \label{eqfinal02}
		\end{equation}
		From \eqref{proofthmeq06},  \eqref{eqfinal01}, \eqref{eqfinal02} and the continuous mapping theorem, we may conclude that $\widehat{\mathcal Q}(K,\varphi_0)  \to_d  \int V(r)'V(r)$, which establishes the desired result under $H_0$.

		Under $H_1$, Lemma \ref{lem3}-(i) implies that $\widehat{\Pi}^N_{\varphi_0}$ converges to a projection onto a strict subspace of $\mathcal H^N$. Therefore for some $j$, we have $\widehat{\Pi}^N_{\varphi_0}\hat{w}_{j}  \to_p  w \in \mathcal H^N $.
		We then deduce from \eqref{proofthmeq00aa} that for such $j$ $T^{-3/2} \langle {Y_{\varphi_0,t}}, P^S \widehat{\Pi}^S_{\varphi_0}\hat{w}_{j} \rangle$ converges to a functional of $W^N$, so $T^{-3/2}(\langle Y_{\varphi_0,t}, \hat{w}_{\varphi_0+1} \rangle, \ldots, \langle Y_{\varphi_0,t},\hat{w}_{K} \rangle)'$ converges in distribution to a nondegenerate limit. It is also  deduced from the unnumbered equation between (A.10) and (A.11) of \cite{phillips1988spectral} that $(\kappa hT)^{-1}\LRV(z_{\varphi_0,t})$ converges in probability to a nonzero limit.  Combining all these results,  $\widehat{\mathcal Q}(K,\varphi_0) \to_p \infty$ is deduced. 
		
		It remains to prove the case when $\varphi_0 = 0$. In this case, the second term in  \eqref{proofthmeq01} is equal to zero, $A^\ast=0$ in \eqref{proofthmeq06}, $W^N = 0$ and $W^{S.N}$ is understood as $W^S$. Then the rest of the proof is similar to  that for the case when  $\varphi_0 \geq 1$.	\qed

		\subsection{Mathematical proofs of the results given in Section \ref{sec:deterministic}} 
		
		\paragraph*{\normalfont \textbf{Proof of Theorem \ref{prop3}}.} From Lemma 3 of \cite{SS2019} and our proof for the case without deterministic terms, we may deduce that $	T^{-1}\overline{C} =T^{-1}P^N\overline{C} P^N + \mathrm{O}_p(T^{-1})$ 
		and thus
		\begin{align*}
			T^{-1}\overline{C} \to_{\mathcal L_{\mathcal H}} \overline{\mathcal A}_{1}=_{\fdd} \int \overline{W}^N(r) \otimes \overline{W}^N(r), 
		\end{align*}
		moreover, $P^N \overline{P}^{S}_\varphi = \mathrm{O}_p (T^{-1})$, $P^S \overline{P}^{S}_\varphi - P^S= \mathrm{O}_p(T^{-1})$, 
		and   $\overline{\lambda}_{\varphi+k} = \mathrm{O}_p(1)$ 
		for $k \geq 1$. With a standard modification to allow deterministic terms from our proof of Theorem \ref{prop1}, we deduce that  $P^N \overline{C} P^S   \to_{\mathcal L_{\mathcal H}}  \overline{\mathcal A}_{2}=_{\fdd} \int d W^S(r) \otimes \overline{W}^N(r)  +  \Gamma^{NS}$ 
		From these results and our proof of Theorem \ref{prop1}, the limit of $T(\overline{P}^N_{\varphi} - P^N)$ 
		is obtained as desired.
		
		Moreover, as we did in our proof of Theorem \ref{prop2}, it can be shown that $T^{-1}(\overline{C}_\varphi-\overline{\Upsilon}_\varphi -\overline{\Upsilon}_\varphi^\ast)  = T^{-1}P^N \overline{C} P^N + \mathrm{O}_p(T^{-1}) \to_{\mathcal L_{\mathcal H}}  \overline{\mathcal A}_{1}=_{\fdd} \int \overline{W}^N(r) \otimes \overline{W}^N(r)$, 
		(ii) $P^N  \overline{\Pi}^S_{\varphi} = \mathrm{O}_p (T^{-1})$, $P^S \overline{\Pi}^S_{\varphi} - P^S= \mathrm{O}_p(T^{-1})$, 
		and $\overline{\mu}_{\varphi+k} = \mathrm{O}_p(1)$ 
		for $k \geq 1$.
		The rest of proof is similar to that of Theorem \ref{prop2} concerning with the case without deterministic terms; that is, it can be shown that (i) $
		T(\overline{\Pi}^N_{\varphi} - P^N) = (P^S \overline{C}_{\varphi} P^N -  P^S \overline{\Upsilon}_{\varphi}^\ast P^N)  (T^{-1}P^N \overline{C}_{\varphi} P^N)^\dag  +  (T^{-1}P^N \overline{C}_{\varphi} P^N)^\dag (P^N \overline{C}_{\varphi} P^S -  P^N \overline{\Upsilon}_{\varphi} P^S) + \mathrm{O}_{p}(T^{-1})$ and (ii) $(T^{-1}P^N \overline{C}_{\varphi} P^N)^\dag \to_{\mathcal L_{\mathcal H}} \overline{\mathcal A}_{1}^\dag$ and $P^N \overline{C}_{\varphi} P^S -  P^N \overline{\Upsilon}_{\varphi} P^S \to_{\mathcal L_{\mathcal H}} \overline{\mathcal A}_{2} =_{\fdd} (\int d W^{S.N}(r) \otimes \overline{W}^N(r))$.
		From these, the desired result for $\overline{\Pi}^N_{\varphi}$ is obtained. 
		\qed
		
		\paragraph*{\normalfont \textbf{Proof of Theorem \ref{prop3aa}}.}
		From Theorem \ref{prop3}, Lemma \ref{lem3}-(ii) and a slight modification of the arguments used in our proof of Theorem \ref{prop3a}, the desired results may be easily deduced.  \qed
		
		%

		\bibliographystyle{apalike}
			\bibliographystyle{apalike}


\end{document}